\newcommand\longvdots[1]{\raisebox{1em}{\rotatebox{-90}{\hbox to #1 {\dotfill}}}}
\newtheorem{assumption}{Assumption}
\newtheorem{theorem}{Theorem}
\newtheorem{lemma}{Lemma}
\newtheorem{remark}{Remark}
\DeclareMathOperator{\E}{\mathbb{E}}
\title{Multi Timescale Stochastic Approximation: \\Stability and Convergence}
\author[1]{Rohan Deb}
\author[2,3]{Swetha Ganesh}
\author[2]{Shalabh Bhatnagar}
\affil[1]{University of Illinois, Urbana-Champaign}
\affil[2]{Indian Institute of Science, Bengaluru}
\affil[3]{Purdue University, West Lafayette}
\begin{document}

\maketitle
\begin{abstract}

This paper presents the first sufficient conditions that guarantee the stability and almost sure convergence of multi-timescale stochastic approximation (SA) iterates. It extends the existing results on one-timescale and two-timescale SA iterates to general $N$-timescale stochastic recursions, for any $N\geq1$, using the ordinary differential equation (ODE) method. 
{\color{black} As an application of our results, we first study stochastic approximation algorithms augmented with heavy-ball momentum in the context of Gradient Temporal Difference (GTD) learning. The addition of momentum introduces an auxiliary state that evolves on an intermediate timescale, resulting in a three-timescale recursion. We show that when the momentum parameters are chosen appropriately, the resulting scheme fits within our framework and converges almost surely to the same fixed point as the baseline GTD algorithm. The stability and convergence of all iterates—including the momentum state—are guaranteed by our main results, without requiring ad hoc bounds.
We then study off-policy actor–critic algorithms with a baseline learner, actor, and critic updated on separate timescales. In contrast to prior work, we eliminate projection steps from the actor update and instead use our convergence framework to guarantee stability and almost sure convergence of all components. 
Finally, we extend our analysis to constrained policy optimization in the average-reward setting, where the actor, critic, and dual variables evolve on three distinct timescales, and we verify that the resulting dynamics satisfy the conditions of our general theorem. Together, these examples demonstrate how diverse reinforcement learning algorithms—spanning momentum acceleration, off-policy learning, and primal–dual methods—fit naturally into the proposed multi-timescale framework.}
\end{abstract}

\section{Introduction}
\label{intro}
Stochastic Approximation (SA) algorithms \citep{robbinsmonro} are a class of iterative procedures used to find the zeros of a function $h:\mathbb{R}^d\rightarrow \mathbb{R}^d$, when the actual function is not known but noisy observations of the same are available. Standard stochastic approximation iterates are as given below:
\begin{equation}
\label{StandardSA}
    x_{n+1} = x_{n} + \alpha_n f(x_{n},Y_{n}), n\geq 0,
\end{equation}
where, $x_{n}\in\mathbb{R}^{d}$, $n\geq 0$, is a sequence of parameters updated as in (\ref{StandardSA}) using $f(x_{n},Y_{n})$ that are noisy observations of the true function value $h(x_n)$. Further, $Y_{n}$, $n\geq 0$, is the corresponding noise sequence and $\alpha_n, n\geq 0$, the given step-size sequence. In this paper, we assume that the noisy observation of the function is of the following form:
\begin{equation}
\label{eqf}
    f(x_{n},Y_{n}) = h(x_{n}) + M_{n+1},
\end{equation}
where $h(\cdot)$ is the objective function and $M_{n+1}, n\geq 0$, is a martingale difference sequence with respect to the increasing sequence of sigma fields $\mathcal{F}_{n} = \sigma(x_{i},M_{i}, 0\leq i \leq n$), $n\geq 0$. The form (\ref{eqf}) of $f$ is fairly general. For instance, if
$Y_n \neq 0$, are i.i.d., one may let $h(x_n) = E_Y[f(x_n,Y)]$ and $M_{n+1} = f(x_n,Y_n) - h(x_n),$ $n\geq 0$, are the corresponding martingale differences.
Then (\ref{eqf}) can be seen to hold. 

In many applications, the objective function $h$ might require an additional averaging for it's evaluation at a given parameter update. The procedure, in such a case, would require an additional recursion to estimate $h$ and the overall procedure would result in a recursive computation in nested loops. For example, the policy iteration procedure in Markov decision processes \citep{puterman2014markov},
involves two nested loops in which the outer loop performs policy improvement while the inner loop performs policy evaluation for a given policy update. 
Actor-critic algorithms \citep{konda2000actor, bhatnagar2009natural} are stochastic recursive iterates that try to mimic the nested loop structure of policy iteration under noisy sample estimates where the actor recursion performs policy improvement while the critic is responsible for policy evaluation. This is achieved 
by making use of two different step-size schedules, one of which goes to zero at a rate faster than the other. The recursion governed by the step-size that goes to zero faster ends up being the `slower' timescale recursion while the one governed by the other step-size schedule corresponds to the `faster' timescale update. Thus, even though both recursions are executed simultaneously, at each instant, one sees a similar effect that a nested loop procedure would provide. In particular, the faster recursion sees the slower update as `quasi-static' while the latter sees the faster recursion as essentially `equilibrated'.  
  
Another example along these lines would be of a Markov reward process \citep{marbach2001simulation} whose transition probabilities depend on a certain parameter $\theta$ and the goal is to find the optimal parameter, i.e., the one that maximizes the long-run average reward. Two-timescale stochastic iterates can be used to solve such problems as well \citep{s2013stochastic} where the recursion corresponding to the inner loop, i.e., the one governed by the faster timescale, performs the averaging of the single-stage rewards for a given update of the parameter $\theta$ while the outer loop recursion or the one
governed by the slower timescale performs a maximization of the average rewards over $\theta$. 
Two-timescale stochastic approximation algorithms have the following general form:
\begin{gather}
    \label{TTS_1}
    x^{(1)}_{n+1} = x^{(1)}_{n} + \alpha^{(1)}_{n}\left(h^{(1)}\left(x^{(1)}_{n},x^{(2)}_{n}\right) + M^{(1)}_{n+1}\right),\\
    \label{TTS_2}
    x^{(2)}_{n+1} = x^{(2)}_{n} + \alpha^{(2)}_{n}\left(h^{(2)}\left(x^{(1)}_{n},x^{(2)}_{n}\right) + M^{(2)}_{n+1}\right),
\end{gather}
where $\{M^{(1)}_{n+1}\}$ and $\{M^{(2)}_{n+1}\}$ are both martingale difference sequences and $\{\alpha^{(1)}_{n}\}$ and $\{\alpha^{(2)}_{n}\}$ are the step-size sequences, respectively. A specific requirement of the step-sizes for (\ref{TTS_1})-(\ref{TTS_2}) to be a two-timescale recursion is that $\frac{\alpha^{(1)}_{n}}{\alpha^{(2)}_{n}} \rightarrow 0$ as $n\rightarrow\infty$.

These algorithms have been analyzed for their convergence in \cite{borkar_tts} assuming, in particular, that the iterates remain stable, i.e., that almost surely $\sup_n \parallel x^{(1)}_n\parallel <\infty$ and $\sup_n \parallel x^{(2)}_n\parallel <\infty$, respectively. This is a nontrivial requirement and in fact until recently, there were no known conditions to verify this requirement. In the case of one-timescale algorithms, i.e., when $N=1$, sufficient conditions for stability and convergence of stochastic iterates are available in the literature. For instance, 
in \citep{Borkar_Meyn,BorkarBook}, certain conditions are provided on the limiting ODE and its scaling limit that are shown to be sufficient for the stability of the recursion. 

In \cite{chandru-SB}, the stability conditions provided in \cite{Borkar_Meyn} are generalized to the two-timescale setting and these happen to be the first set of conditions that guarantee stability of general two-timescale stochastic recursions and along with \cite{borkar_tts}, provide a complete set of conditions under which the recursions given by \eqref{TTS_1} and \eqref{TTS_2} converge almost surely as $n\rightarrow\infty$. In this paper, we generalize the results of \cite{chandru-SB} to the case of $N$-timescale stochastic approximation for any general $N\geq 1$. We explain below the motivation for doing so.

In many problems of optimization or control under uncertainty, one encounters algorithms involving three or more timescale recursions. Such examples arise for instance when designing actor-critic algorithms for constrained Markov decision processes \citep{altman1999constrained}, see for instance, \cite{borkar2005actor, bhatnagar2010actor} or hierarchical reinforcement learning \citep{bhatnagar-panigrahi}.
In \cite{borkar2005actor, bhatnagar2010actor}, the constraints are relaxed using a Lagrangian and the Lagrange parameter is updated on the slowest timescale, while the actor (policy improvement) update
happens on the middle timescale and the critic (policy evaluation) update happens on the fastest timescale. 
Further, for settings such as hierarchical reinforcement learning \citep{bhatnagar-panigrahi}, one may encounter algorithms with a number of timescales that is proportionate to the number of levels of decision making. 
Sufficient conditions for stability and convergence of algorithms (with $N>2$) are not available in the literature. 
The results we present here are general enough and applicable to such settings. In fact, we also study, in this paper, applications involving certain reinforcement learning algorithms with momentum that require three and four timescale recursions whose convergence we prove by verifying that our sufficient conditions hold.

We provide the first set of sufficient conditions that ensure both the stability and convergence of general $N$-timescale stochastic approximation recursions. Our conditions are obtained by generalizing the requirements in \cite{Borkar_Meyn} for one-timescale algorithms as well as in \cite{chandru-SB} for two-timescale recursions. We provide the full analysis of stability and convergence of a general $N$-timescale stochastic recursion using the aforementioned sufficient conditions.
We then demonstrate the usefulness of these conditions in the context of three and four timescale gradient temporal difference learning algorithms in reinforcement learning. Such an analysis is made possible because of the sufficient conditions that we provide for the stability and convergence of $N$-timescale algorithms for any $N\geq 1$.
Since $N\geq 1$ is arbitrary, we believe that the sufficient conditions for stability and convergence of $N$-timescale stochastic recursions will be extremely useful for algorithm designers who will henceforth be able to prove the stability and convergence of the stochastic recursions through verification of the aforementioned sufficient  conditions.

The rest of the paper is organized as follows: In Section~\ref{notassum}, we present the notation used in the paper as well as our assumptions. Here, for improved clarity, we first present our assumptions for the 3-timescale case before we present the general assumptions for $N$-timescale stochastic iterates. In Section~\ref{mainresults}, we present our main results on stability and convergence of $N$-timescale SA. {\color{black} We then provide brief descriptions of several applications of our results in reinforcement learning and provide a proof sketch of the main results.} Next in  Section~\ref{proof}, we provide the complete analysis of stability and convergence of these recursions. {\color{black} In Section~\ref{sec_app}, we present complete stability and convergence analysis of several reinforcement learning algorithms using our new results for multi-timescale updates. We first start with a policy evaluation algorithm, Gradient Temporal Difference (GTD) learning with heavy-ball momentum, where the momentum term introduces an intermediate timescale and yields a three-timescale recursion. When the momentum parameters are properly chosen, our framework guarantees almost sure convergence of all iterates—including the momentum state—to the standard GTD solution without requiring additional boundedness assumptions. We also show empirically that the proposed momentum methods outperform their vanilla counterparts. 
Next we apply our framework to off-policy actor–critic algorithms with a baseline, where the actor, critic, and baseline updates evolve on separate timescales. Without using projections, we show that the iterates remain stable and converge under standard assumptions by verifying that the recursion satisfies our structural conditions. Finally, we extend our analysis to constrained policy optimization in the average-reward setting, where the actor, critic, and dual variables evolve on three distinct timescales, and we verify that the resulting dynamics satisfy the conditions of our general theorem.} We present concluding remarks as well as potential directions for future research in Section~\ref{conclusion}.

\section{Notations and Assumptions}
\label{notassum}
In this section, we carefully setup the notation followed in the entire paper and then explicitly provide the assumptions under which the iterates are stable and convergent.
The $N$ stochastic recursions are given by:
\begin{gather}
    \begin{aligned}
    x^{(1)}_{n+1} &= x^{(1)}_{n} + \alpha^{(1)}_{n}(h^{(1)}(x^{(1)}_{n},x^{(2)}_{n},\ldots, x^{(N)}_{n}) + M^{(1)}_{n+1}), \\
    x^{(2)}_{n+1} &= x^{(2)}_{n} + \alpha^{(2)}_{n}(h^{(2)}(x^{(1)}_{n},x^{(2)}_{n},\ldots,x^{(N)}_{n}) + M^{(2)}_{n+1}),\\
    &\vdots \\
    x^{(N)}_{n+1} &= x^{(N)}_{n} + \alpha^{(N)}_{n}(h^{(N)}(x^{(1)}_{n},x^{(2)}_{n},\ldots,x^{(N)}_{n}) + M^{(N)}_{n+1}).
    \end{aligned}
\end{gather}
Here, $\{x^{(j)}_{n} \in \mathbb{R}^{d_j}\}_{1\leq j\leq N}$ are the $N$ parameters that are updated at each time-step and the subscript $n$ denotes the time-step or iteration-index  of the update. $\{\alpha^{(j)}_{n}, n\geq 0\}$, $1\leq j \leq N$ are $N$ different step-size sequences. The functions $h^{(j)}:\mathbb{R}^{d_1+d_2+\cdots+d_N}\rightarrow \mathbb{R}^{d_j}$, ${1\leq j\leq N}$ could potentially depend on all the $N$ parameters. Also, $\{M^{(j)}_{n+1}\}$, ${1\leq j \leq N}$ are the associated martingale difference noise sequences.

The following are standard assumptions made while analyzing stochastic recursions:

\begin{itemize}
    \item[\textbf{(A:1)}] $h^{(j)}$, ${1\leq j \leq N}$ are Lipschitz continuous functions.
    
    \item[\textbf{(A:2)}] $\{M^{(j)}_{n+1}\}$, ${1\leq j \leq N}$ are $N$ martingale difference sequences with respect to the increasing sequence of sigma fields $\{\mathcal{F}_{n}\}$, where $\mathcal{F}_{n} = \sigma\left(x^{(j)}_{i},M^{(j)}_{i+1},1\leq j\leq N, i\leq n\right)$, $n\geq 0$, with $\mathbb{E}[\|M^{(i)}_{n+1}\|^{2} | \mathcal{F}_{n}] \leq K^{(i)}(1 + \sum_{j=1}^{N}\|x^{(j)}_{n}\|^{2}), n \geq 0$, for some constant $K^{(i)}>0, 1 \leq i \leq N$.
    \item[\textbf{(A:3)}] $\{\alpha^{(j)}_{n}\}_{1\leq j \leq N}$ are $N$ step-size schedules that satisfy the following:
    \begin{itemize}
        \item[(i)] $\alpha^{(j)}_{n}>0$, $\forall n$,  $\sum_{n}\alpha^{(j)}_{n} = \infty, 1\leq j \leq N$,
        
        \item[(ii)] $\sum_{n=0}^{\infty}\sum_{j=1}^{N}(\alpha^{(j)}_{n})^{2} < \infty$,
        \item[(iii)]
        $\frac{\alpha^{(j)}_{n}}{\alpha^{(j-1)}_{n}} \rightarrow 0,$ as $n\rightarrow\infty, 1 < j \leq N$.
    \end{itemize}
\end{itemize}
Assumptions {\bf (A:1)}-{\bf (A:3)} along with stability of the iterates, i.e., $\sup_{n}(x^{(j)}_{n})<\infty$, $\forall 1\leq j \leq N$ can be shown to ensure convergence. The next set of assumptions along with $\textbf{(A:1)-(A:3)}$ will provide sufficient conditions for stability of the iterates. For the case of one-timescale stochastic update recursions, \cite{Borkar_Meyn} provided verifiable sufficient conditions for stability of the iterate sequence (see (A1) in \cite{Borkar_Meyn} or (A5) in Chapter 3 of \cite{BorkarBook}). 
By considering a scaled version of the original recursion and a scaled ODE associated with the scaled version, \cite{Borkar_Meyn} shows that under sufficiently general conditions, the original recursion remains stable.
 
For two-timescale recursions, \cite{chandru-SB} provided conditions for stability (see (A4) and (A5) there). Since the iterates, in this case, evolve along both the timescales simultaneously, \cite{chandru-SB} analyzed the rescaled trajectories of both the iterates and presented the aforementioned conditions (cf.~(A4) and (A5) therein), one for each timescale, to ensure stability. We extend these ideas to the general $N$-timescale regime, where we analyze the $N$ rescaled iterates and come up with a set of $N$ conditions on the behaviour of the limiting ODEs corresponding to the rescaled recursions.
For improved clarity we first state these conditions in the 3-Timescale setting in Sections~\ref{3TS}. In Section~\ref{sec_app}, we analyze the stability and convergence of a reinforcement learning algorithms with momentum that involves three  timescale updates by verifying these conditions in Sections~\ref{3TS}. We state the conditions for general $N$-timescale stochastic recursions in Section~\ref{NTS}. The assumptions in what follows will be numbered as {\bf (B.j.i)} and {\bf (C.j.i)} respectively. Here the {\bf B}-assumptions concern the rescaled ODEs while the {\bf C}-assumptions are for the ODEs corresponding to the original recursions. Further, the index {\bf j} refers to the number of timescales used in the algorithm and {\bf i} takes values in general between {\bf 1} and {\bf j}. 

\subsection{The 3-Timescale Case}
\label{3TS}

We state first the conditions that ensure stability and convergence for the $N = 3$ case. In Section~\ref{sec_app} we discuss an application of our results to the case of stochastic approximation with momentum where the iterates can be analyzed using the results for the 3-Timescale case.

\begin{enumerate}
    \item [\textbf{(B.3.1)}] For $c\geq 1$, define the following scaled functions based on $h^{(1)}$:
    \begin{gather*}
        h_{c}^{(1)}(x^{(1)},x^{(2)},x^{(3)}) = \frac{h^{(1)}(cx^{(1)},cx^{(2)},cx^{(3)})}{c}.
    \end{gather*}
    Further, $h_{c}^{(1)} \rightarrow h_{\infty}^{(1)}$ as $c\rightarrow \infty$ uniformly on compacts. The ODE
    \(\dot{x}^{(1)}(t) = h^{(1)}_{\infty}(x^{(1)}(t),x^{(2)},x^{(3)}), \)
    for $x^{(j)} \in \mathbb{R}^{d_{j}}$, $j = 2,3$, has a unique globally asymptotically stable equilibrium $\lambda_{\infty}^{(1)}(x^{(2)},x^{(3)}),$ where $\lambda_{\infty}^{(1)}:\mathbb{R}^{d_{2}+d_{3}}\rightarrow\mathbb{R}^{d_1}$ is Lipschitz continuous. Further $\lambda^{(1)}_{\infty}(0,0) = 0$.
    
    \item [\textbf{(B.3.2)}] For $c\geq 1$, define the following scaled functions based on $h^{(2)}$: 
    \begin{gather*}
        h_{c}^{(2)}(x^{(2)},x^{(3)}) = \frac{h^{(2)}\big(c\lambda_{\infty}^{(1)}(x^{(2)},x^{(3)}),cx^{(2)},cx^{(3)}\big)}{c}.
    \end{gather*}
    Further, $h_{c}^{(2)} \rightarrow h_{\infty}^{(2)}$ as $c\rightarrow \infty$ uniformly on compacts. The ODE
    \(\dot{x}^{(2)}(t) = h^{(2)}_{\infty}(x^{(2)}(t),x^{(3)}), \)
    for $x^{(j)} \in \mathbb{R}^{d_{j}}$, $j=3,4$, has a unique globally asymptotically stable equilibrium $\lambda_{\infty}^{(2)}(x^{(3)}),$ where $\lambda_{\infty}^{(2)}:\mathbb{R}^{d_{3}}\rightarrow\mathbb{R}^{d_2}$ is Lipschitz continuous. Further $\lambda^{(2)}_{\infty}(0) = 0$.
    
    \item [\textbf{(B.3.3)}] For $c\geq 1$, define the following scaled functions based on $h^{(3)}$:
    \begin{gather*}
        h_{c}^{(3)}\big(x^{(3)}\big) = \frac{h^{(3)}\left(c\lambda_{\infty}^{(1)}\big(\lambda_{\infty}^{(2)}(x^{(3)}),
        x^{(3)}\big),
        c\lambda_{\infty}^{(2)}(x^{(3)}),cx^{(3)}\right)}{c}.
    \end{gather*}
Further, $h_{c}^{(3)} \rightarrow h_{\infty}^{(3)}$ as $c\rightarrow \infty$ uniformly on compacts. The ODE
    \(\dot{x}^{(3)}(t) = h^{(3)}_{\infty}(x^{(3)}(t))\)
    has the origin in $\mathbb{R}^{d_{3}}$ as its unique globally asymptotically stable equilibrium.
\end{enumerate}

\begin{enumerate}
    \item [\textbf{(C.3.1)}] The ODE
    \begin{equation*}
        \dot{x}^{(1)}(t) = h^{(1)}\big(x^{(1)}(t),x^{(2)},x^{(3)}\big),
    \end{equation*}
    $x^{(j)} \in \mathbb{R}^{d_j}$, $j =2, 3$, has a unique globally asymptotically stable equilibrium $\lambda^{(1)}(x^{(2)},x^{(3)})$ where $\lambda^{(1)}:\mathbb{R}^{d_{2}+d_{3}}\rightarrow \mathbb{R}^{d_{1}}$ is Lipschitz continuous.
    \item [\textbf{(C.3.2)}] The ODE
    \begin{equation*}
        \dot{x}^{(2)}(t) = h^{(2)}\Big(\lambda^{(1)}\big(x^{(2)}(t),x^{(3)}\big),x^{(2)}(t),x^{(3)}\Big),
    \end{equation*}
    with $x^{(3)} \in \mathbb{R}^{d_3}$, has a unique globally asymptotically stable equilibrium $\lambda^{(2)}(x^{(3)})$ and $\lambda^{(2)}:\mathbb{R}^{d_{3}}\rightarrow \mathbb{R}^{d_{2}}$ is Lipschitz continuous.
    \item [\textbf{(C.3.3)}] The ODE
    \begin{equation*}
            \dot{x}^{(3)}(t) = h^{(3)}\Big(\lambda^{(1)}\big(\lambda^{(2)}(x^{(3)}(t)),x^{(3)}(t)\big),
            \lambda^{(2)}(x^{(3)}(t)),x^{(3)}(t)\Big),
    \end{equation*}
    has a unique globally asymptotically stable equilibrium $x^{(3)}_{*} \in \mathbb{R}^{d_3}$.
\end{enumerate}

\subsection{The N-Timescale Case}
\label{NTS}

Next we generalize these assumptions to the $N$-timescale regime. 
The assumptions on the timescales $1,2,\ldots,N$ are compactly stated as assumptions \textbf{(B.N.i)} and \textbf{(C.N.i)}, respectively, where in fact the index $i$ takes $N$ values, i.e., $i=1,\ldots,N$. Thus, these conditions encapsulate a total of $2N$ assumptions, with $N$ B-assumptions and $N$ C-assumptions, respectively. In addition, we shall also require Assumptions A:1-A:3 stated previously.

\begin{enumerate}
    \item [\textbf{(B.N.i)}] For $c\geq 1$, define the following scaled functions based on $h^{(i)}$:
    \begin{gather*}
    h^{(i)}_{c}\left(x^{(i)},x^{(i+1)},\ldots,x^{(N)}\right) = \frac{h^{(i)}(cy^{(1)},cy^{(2)},\ldots,cy^{(N)})}{c},\\
    \text{ where, for $j<i$, } \qquad
        y^{(j)} = \lambda^{j}_{\infty}\Bigg(\lambda^{j+1}_{\infty}\bigg(\ldots\lambda^{(i-3)}_{\infty}\Big(\lambda^{(i-2)}_{\infty}\big(\lambda^{(i-1)}_\infty(x^{(i)},\ldots,x^{(N)}), x^{(i)},\ldots,x^{(N)}\big),\\\lambda^{(i-1)}_\infty(x^{(i)},\ldots,x^{(N)}),x^{(i)},\ldots,x^{(N)}\Big),\ldots,x^{(i)},\ldots,x^{(N)}\bigg)\Bigg)
    \end{gather*}
    and for $j\geq i$
    \(y^{(j)} = cx^{(j)}.\)
    Further, $h^{(i)}_{c}\rightarrow h^{(i)}_{\infty}$ as $c\rightarrow\infty$ uniformly on compacts. The ODE 
    \begin{equation}
        \label{odeB.N.i}
    \dot{x}^{(i)}(t) = h^{(i)}_{\infty}(x^{(i)}(t),x^{(i+1)},\ldots,x^{(N)}),
    \end{equation}
    for $x^{(j)} \in \mathbb{R}^{d_{j}},j\in\{i+1,\ldots,N\}$ has
    \begin{enumerate}
        \item [(i)] a unique globally asymptotically stable equilibrium $\lambda^{(i)}_{\infty}(x^{(i+1)},x^{(i+2)},\ldots,x^{(N)})$, $i=1,\ldots,N-1$, where each $\lambda^{(i)}_{\infty}:\mathbb{R}^{d_{i+1}+d_{i+2}+\cdots+d_N}\rightarrow\mathbb{R}^{d_i}$ is Lipschitz continuous. Further,  $\lambda^{(i)}_{\infty}(0,0,\ldots,0) = 0$, $\forall i=1,\ldots,N-1$, and
        \item [(ii)] for $i=N$, the origin in $\mathbb{R}^{d_{N}}$ is the unique globally asymptotically stable equilibrium of (\ref{odeB.N.i}).
    \end{enumerate}
\end{enumerate}
Chapter 6 of \cite{BorkarBook} makes two more assumptions namely (A1) and (A2) there regarding the global asymptotic stable equilibria of the ODE trajectories along the two timescales. We make $N$ such assumptions here (to account for the $N$ different timescales in our recursions) on the ODE trajectories and compactly write these as \textbf{(C.N.i)},
${1\leq i\leq N}$. 
\vspace{1em}
\begin{enumerate}
    \item [\textbf{(C.N.i)}] The ODE $\dot{x}^{(i)}(t) = h^{(i)}\big(z^{(1)}(t),z^{(2)}(t),\ldots,z^{(N)}(t)\big),$
    has a globally asymptotically stable equilibrium: (i)$\lambda(x^{i+1},\ldots,x^{N})$, $x^{(j)} \in \mathbb{R}^{d_j}, i+1\leq j \leq N$ with $1\leq i \leq N-1$, and (ii) $x^{(N)}_{*} \in \mathbb{R}^{d_N}$.
    \end{enumerate}
    Here, for $j\geq i$, \(z^{(j)}(t) = x^{(j)}.\), and for $j<i,$
    \vspace{-3ex}
    \begin{align*}
            z^{(j)}(t) =& \lambda^{j}\Bigg(\lambda^{j+1}\bigg(\ldots\lambda^{(i-3)}\Big(\lambda^{(i-2)}\big(\lambda^{(i-1)}(x^{(i)}(t),x^{(i+1)}\ldots,x^{(N)}),
		x^{(i)}(t),x^{(i+1)},\ldots,x^{(N)}\big),\\
		&\lambda^{(i-1)}(x^{(i)}(t),x^{(i+1)},\ldots,x^{(N)}),
		x^{(i)}(t),x^{(i+1)},\ldots,x^{(N)}\Big),
		\ldots,x^{(i)}(t),x^{(i+1)},\ldots,x^{(N)}\bigg)\Bigg)
        \end{align*}

\section{Main Results}
\label{mainresults}
We state in this section (and later prove) our results for general $N$-timescale stochastic approximation algorithms as given in Section~\ref{NTS}.
\begin{theorem}
\label{main_thm}
We define for $1\leq j \leq N-2,$ $\lambda^{(j:N-1)}(x) \stackrel{\triangle}{=} \lambda^{(j)}\bigg(\lambda^{(j+1)}\Big(\ldots\lambda^{(N-2)}(\lambda^{(N-1)}(x),x),\ldots,x\Big),x\bigg)$.
 Under the assumptions $\textbf{(A:1)-(A:3)}$, \textbf{(B.N.i)}$_{1\leq i \leq N}$ and \textbf{(C.N.i)}$_{1\leq i \leq N}$, 
\begin{gather*}
        x_{n}^{(1)} \rightarrow x^{(1)}_*= \lambda^{(1:N-1)}(x^{(N)}_*),\\
        x_{n}^{(2)} \rightarrow x^{(2)}_*= \lambda^{(2:N-1)}(x^{(N)}_*),\\
        \longvdots{2em}\\
        x^{(N-1)}_{n} \rightarrow x^{(N-1)}_*= \lambda^{(N-1:N-1)}(x^{(N)}_*),\\
        x^{(N)}_{n} \rightarrow x^{(N)}_{*}.
\end{gather*}
\end{theorem}
We first prove that the $N$ iterates are convergent under the assumption that the $N$ iterates are stable. Specifically, we first prove a similar result (see Theorem~\ref{Thm_with_stability}) under the following assumption instead of \textbf{(B.N.i)}, $1\leq i \leq N$.
\begin{enumerate}
    \item [\textbf{(B.N.N+1)}] The following holds for the iterates:
    \(\sup_{n}\Big(\|x^{(1)}_{n}\| + \|x^{(2)}_{n}\| + \cdots + \|x^{(N)}_{n}\|\Big) < \infty \mbox{\quad a.s.}\)
\end{enumerate}
\begin{theorem}
\label{Thm_with_stability}
Under the assumptions $\textbf{(A:1)-(A:3)}$, $\textbf{(B.N.N+1)}$ and \textbf{(C.N.i)}$_{1\leq i \leq N}$, 
\begin{gather*}
        x_{n}^{(1)} \rightarrow x^{(1)}_*= \lambda^{(1:N-1)}(x^{(N)}_*),\;\;
        x_{n}^{(2)} \rightarrow x^{(2)}_*= \lambda^{(2:N-1)}(x^{(N)}_*),\\
        \longvdots{2em}\\
        x^{(N-1)}_{n} \rightarrow x^{(N-1)}_*=\lambda^{(N-1:N-1)}(x^{(N)}_*),\;\;
        x^{(N)}_{n} \rightarrow x^{(N)}_{*}.
\end{gather*}
\end{theorem}
Next, we shall show that the $N$ recursions are a.s. stable under the assumptions \textbf{(B.N.i)}$_{1\leq i \leq N}$. In other words, Assumption \textbf{(B.N.N+1)} holds under \textbf{(B.N.i)}$_{1\leq i \leq N}$.
\begin{theorem}
    \label{thm_without_stability}
    Under the assumptions $\textbf{(A:1)-(A:3)}$, \textbf{(B.N.i)}$_{1\leq i \leq N}$ and \textbf{(C.N.i)}$_{1\leq i \leq N}$, 
    \begin{gather*}
        \sup_{n}\Big(\|x^{(1)}_{n}\| + \|x^{(2)}_{n}\| + \cdots + \|x^{(N)}_{n}\|\Big) < \infty.
    \end{gather*}
\end{theorem}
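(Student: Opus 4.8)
The approach is the Borkar--Meyn rescaling technique for proving stability via an associated scaled ODE, extended from the two--timescale analysis of \cite{chandru-SB} to $N$ timescales. The conclusion of this theorem is exactly assumption \textbf{(B.N.N+1)}; by Theorem~\ref{Thm_with_stability} establishing it also completes the proof of convergence, so it suffices to prove that the iterates are a.s.\ bounded. The plan is to cut the algorithmic time axis of the \emph{slowest} recursion (the one with step sizes $\alpha^{(N)}_n$, which are asymptotically smallest by \textbf{(A:3)(iii)}) into blocks of a fixed length $T$, to be chosen large; to rescale all $N$ iterates by a single factor $r_k$ reset at the start of the $k$-th block; and to show that $r_{k+1}\le\max(D,\kappa r_k)$ for some deterministic $D>0$, $\kappa\in(0,1)$ and all $k$ past an a.s.\ finite index. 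This yields $\sup_k r_k<\infty$ a.s., and a within--block estimate then upgrades it to $\sup_n(\|x^{(1)}_n\|+\cdots+\|x^{(N)}_n\|)<\infty$, which is \textbf{(B.N.N+1)}.

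Concretely, set $t^{(j)}(n)=\sum_{m=0}^{n-1}\alpha^{(j)}_m$, fix $T>0$, let $0=T_0<T_1<\cdots$ where $T_{k+1}$ is the least value attained by $t^{(N)}(\cdot)$ that is $\ge T_k+T$, let $n_k$ denote the corresponding iteration indices, and put $r_k\stackrel{\triangle}{=}\max(1,\|x^{(1)}_{n_k}\|,\ldots,\|x^{(N)}_{n_k}\|)$. On the $k$-th block the rescaled iterates $\hat x^{(j)}_n\stackrel{\triangle}{=}x^{(j)}_n/r_k$ again satisfy an $N$-timescale recursion, now with $h^{(j)}$ replaced by the scaled map $h^{(j)}_{r_k}$ of \textbf{(B.N.i)} and with noise $M^{(j)}_{n+1}/r_k$; since the bound in \textbf{(A:2)} is positively homogeneous of degree two and $r_k\ge1$, the rescaled noises satisfy \textbf{(A:2)} with the same constants, uniformly in $k$. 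Using the Lipschitz property of $h^{(j)}_c$ (its constant is that of $h^{(j)}$, for every $c\ge1$), \textbf{(A:2)}, \textbf{(A:3)(ii)} and a stopping-time decoupling, I would first show that the rescaled iterates stay inside a ball of radius $R_T$ throughout the block and that the rescaled--martingale increments accumulated over the block tend to $0$ a.s.\ as $k\to\infty$ --- the latter being the standard estimate of \cite{Borkar_Meyn,BorkarBook}, applied to each of the $N$ noise sequences.

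Next I would examine the block limit in two regimes. If $r_k$ exceeds a deterministic threshold $c_0$, then $h^{(j)}_{r_k}$ is uniformly close to $h^{(j)}_\infty$ on compacts, so by the nested ODE-tracking lemmas underlying Theorem~\ref{Thm_with_stability} --- applied to the limiting $N$-timescale system driven by the $h^{(j)}_\infty$, whose globally asymptotically stable equilibria and nested equilibrium maps $\lambda^{(i)}_\infty$ are supplied by \textbf{(B.N.i)} --- the rescaled trajectory over the block stays close to one in which $\hat x^{(j)}(\cdot)$ has already collapsed onto the nested value $\lambda^{(j)}_\infty(\hat x^{(j+1)}(\cdot),\ldots,\hat x^{(N)}(\cdot))$ for $j<N$ (the faster timescales run for diverging algorithmic time inside one slow block by \textbf{(A:3)(iii)}), while $\hat x^{(N)}(\cdot)$ solves $\dot x^{(N)}=h^{(N)}_\infty(x^{(N)})$, whose unique globally asymptotically stable equilibrium is the origin. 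Choosing $T$ once and for all so large that every solution of this last ODE from the closed unit ball has sufficiently small norm at time $T$, and using $\lambda^{(i)}_\infty(0,\ldots,0)=0$ with the Lipschitz continuity of the $\lambda^{(i)}_\infty$, one obtains $\max_j\|\hat x^{(j)}(T)\|\le\kappa$ for all large $k$ in this regime, i.e.\ $\max_j\|x^{(j)}_{n_{k+1}}\|\le\kappa r_k$. If instead $r_k\le c_0$, the maps $h^{(j)}_{r_k}$ range over a relatively compact family and, invoking \textbf{(C.N.i)} (the un-rescaled ODEs have globally asymptotically stable equilibria $\lambda^{(i)}$, hence solutions from bounded sets remain bounded), one gets only the crude bound $\max_j\|x^{(j)}_{n_{k+1}}\|\le D$. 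Combining the regimes gives $r_{k+1}\le\max(D,\kappa r_k)$ for all $k$ past an a.s.\ finite $k_0$, hence $\sup_k r_k<\infty$ a.s.; reusing the within-block estimate (with the true iterates, via \textbf{(C.N.i)}) then yields $\sup_n(\sum_{j=1}^N\|x^{(j)}_n\|)<\infty$, completing the proof.

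The hard part is the coupled, multi-layer block-limit step: one must show rigorously that inside a single block of the \emph{slowest} algorithmic time --- over which $r_k$ is frozen but varies from block to block and is random --- each of the $N-1$ faster rescaled iterates has relaxed onto its nested $\lambda^{(\cdot)}_\infty$-value while the slowest tracks $\dot x^{(N)}=h^{(N)}_\infty(x^{(N)})$. This requires (i) estimates on $h^{(j)}_c$ and on the rescaled noise that are uniform in the random scaling $c=r_k$; (ii) a stopping-time argument to break the apparent circularity ``iterates bounded on the block $\Leftrightarrow$ rescaled noise small on the block''; (iii) an induction on the timescale index $i$ in which level $i$ is analysed conditionally on levels $1,\ldots,i-1$ having equilibrated, with the boundary-layer errors (fast variables equilibrating in $o(T)$ units of slow time) shown to vanish uniformly as $k\to\infty$; and (iv) ensuring the contraction $r_{k+1}\le\kappa r_k$ holds for all large $k$, not merely along a subsequence on which $r_k\to\infty$. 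Items (iii) and (iv) are the genuinely new difficulties relative to the one- and two-timescale proofs; the remaining work is a careful but essentially routine adaptation.
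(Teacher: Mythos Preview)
Your outline takes a genuinely different route from the paper. The paper does \emph{not} cut only the slowest time axis: it proceeds from the fastest timescale to the slowest, and at each level $l$ it cuts the $l$-th algorithmic time axis $t^{(l)}(n)=\sum_{m<n}\alpha^{(l)}_m$ into blocks of length $T$, establishing the \emph{pointwise} bound $\|x^{(l)}_n\|\le K_l^*\bigl(\|x^{(l+1)}_n\|+\cdots+\|x^{(N)}_n\|\bigr)$ (Lemmas~\ref{fast_main} and~\ref{intermediate_main}). These hierarchical bounds are then fed into the analysis at the slowest level: they let one replace the full scaling factor $r(n)$ by a constant multiple of $\|\bar x^{(N)}(T_n)\|$ alone (proof of Lemma~\ref{slow_main}(i)), after which the one-timescale Borkar--Meyn contraction on $\dot x^{(N)}=h^{(N)}_\infty(x^{(N)})$ finishes. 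Thus the induction on the timescale index is \emph{external} (a separate block construction per level), not internal to a single slow block as you propose.

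The step in your plan that is not yet justified is the uniform within-block bound ``the rescaled iterates stay inside a ball of radius $R_T$ throughout the block''. For the faster components this cannot come from the Lipschitz/Gronwall route you indicate: over one \emph{slow} block $[T_k,T_{k+1})$ the cumulative fast step-size $\sum_{n=n_k}^{n_{k+1}-1}\alpha^{(j)}_n$ (for $j<N$) diverges as $k\to\infty$, since $\alpha^{(j)}_n/\alpha^{(N)}_n\to\infty$ by \textbf{(A:3)(iii)}, so discrete Gronwall gives a factor $\exp(L\cdot\text{fast time elapsed})$ that is not uniform in $k$. One can try to repair this by invoking the global asymptotic stability of the scaled fast ODEs (their trajectories cannot leave a fixed compact), but that requires ODE-tracking for the rescaled fast iterates \emph{within} the slow block, which in turn needs the very a-priori within-block bound you are trying to prove; your stopping-time remedy would then have to be executed at each of the $N-1$ faster levels simultaneously, with the boundary-layer errors controlled uniformly in the random scale $r_k$. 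That is exactly the circularity the paper's level-by-level pointwise bounds are designed to sidestep: once $\|x^{(j)}_n\|\le K_j^*\sum_{i>j}\|x^{(i)}_n\|$ holds for all $n$ and all $j<l$, the faster components are automatically controlled when you move to blocks on the $l$-th axis, and Gronwall is only ever applied over a bounded amount ($\approx T$) of level-$l$ time. If you insist on slow blocks only, you will in effect have to re-prove Lemmas~\ref{fast_main}(iii) and~\ref{intermediate_main}(iii) inside each block; it is cleaner to adopt the paper's decomposition.
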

Theorem \ref{thm_without_stability} with Theorem \ref{Thm_with_stability} imply that the recursions converge a.s. under the assumptions in Theorem \ref{main_thm}.
Finally, we consider the $N$-timescale recursions where each iterate contains a small additive perturbation term as follows:
\begin{align*}
    x^{(1)}_{n+1} &= x^{(1)}_{n} + \alpha^{(1)}_{n}(h^{(1)}(x^{(1)}_{n},x^{(2)}_{n},\ldots, x^{(N)}_{n}) + M^{(1)}_{n+1} + \varepsilon^{(1)}_{n}), \\
    x^{(2)}_{n+1} &= x^{(2)}_{n} + \alpha^{(2)}_{n}(h^{(2)}(x^{(1)}_{n},x^{(2)}_{n},\ldots,x^{(N)}_{n}) + M^{(2)}_{n+1}+\varepsilon^{(1)}_{n})),\\
    &\;\;\vdots \\
    x^{(N)}_{n+1} &= x^{(N)}_{n} + \alpha^{(N)}_{n}(h^{(N)}(x^{(1)}_{n},x^{(2)}_{n},\ldots,x^{(N)}_{n}) + M^{(N)}_{n+1}+\varepsilon^{(1)}_{n})).
\end{align*}

\begin{theorem}
\label{thm_with_pert}
Assume $\textbf{(A:1)-(A:3)}$, \textbf{(B.N.i)}$_{1\leq i \leq N}$ and \textbf{(C.N.i)}$_{1\leq i \leq N}$ hold. Further assume that $\sum_{i=1}^{N}\varepsilon^{(i)}_{n} \rightarrow 0$ as $n\rightarrow\infty$. Then $x^{(j)}_{n}, 1\leq j \leq N$ converges almost surely to the same solution as in Theorem \ref{main_thm}.
\end{theorem}

{\color{black} \subsection{Applications of the General Theory}
\label{sec:application_summary}

We now show how the abstract $N$-timescale convergence theorems yield concrete guarantees for three widely used reinforcement learning algorithms. In each case, we:  
(1) describe the objective being solved,  
(2) express the algorithm as a coupled stochastic recursion, and  
(3) map it to our general theory to obtain convergence results. A detailed convergence analysis of the algorithms can be found in Section~\ref{sec_app}.
\vspace{1ex}
\begin{enumerate}
    \item[\textbf{1.}]\noindent \textbf{GTD2 with momentum (policy evaluation).}
    The goal is to evaluate a fixed policy $\pi$ using off-policy data collected from a behaviour policy $\mu$. Let $v^\pi$ denote the value function of $\pi$. GTD2 seeks to find the best linear approximation $v_\theta(x) = \phi(x)^\top \theta$ to $v^\pi$, solving the projected Bellman equation while correcting the instability introduced by off-policy sampling. A heavy-ball momentum term is added to the recursion to improve empirical convergence rate.

    \noindent\textbf{Stochastic recursion.} The algorithm maintains three iterates, $v_t$, a fast auxiliary vector used for temporal-difference errors, $u_t$ a momentum term tracking smoothed value estimates, $\theta_t$, the main parameter approximating the value function. The updates evolve on three distinct timescales $(\xi_t, \beta_t, \varrho_t)$ satisfying $\xi_t > \beta_t > \varrho_t$, and follow linear recursions:
    \vspace{-2ex}
    \begin{align*}
        v_{t+1} &= v_t + \xi_t \big( h^{(1)}(v_t, u_t, \theta_t) + M^{(1)}_{t+1} \big), \\
        u_{t+1} &= u_t + \beta_t \big( h^{(2)}(v_t, u_t, \theta_t) + M^{(2)}_{t+1} \big), \\
        \theta_{t+1} &= \theta_t + \varrho_t \big( h^{(3)}(v_t, u_t, \theta_t) + M^{(3)}_{t+1} \big).
    \end{align*}
    The functions $h^{(i)}$ are affine in their arguments and the noise terms $M^{(i)}_{t+1}$ are martingale differences satisfying standard boundedness and moment conditions.

    \noindent\textbf{Application of theory.} Each $h^{(i)}$ is globally Lipschitz, and the joint drift satisfies the hierarchical contraction properties required by Assumptions~\textbf{(B)} and~\textbf{(C)}. The stepsize conditions match Assumption~\textbf{(A.2)}, and the feature/reward boundedness ensures that Assumption~\textbf{(A.1)} holds. Consequently, our Theorem~\ref{main_thm} implies:
    \[
        \theta_t \to \theta^* = -\bar A^{-1} \bar b \quad \text{almost surely},
    \]
    where $(\bar A, \bar b)$ are the expected GTD2 update matrices under the stationary distribution of $\mu$. The fast and intermediate iterates $(v_t, u_t)$ track their nested fixed points, with no additional stability constraints needed.
    \vspace{1ex}
    \item[\textbf{2.}] \noindent \textbf{Off-policy actor–critic (unconstrained policy optimization).} The objective is to maximise the expected discounted return $J(\pi)$ over a class of parameterised stochastic policies $\pi_{\theta}$, using samples drawn from an off-policy distribution. To stabilise training, the method uses a critic to estimate action-values (or advantages) and updates the actor via gradient ascent. An additional auxiliary parameter may be used to track target quantities (e.g., baselines or bootstrapped TD targets), yielding a three-timescale algorithm.

\noindent\textbf{Stochastic recursion.} The algorithm maintains three parameters, a fast critic state $v_t$ (e.g., value or TD-error vector), an intermediate auxiliary estimate $u_t$ (e.g., for baselines or GTD stability), a slow actor parameter $\theta_t$ optimizing $J(\theta)$ via gradient feedback. The updates evolve on three nested timescales $\xi_t > \beta_t > \alpha_t$, and follow the recursion:
\vspace{-2ex}
\begin{align*}
v_{t+1} &= v_t + \xi_t \big( h^{(1)}(v_t, u_t, \theta_t) + M^{(1)}_{t+1} \big), \\
u_{t+1} &= u_t + \beta_t \big( h^{(2)}(v_t, u_t, \theta_t) + M^{(2)}_{t+1} \big), \\
\theta_{t+1} &= \theta_t + \alpha_t \big( h^{(3)}(v_t, u_t, \theta_t) + M^{(3)}_{t+1} \big).
\end{align*}
The critic $v_t$ tracks a TD-style value estimate; $u_t$ may track an advantage baseline or auxiliary statistics; $\theta_t$ implements a policy gradient step using critic feedback.

\noindent\textbf{Application of theory.} All updates are Lipschitz and structured to obey a hierarchical contraction condition, so assumptions \textbf{(B)} and \textbf{(C)} hold. Boundedness of features and noise ensures assumptions \textbf{(A.1)} and \textbf{(A.3)}. The step-size hierarchy satisfies \textbf{(A.2)}. Then by Theorem~\ref{main_thm}:
\[v_t \to \lambda^{(1)}(u_t, \theta_t), u_t \to \lambda^{(2)}(\theta_t), \theta_t \to \theta^* \text{\;\;where\;\;} \nabla_\theta J(\theta^*) = 0, \text{ \;\; almost surely}.\]
Thus the actor–critic scheme converges to a stationary point of the return objective, even under off-policy sampling, with all internal auxiliary states tracking their fixed points.
\vspace{1ex}

    \item[\textbf{3.}]\noindent \textbf{Constrained actor–critic (policy optimization with constraints).} The goal is to solve a constrained reinforcement learning problem of the form
    \[
        \max_\pi J(\pi) \quad \text{subject to } C_i(\pi) \leq \nu_i,\quad 1\leq i\leq m,
    \]
    where both the objective and constraints depend on expectations over trajectories. The method uses Lagrangian dual ascent: an inner loop updates the actor and critic, and an outer loop adjusts the Lagrange multipliers $\gamma_t$.

    \noindent\textbf{Stochastic recursion.} The algorithm maintains a critic $(v_t)$ on the fastest timescale for value estimation, an actor $\theta_t$ on a slower timescale for primal ascent and dual variables $\gamma_t$ on the slowest timescale for constraint enforcement. These obey a three-timescale recursion of the form:
    \vspace{-2ex}
    \begin{align*}
        v_{t+1} &= v_t + \beta_t \big( h^{(1)}(v_t, \theta_t, \gamma_t) + M^{(1)}_{t+1} \big), \\
        \theta_{t+1} &= \theta_t + \alpha_t \big( h^{(2)}(v_t, \theta_t, \gamma_t) + M^{(2)}_{t+1} \big), \\
        \gamma_{t+1} &= \gamma_t + \eta_t \big( h^{(3)}(\theta_t, \gamma_t) + M^{(3)}_{t+1} \big),
    \end{align*}
    where $\beta_t > \alpha_t > \eta_t$ and the updates ascend the Lagrangian $\mathcal{L}(\theta, \gamma) = J(\theta) - \sum_i \gamma_i(C_i(\theta) - \nu_i)$.

    \noindent\textbf{Application of theory.} Under standard assumptions (bounded features, Lipschitz dynamics, monotone constraint violation), the triple recursion satisfies all the conditions of Theorem~\ref{main_thm}. Then, the critic tracks its fixed point for each $(\theta_t, \gamma_t)$, the actor follows a projected gradient flow for $\mathcal{L}(\cdot, \gamma_t)$, the dual variables ascend the dual function to enforce constraints. The iterates converge to a saddle point of the Lagrangian, yielding a primal-dual pair $(\theta^*, \gamma^*)$ such that:
    \[
        \nabla_\theta \mathcal{L}(\theta^*, \gamma^*) = 0, \qquad
        \gamma^*_i \geq 0, \quad C_i(\theta^*) \leq \nu_i, \quad
        \gamma^*_i(C_i(\theta^*) - \nu_i) = 0.
    \]
    This satisfies the KKT conditions for the constrained problem.

\end{enumerate}
}
{\color{black} \subsection{Proof Sketch of Theorem~\ref{main_thm}}
Before we proceed to the proof of the main result, we provide a brief sketch of the proof of Theorem~\ref{main_thm}:
\begin{enumerate}
    \item \textbf{Reduction to a stability--convergence decomposition.}
    The argument splits naturally into two logically independent parts.  
    First, assuming that the entire iterate vector
    \((x_n^{(1)},x_n^{(2)},\dots,x_n^{(N)})\)
    remains almost surely bounded, we demonstrate that the $N$ coupled stochastic‐approximation (SA) recursions converge to the prescribed hierarchical fixed point described in the theorem statement.  
    Second, we remove that boundedness assumption by proving the required \emph{a.s.} stability of all coordinates directly from the drift conditions in Assumptions~\textbf{(B.N.i)}\(_{1\le i\le N}\).  
    The two ingredients, taken together, yield the full claim.

    \item \textbf{ODE limit for the fastest scale under boundedness.}
    Fix the fastest stepsize sequence \(\{\alpha_n^{(1)}\}\).
    After dividing the $j$th recursion by the common factor \(\alpha_n^{(1)}\) we observe that, for every \(2\le j\le N\), the ratio \(\alpha_n^{(j)}/\alpha_n^{(1)}\) decays to zero by Assumption~\textbf{(A:3)}(iii).  
    Thus all coordinates except \(x^{(1)}\) evolve on an \emph{infinitely slower} clock and act, from the perspective of the \(x^{(1)}\) dynamics, as \emph{frozen} external parameters.  
    Invoking the standard ODE method (Benaïm–Borkar fram\-e\-work) with the “third extension’’ of Chap.~2 in \cite{BorkarBook} shows that the joint process tracks the flow of the differential inclusion
    \[
    \dot{x}^{(1)}(t)=h^{(1)}\bigl(x^{(1)}(t),x^{(2)}(t),\dots,x^{(N)}(t)\bigr),\quad
    \dot{x}^{(j)}(t)=0\;(2\le j\le N),
    \]
    whose internally chain–transitive (ICT) invariant set is
    \(
    \bigl\{(\lambda^{(1)}(x^{(2)},\dots,x^{(N)}),x^{(2)},\dots,x^{(N)})\bigr\}.
    \)
    Consequently
    \(
    x_n^{(1)}-\lambda^{(1)}(x_n^{(2)},\dots,x_n^{(N)})\to0
    \)
    almost surely as \(n\to\infty\).

    \item \textbf{Cascade down the timescales.}
    We repeat the identical argument on the \emph{next} stepsize sequence \(\{\alpha_n^{(2)}\}\) while treating the remaining coordinates \((x^{(3)},\dots,x^{(N)})\) as quasi‐static.  
    Induction shows that for each \(1\le k\le N-1\)
    \[
    x_n^{(k)}-
    \lambda^{(k:N-1)}\!\bigl(x_n^{(N)}\bigr)\;\longrightarrow\;0
    \quad\text{a.s.}
    \]
    where the iterated map \(\lambda^{(k:N-1)}\) is the nested fixed‐point operator from the theorem statement.

    \item \textbf{ODE analysis on the slowest scale.}
    Introduce a continuous‐time interpolation of the slowest iterate \(x^{(N)}_n\) and analyse the martingale‐error decomposition exactly as in Chap.~2, Lemma 2 of \cite{BorkarBook}.  
    Gronwall’s inequality bounds the difference between the interpolated trajectory and the solution of the limit ODE
    \[
    \dot{x}^{(N)}(t)=h^{(N)}\!\Bigl(
    \lambda^{(1:N-1)}\!\bigl(x^{(N)}(t)\bigr),\;
    \dots,\;
    \lambda^{(N-1)}\!\bigl(x^{(N)}(t)\bigr),\;
    x^{(N)}(t)
    \Bigr),
    \]
    whose globally asymptotically stable equilibrium is \(x_*^{(N)}\) by Assumption~\textbf{(C.N.N)}.  
    Hence \(x^{(N)}_n\to x_*^{(N)}\) and, by Step 2, every faster coordinate converges to \(\lambda^{(k:N-1)}(x_*^{(N)})\), completing the convergence proof \emph{conditional on stability}.  This yields Theorem~\ref{Thm_with_stability}.

    \item \textbf{Bounding the iterates: a recursive scaling argument.}
    To prove almost‐sure boundedness we compare the raw iterates with appropriately scaled versions that satisfy rescaled SA recursions.  
    For the fastest scale one rescales by \(r(n)\doteq\max\bigl(1,\|\bar{X}^{(1)}(T_n)\|\bigr)\); Lipschitz continuity of \(h^{(1)}\) and the drift condition \textbf{(B.N.1)} force the scaled trajectory to contract whenever it exits a suitable multiple of the slower‐scale coordinates.  
    Formally, Lemma~\ref{fast_main} shows that
    \(
    \|x^{(1)}_n\|\le K_1^*\sum_{j=2}^N\|x^{(j)}_n\|.
    \)

    The identical construction applied to the $l$th timescale—now with the first \(l-1\) coordinates treated as instantaneous and the rest as quasi‐static—yields, by induction over \(l=2,\dots,N-1\),
    \[
    \|x^{(l)}_n\|\le K_l^*\sum_{j=l+1}^N\|x^{(j)}_n\|.
    \]
    Finally, for the slowest iterate, one shows by contradiction that its norm cannot diverge; otherwise the scaled recursion would violate the one‐step contraction inequality in Lemma~\ref{slow_main}.  As a result all coordinates remain a.s.\ bounded, validating assumption \textbf{(B.N.N+1)} and establishing Theorem~\ref{thm_without_stability}.

    \item \textbf{Combining stability and convergence.}
    The boundedness of every coordinate proved in Step 5 feeds into Step 1–Step 3, thereby proving Theorem~\ref{main_thm} in full generality.

\end{enumerate}}
\vspace{5ex}
\begin{remark}
    Since the additional error terms are $o(1)$, their contribution is asymptotically negligible. See arguments in the third extension of (Chapter 2, pp. 17 of \cite{BorkarBook}) that handles this case for one-timescale iterates. Using similar arguments along with our analysis, this extension can be easily obtained. 
\end{remark}

\section{Proof of the Main Results}
\label{proof}
\subsection{Showing convergence by assuming Stability (Theorem \ref{Thm_with_stability})}

We start by characterizing the set to which the iterate-vector $(x^{(1)}_{n},x^{(2)}_{n},\ldots,x^{(N)}_{n})$ converges. Extending the arguments in Lemma 1, Chapter 6 of \cite{BorkarBook},
we first consider the timescale of $\{\alpha^{(1)}_{n}\}$ and rewrite the $N$ iterates as follows:
\begin{gather*}
        x^{(1)}_{n+1} = x^{(1)}_{n} + \alpha_{n}^{(1)}\bigg(h^{(1)}\Big(x^{(1)}_{n},x^{(2)}_{n},\ldots,x^{(N)}_{n}\Big) + M_{n+1}^{(1)}\bigg),\\
        x^{(2)}_{n+1} = x^{(2)}_{n} + \alpha_{n}^{(1)}\bigg(\epsilon_{n}^{(2)} + \bar{M}_{n+1}^{(2)}\bigg),\\
        x^{(3)}_{n+1} = x^{(3)}_{n} + \alpha_{n}^{(1)}\bigg(\epsilon_{n}^{(3)} + \bar{M}_{n+1}^{(3)}\bigg),\\
        \longvdots{2em}\\
        x^{(N)}_{n+1} = x^{(N)}_{n} + \alpha_{n}^{(1)}\bigg(\epsilon_{n}^{(N)} + \bar{M}_{n+1}^{(N)}\bigg),
\end{gather*}
where, we define $\forall 2\leq j \leq N,$
    $\epsilon_{n}^{(j)} = \frac{\alpha^{(j)}_{n}}{\alpha^{(1)}_{n}}h^{(j)}\Big(x^{(1)}_{n},x^{(2)}_{n},\ldots,x^{(N)}_{n}\Big), \mbox{\quad} \bar{M}_{n+1}^{(j)} = \frac{\alpha^{(j)}_{n}}{\alpha^{(1)}_{n}} M_{n+1}^{(j)}.$  

\noindent From assumption \textbf{(A:3)}(iii), $\frac{\alpha^{(j)}_{n}}{\alpha^{(1)}_{n}} \rightarrow 0$, as $n\rightarrow\infty$, $ \forall 2\leq j \leq N$ and thus $\epsilon_{n}^{(j)} = o(1)$ a.s. Using now the third extension from Chapter-2 of \cite{BorkarBook}, $\Big(x^{(1)}_{n},x^{(2)}_{n},\ldots,x^{(N)}_{n}\Big)$ converges to an internally chain transitive invariant set of the ODE
\begin{gather*}
    \dot{x}^{(1)}(t) = h^{(1)}\Big(x^{(1)}(t), x^{(2)}(t), \ldots, x^{(N)}(t)\Big),\quad
    \dot{x}^{(2)}(t) = 0,\quad
    \dot{x}^{(3)}(t) = 0,\quad\ldots,
    \dot{x}^{(N)}(t) = 0.
\end{gather*}
For initial conditions ${x}^{(j)} \in \mathbb{R}^{d_j}, 1 \leq j \leq N$, the internally chain transitive invariant set of the above ODE system is $\Big\{\Big(\lambda^{(1)}\Big(x^{(2)},x^{(3)},\ldots,x^{(N)}\Big),x^{(2)},x^{(3)},\ldots,x^{(N)}\Big)\Big\}$. Therefore,
\begin{gather}
    \Big(x^{(1)}_{n}, x^{(2)}_{n}, \ldots, x^{(N)}_{n}\Big) \rightarrow 
    \bigg\{\big(\lambda^{(1)}\big(x^{(2)},x^{(3)},\ldots,x^{(N)}\big),x^{(2)},x^{(3)},\ldots,x^{(N)}\big),x^{(j)}\in \mathbb{R}^{d_j}, 2\leq j \leq N\bigg\}.
    \label{1_ts}
\end{gather}
Next, we consider the timescale $\alpha^{(m)}_n$, $2 \leq m \leq N-1$,
\begin{gather*}
    x^{(m)}_{n+1} = x^{(m)}_{n} + \alpha_{n}^{(m)}\bigg(h^{(m)}\Big(x^{(1)}_{n},x^{(2)}_{n},\ldots,x^{(N)}_{n}\Big) + M_{n+1}^{(m)}\bigg),\\
    x^{(m+1)}_{n+1} = x^{(m+1)}_{n} + \alpha_{n}^{(m)}\bigg(\epsilon_{n}^{(m+1)} + \bar{M}_{n+1}^{(m+1)}\bigg),\\
    \longvdots{2em}\\
    x^{(N)}_{n+1} = x^{(N)}_{n} + \alpha_{n}^{(m)}\bigg(\epsilon_{n}^{(N)} + \bar{M}_{n+1}^{(N)}\bigg),
\end{gather*}
where we re-define for $m+1 \leq j \leq N$ (note the abuse of notation)
\begin{gather*}
    \epsilon_{n}^{(j)} = \frac{\alpha^{(j)}_{n}}{\alpha^{(m)}_{n}}h^{(j)}\Big(x^{(1)}_{n},x^{(2)}_{n},\ldots,x^{(N)}_{n}\Big), \mbox{\quad} \bar{M}_{n+1}^{(j)} = \frac{\alpha^{(j)}_{n}}{\alpha^{(m)}_{n}} M_{n+1}^{(j)}.
\end{gather*}

From assumption $\textbf{(A:3)} (iii)$, $\frac{\alpha^{(j)}_{n}}{\alpha^{(m)}_{n}} \rightarrow 0$ as $n\rightarrow \infty$ and therefore $\epsilon_{n}^{(j)} = o(1)$ $\forall$ $m+1\leq j \leq N$. Using the third extension from Chapter 2 of \cite{BorkarBook}, $\big(x^{(m+1)}_{n},x^{(m+2)}_{n},\ldots,x^{(N)}_{n}\big)$ converges to an internally chain transitive invariant set of the ODE
\begin{gather*}
    \dot{x}^{(m+1)}(t) = h^{(m+1)}\Big(x^{(1)}(t), x^{(2)}(t), \ldots, x^{(N)}(t)\Big),\;\;
    \dot{x}^{(m+2)}(t) = 0,\;\;\ldots,\;\;
    \dot{x}^{(N)}(t) = 0.
\end{gather*}
For initial conditions $x^{(j)} \in \mathbb{R}^{d_j},$ $m \leq j \leq N$, the internally chain transitive invariant set of the ODE system is $\Big\{\Big(\lambda^{(m)}\Big(x^{(m+1)},\ldots,x^{(N)}\Big),x^{(m+1)},x^{(m+2)},\ldots,x^{(N)}\Big)\Big\}$. Therefore,
\begin{gather}
    \big(x^{(m)}_{n}, x^{(m+1)}_{n}, \ldots, x^{(N)}_{n}\big) \rightarrow \nonumber
    \big\{\big(\lambda^{(m)}\big(x^{(m+1)},x^{(m+2)},\ldots,x^{(N)}\big),x^{(m+1)},x^{(m+2)},\ldots,x^{(N)}\big),x^{(j)}\in \mathbb{R}^{d_j}, m+1\leq j \leq N\big\}.
    \label{m_ts}
\end{gather} 
We now infer the following from \eqref{m_ts}:
\begin{gather}
    \Big(x^{(1)}_{n},x^{(2)}_{n},\ldots,x^{(N)}_{n}\Big)\rightarrow
    \Bigg\{\bigg(\lambda^{(1)}\bigg(\lambda^{(2)}\Big(\ldots\lambda^{(N-2)}(\lambda^{(N-1)}(x^{(N)}),
    x^{(N)}),\ldots,x^{(N)}\Big),x^{(N)}\bigg)
    \mbox{{,}}\nonumber\\
    \lambda^{(2)}\Big(\ldots\lambda^{(N-2)}\big(\lambda^{(N-1)}(x^{(N)}),x^{(N)}\big),\ldots,x^{(N)}\Big)\mbox{{,}}\ldots
    \ldots\mbox{{,}}\lambda^{(N)}(x^{(N)})\mbox{{,}}x^{(N)}\bigg) \bigg|\; x^{(N)}\in\mathbb{R}^{d_N}\Bigg\}\mbox{\quad a.s.}   
    \label{eq_10}
\end{gather}
Recall now that
\begin{gather*}
    \lambda^{(j:N-1)}(x) \stackrel{\triangle}{=}  \lambda^{(j)}\bigg(\lambda^{(j+1)}\Big(\ldots\lambda^{(N-2)}(\lambda^{(N-1)}(x),
    x),\ldots,x\Big),x\bigg),
    1\leq j \leq N-2.
\end{gather*}
Then, $\forall 1\leq k \leq N-1$, 
\begin{equation}
    \label{eq_8}
    \big\|\lambda^{(k:N-1)}(x_{n}^{(N)}) - x_{n}^{(k)}\big\| \rightarrow 0 \mbox{ a.s. as } n\rightarrow\infty.
\end{equation}
Finally, we consider the slowest timescale, i.e., the one corresponding to  $\{\alpha^{(N)}_n\}$. We define the piece-wise linear continuous interpolation of the iterates $x^{(N)}_{n}$ as follows:
\begin{gather*}
    \bar{x}^{(N)}\big(t(n)\big) = {x}^{(N)}_{n},\\
    \bar{x}^{(N)}(t) = {x}^{(N)}_{n} + \big({x}^{(N)}_{n+1} - {x}^{(N)}_{n}\big)\frac{t-t(n)}{t(n+1) - t(n)}\;,\quad
    t \in [t(n),t(n+1)],
\end{gather*}
where, $t(n) = \sum_{m=0}^{n-1}\alpha^{(N)}_{m}, n\geq 1$ with $t(0)=0$. Also, let ${x}^{(N),s}(t), t\geq s$, denote the unique solution of the below ODE starting at $s\in \mathbb{R}:$
\[{\dot{x}}^{(N),s}(t) = h^{(N)}\bigg(\lambda^{(1:N-1)}\big(x^{(N),s}(t)\big),
    \lambda^{(2:N-1)}\big(x^{(N),s}(t)\big),\\\ldots,\]
    \[\lambda^{(N-2:N-1)}\big(x^{(N),s}(t)\big),
    \lambda^{(N-1)}\big(x^{(N),s}(t)\big), x^{(n),s}(t) \bigg), \mbox{ } t\geq s,\]
with ${x}^{(N),s}(s) = \bar{x}^{(N)}(s)$. 
Let 
\(\displaystyle \psi^{(N)}_{n} = \sum_{i=0}^{n-1}\alpha^{(N)}_{i}M_{i+1}^{(N)}, \mbox{ }n\geq1.\)
It is easy to see that $\{\psi^{(N)}_n,{\cal F}_n\}$ is a zero-mean square integrable martingale. Further,
\[\sum_{n\geq 0} E[|\ \psi^{(N)}_{n+1}-\psi^{(N)}_n\|^2 | \mathcal{F}_n] 
= \sum_{n\geq 0} (\alpha^{(N)}_n)^2 E[\|M^{(N)}_{n+1}\|^2|\mathcal{F}_n] <\infty,
\]
a.s., from {\bf (A:2)}, {\bf (A:3)} and {\bf (B.N.N+1)}. From the martingale convergence
theorem, $\{\psi^{(N)}_n,\mathcal{F}_n\}$ is an almost surely convergent martingale sequence. Let $[t] = \max\{t(n): t(n) \leq t\}$, $t\geq 0$. Then for $n,m \geq 0$.
\begin{gather*}
    \bar{x}^{(N)}(t(n+m)) = \bar{x}^{(N)}(t(n)) + \int_{t(n)}^{t(n+m)} h^{(N)}\bigg(\lambda^{(1:N-1)}\big(\bar{x}^{(N)}(t)\big),
    \lambda^{(2:N-1)}\big(\bar{x}^{(N)}(t)\big),\\\ldots,\lambda^{(N-2:N-1)}\big(\bar{x}^{(N)}(t)\big),
    \lambda^{(N-1)}\big(\bar{x}^{(N)}(t)\big), \bar{x}^{(N)}(t)\bigg) + I + II + III,
\end{gather*}
where,
\begin{gather*}
    I = \Bigg\|\int_{t(n)}^{t(n+m)}\Bigg(h^{(N)}\Big(\lambda^{(1:N-1)}\big(\bar{x}^{N}([t])\big),\lambda^{(2:N-1)}\big(\bar{x}^{N}([t])\big),\ldots,
    \lambda^{(N-1)}\big(\bar{x}^{N}([t])\big),\bar{x}^{N}([t])\Big)\\ - h^{(N)}\Big(\lambda^{(1:N-1)}\big(\bar{x}^{N}(t)\big),
    \lambda^{(2:N-1)}\big(\bar{x}^{N}(t)\big),
    \ldots,\lambda^{(N-1)}\big(\bar{x}^{N}(t)\big),\bar{x}^{N}(t)\Big)\Bigg)dt \Bigg\|,\\
    II = \Bigg\|\sum_{j=0}^{m-1}\alpha^{(N)}_{j}\bigg(h^{(N)}\big(x^{(1)}_{n+j},x^{(2)}_{n+j},\ldots,x^{(N)}_{n+j}\big)
    \\- h^{(N)}\big(\lambda^{(1:N-1)}
    \big(\bar{x}^{N}_{n+j}\big),\lambda^{(2:N-1)}\big(\bar{x}^{N}_{n+j}\big),
    \ldots,\lambda^{(N-1)}\big(\bar{x}^{N}_{n+j}\big),\bar{x}^{N}_{n+j}\big)\bigg)\Bigg\|, \\
    III = \Big\|\Big(\psi^{(N)}_{n+m+1} - \psi^{(N)}_{n}\Big)\big\|.
\end{gather*}
Now, as in Lemma 2 of Chapter 2 in \cite{BorkarBook}, using Gronwall inequality we have:
\begin{gather*}
    \sup_{t\in[s,s+T]}\|\bar{x}^{N}(t) - {x}^{(N),s}(t)\| \leq K_{T}\Big(I+II+III\Big),
\end{gather*}
where, $K_{T}>0$ is a constant that depends on $T$. We next show that all the three terms on the RHS above go to $0$ as $n\rightarrow\infty$. Using Lipschitz continuity of $h^{(N)},$ we have
\begin{gather*}
    \|I\| \leq \int_{t(n)}^{t(n+m)}L\Bigg(\Big\|\lambda^{(1:N-1)}\big(\bar{x}^{(N)}([t])\big) - \lambda^{(1:N-1)}\big(\bar{x}^{(N)}(t)\big)\Big\| \\
    + \Big\|\lambda^{(2:N-1)}\big(\bar{x}^{(N)}([t])\big) - \lambda^{(2:N-1)}\big(\bar{x}^{(N)}(t)\big)\Big\| + \cdots
    + \Big\|\bar{x}^{(N)}([t]) - \bar{x}^{(N)}(t)\Big\|\Bigg)\\
    \leq \sum_{j=0}^{m-1}K_{L}\int_{t(n+j)}^{t(n+j+1)}\big\|\bar{x}^{(N)}(t(n+j)) - \bar{x}^{(N)}(t)\big\|,
\end{gather*}
where $K_{L} = L(L^{N-1}+L^{N-2}+\cdots+L+1)$ and we have used the fact that $\lambda^{(j:N)}$ is Lipschitz $\forall 1\leq j \leq N-2$ and $[t] = t(n+j)$ for $t\in[t(n+j),t(n+j+1)], \forall$ $ 1\leq j \leq m-1$. Now, 
\begin{gather*}
    \|\bar{x}^{(N)}(t(n+j)) - \bar{x}^{(N)}(t)\|\leq \frac{\Big(\|x^{(N)}_{n+j+1} - x^{(N)}_{n+j\|}\Big)t}{t(n+j+1)-t(n+j)}.
\end{gather*}
Therefore, using A:3 (ii)
\begin{align*}
    \|I\| \leq \sum_{j=0}^{m-1}\mathcal{O}(\alpha^{(N)}_{n+j})\int_{t(n+j)}^{t(n+j+1)} \frac{t}{t(n+j+1)-t(n+j)}dt\leq \sum_{j>n}\mathcal{O}(\alpha^{(N)}_{j})^2 \rightarrow 0 \mbox{ as } n \rightarrow \infty. 
\end{align*}
Next, using Lipschitz continuity of $h^{N}(\cdot)$,
\begin{gather*}
    \|II\| \leq \Big\|\sum_{j=0}^{m-1} \alpha^{(N)}_{j}L\sum_{k=1}^{N-1}\Big(\|x^{(k)}_{n+j} - \lambda^{(k:N-1)}(x^{(N)}_{n+j})\|\Big) \Big\|\\
    \leq L \sum_{k=1}^{N}\mathcal{O}\Big(\sup_{j\geq n}\Big\|x^{(k)}_{j} - \lambda^{(k:N-1)}(x^{(N)}_{j})\Big\|\Big)
    \rightarrow 0 \mbox{ a.s. as } n\rightarrow\infty \mbox{ \big(from \eqref{eq_8}}\big)\\
\end{gather*}
Finally, from the foregoing, $\|III\| \leq \sup_{j>n}\|\psi^{(N)}_{j} - \psi^{(N)}_{n}\| \rightarrow 0$ almost surely as $n\rightarrow\infty$. Now, using arguments as in Lemma 1 of Chapter 2 in \cite{BorkarBook}, it can be shown that
\begin{gather*}
    \lim_{s\rightarrow\infty} \sup_{t \in [s,s+T]} \|\bar{x}^{(N)}(t) - x^{(N),s}(t)\| = 0 \mbox{ a.s. }
\end{gather*}
Since, $x^{(N)}_{*}$ is a globally asymptotically stable equilibrium with $i=N$, using Theorem 2 of Chapter 2 in \cite{BorkarBook}, we get $x^{(N)}_{n} \rightarrow x^{(N)}_{*}$ a.s. as $n\rightarrow\infty$. Combining this with \eqref{eq_10}, proves Theorem \ref{Thm_with_stability}.


\subsection{Showing Stability of the recursions 
(Theorem \ref{thm_without_stability})}

We begin with the fastest timescale recursion governed by the step-size sequence $\{\alpha^{(1)}_{n}\}$. We first state some of the notations and definitions used.

\begin{enumerate}
    \item[\bf{(D1)}] Let
    \[t^{(1)}(n) = \sum_{i=0}^{n-1}\alpha^{(1)}_i, n\geq 1, \mbox{ with } t^{(1)}(0) = 0.\]
    Let $X_k \stackrel{\triangle}{=} (x^{(1)}_k,x^{(2)}_k,\ldots,x^{(N)}_k)$, $k\geq0$. For $t \in [t^{(1)}(n), t^{(1)}(n+1)]$,
    \[\bar{X}^{(1)}(t) = X_{n} + \left(X_{n+1} - X_{n}\right)\frac{t-t^{(1)}(n)}{t^{(1)}(n+1) - t^{(1)}(n)}.\]
    
    \item[\bf{(D2)}]Given $t^{(1)}(n), n\geq 0$ and a constant $T>0$ define
    \[T_0 = 0,\;\;\text{and } T_n = \min(t^{(1)}(m):t^{(1)}(m)\geq T_{n-1}+T), n\geq 1.\]
    One can find a subsequence $\{m(n)\}$ such that $T_n = t^{(1)}(m(n))$ $\forall n$ and $m(n) \rightarrow \infty$ as $n\rightarrow \infty$.
    
    \item[\bf{(D3)}] Define a sequence $r(n)\geq 1$, $\forall n$ as follows:
    \[r(n) = \max(r(n-1),\|\bar{X}^{(1)}(T_n)\|,1).\]
    
    \item[\bf{(D4)}] Define the scaled iterates (obtained from the recursion above) for $m(n)\leq k\leq m(n+1)-1$ as:
    \[\hat{x}^{(1)}_{1} \stackrel{\triangle}{=} \frac{x_{k}^{(1)}}{r(n)}, \mbox{ }\hat{x}^{(2)}_{k} \stackrel{\triangle}{=} \frac{x^{(2)}_{k}}{r(n)},\ldots, \hat{x}^{(N)}_{k} \stackrel{\triangle}{=} \frac{x_{k}^{(N)}}{r(n)}.\]
Further,    
    \[\hat{x}^{(1)}_{k+1} = \hat{x}^{(1)}_{k} + \alpha^{(1)}_k\left(\frac{h^{(1)}(c\hat{x}^{(1)}_{k},\ldots,c\hat{x}^{(N)}_{k})}{c} + \hat{M}_{k+1}^{(1)}\right),\]
    \[\hat{x}^{(2)}_{k+1} = \hat{x}^{(2)}_{k} + \alpha^{(1)}_{k}\left(\epsilon_{k}^{(2)} + \hat{M}_{k+1}^{(2)}\right)\]
    \[\longvdots{2em}\]
    \[\hat{x}^{(N)}_{k+1} = \hat{x}^{(N)}_{k} + \alpha^{(1)}_{k} \left(\epsilon_{k}^{(N)} + \hat{M}_{k+1}^{(3)}\right)\]
    where, $c = r(n)$, and
    \[\epsilon_{k}^{(j)} = \frac{\alpha^{(j)}_k}{\alpha^{(1)}_k}\frac{h^{(j)}(c\hat{x}^{(1)}_{k},\ldots,c\hat{x}^{(N)}_{k})}{c}, \mbox{ } 2\leq j\leq N.\]
    \[\hat{M}_{k+1}^{(1)} = \frac{M_{k+1}^{(1)}}{r(n)}, \hat{M}_{k+1}^{(j)} = \frac{\alpha^{(j)}_k}{\alpha^{(1)}_k}\frac{M_{k+1}^{(j)}}{r(n)}, 2\leq j\leq N.\]
    
    \item[\bf{(D5)}] Next we define the linearly interpolated trajectory for the scaled iterates $\hat{X}^{(1)}_n=(\hat{x}^{(1)}_{n},\hat{x}^{(2)}_{n},\ldots,
    \hat{x}^{(N)}_{n})$ as follows:
    \begin{gather*}
        \hat{X}^{(1)}(t) = \hat{X}^{(1)}_n + (\hat{X}^{(1)}_{n+1} - \hat{X}^{(1)}_n)\frac{t - t^{(1)}(n)}{t^{(1)}(n+1) - t^{(1)}(n)},\;\;
    t \in [t^{(1)}(n), t^{(1)}(n+1)].
    \end{gather*}
    
    \item[\bf{(D6)}] Let \(X^{(1)}_{n}(t) \stackrel{\triangle}{=} (x^{(1)}_{n}(t),x^{(2)}_{n}(t),\ldots,x^{(N)}_{n}(t)), t\in [T_n,T_{n+1}]\) denote the trajectory of the ODE:
    \[\dot{x}^{(1)}(t) = h_{r(n)}(x^{(1)}(t),x^{(2)}(t),\ldots,x^{(N)}(t)),\;\; \dot{x}^{(2)}(t) = 0,\;\;\ldots,\;\; \dot{x}^{(N)}(t) = 0,\]
    with $x^{(j)}_{n}(T_n) = \hat{x}^{(j)}(T_n),$ $\forall 1\leq j \leq N$.
\end{enumerate}

We state a lemma for an ODE with external inputs. Let $x_{c}^{(1),x^{(2)}(t),\ldots,x^{(N)}(t)}$ and $x_{\infty}^{(1),x^{(2)}(t),\ldots,x^{(N)}(t)}$ denote the trajectories of the following ODEs:
\vspace{-2ex}
\begin{align*}
    \dot{x}^{(1)}(t) &= h^{(1)}_c\Big(x^{(1)}(t),x^{(2)}(t),\ldots, x^{(N)}(t)\Big),\\
    \dot{x}^{(1)}(t) &= h^{(1)}_{\infty}\Big(x^{(1)}(t),x^{(2)}(t),\ldots, x^{(N)}(t)\Big),
\end{align*}
respectively, with initial condition $x^{(1)}\in \mathbb{R}^{d_1}$ and the external inputs $x^{(j)}(t) \in \mathbb{R}^{d_j},$ $\forall 2\leq j \leq N$. Let $B^{j}(x^{(j)},r) \triangleq \{q\in\mathbb{R}^{d_j}\Big|\|q-x^{(j)}\|<r\}$ denote a ball of radius $r$ centred at $x^{(j)}$.
\begin{lemma}
\label{fastest_aux}
    Let $x^{(1)} \in B^{1}(0,1) \subset \mathbb{R}^{d_1},x^{(j)} \in W^{(j)}\subset\mathbb{R}^{d_j}, \forall 2\leq j \leq N$ and let \textbf{(B.N.1)} hold. Then given $\epsilon>0, \exists c_{\epsilon}\geq 1, r_{\epsilon}>0$ and $T_{\epsilon}>0$ such that for any external inputs satisfying \(\tilde{x}^{(j)}(s) \in B^{j}(x^{(j)},r_{\epsilon})\), $\forall s \in [0,T], \forall c > c_{\epsilon},t\geq T_{\epsilon}$, 
        \[\|x_c^{(1),\tilde{x}^{(2)}(t),\ldots,\tilde{x}^{(N)}(s)}(t,x) - \lambda_{\infty}^{(1)}(x^{(2)},\ldots,x^{(N)})\| \leq 2\epsilon.\]
\end{lemma}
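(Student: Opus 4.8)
The plan is to prove this the same way the corresponding fastest-timescale auxiliary lemma is handled in the two-timescale analysis (Lemma~1 of Chapter~6 of \cite{BorkarBook}; see also \cite{chandru-SB}), treating the whole block $(x^{(2)},\dots,x^{(N)})$ of ``external inputs'' as a single frozen parameter $p$. I would split the bound into three perturbation estimates and combine them by a triangle inequality on the range $t\in[T_\epsilon,T]$ (which, with $T\ge T_\epsilon$, is the range in which the lemma is used in the proof of Theorem~\ref{thm_without_stability}). Step~1, a pure ODE step: using only the global asymptotic stability asserted in \textbf{(B.N.1)}, I would extract a time $T_\epsilon$ such that the \emph{limiting, frozen-parameter} flow of $\dot x^{(1)}=h^{(1)}_\infty(x^{(1)},x^{(2)},\dots,x^{(N)})$ started anywhere in $\overline{B^1(0,1)}$ is within $\epsilon$ of $\lambda^{(1)}_\infty(x^{(2)},\dots,x^{(N)})$ for all $t\ge T_\epsilon$, \emph{uniformly} over $p\in\mathcal W:=\prod_{j=2}^{N}W^{(j)}$ (compact). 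Step~2: pass from $h^{(1)}_\infty$ to $h^{(1)}_c$; by the uniform-on-compacts convergence $h^{(1)}_c\to h^{(1)}_\infty$ in \textbf{(B.N.1)} and a Gronwall estimate on the finite horizon $[0,T]$, the $h^{(1)}_c$-flow stays within $\epsilon$ of the $h^{(1)}_\infty$-flow once $c\ge c_\epsilon$. Step~3: pass from the frozen inputs $x^{(j)}$ to the time-varying $\tilde x^{(j)}(\cdot)\in B^j(x^{(j)},r_\epsilon)$; by Lipschitz continuity of $h^{(1)}_c$ and another Gronwall estimate on $[0,T]$ this costs at most a quantity of order $(N-1)r_\epsilon e^{LT}$, which is made $<\epsilon$ by shrinking $r_\epsilon$. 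The three estimates add up to $2\epsilon$.

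For Steps~2 and 3 I would first check that all trajectories in sight remain in one fixed compact set over $[0,T]$: since $h^{(1)}$ is $L$-Lipschitz, every $h^{(1)}_c$ ($c\ge1$) is also $L$-Lipschitz and satisfies $\|h^{(1)}_c(x)\|\le\|h^{(1)}(0)\|+L\|x\|$, so a trajectory started in $\overline{B^1(0,1)}$ with inputs confined to the (compact) closed $1$-neighbourhood $\mathcal W_1$ of $\mathcal W$ stays, uniformly in $c\ge1$, in some ball $\mathcal C\subset\mathbb R^{d_1}$ whose radius depends only on $L,\|h^{(1)}(0)\|,\mathcal W,T$. On $\mathcal C\times\mathcal W_1$, \textbf{(B.N.1)} gives $\delta(c):=\sup\|h^{(1)}_c-h^{(1)}_\infty\|\to0$ as $c\to\infty$; then for $u$ the trajectory driven by $h^{(1)}_c$ with inputs $\tilde x^{(j)}(\cdot)$ and $v$ the trajectory driven by $h^{(1)}_\infty$ with frozen inputs, started at the same point, $\frac{d}{dt}\|u-v\|\le\delta(c)+L(N-1)r_\epsilon+L\|u-v\|$, whence $\sup_{[0,T]}\|u-v\|\le(\delta(c)+L(N-1)r_\epsilon)(e^{LT}-1)/L$ by Gronwall; choosing first $r_\epsilon\le1$ small and then $c_\epsilon$ large makes the right side $<\epsilon$. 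Combined with Step~1, $\|u(t)-\lambda^{(1)}_\infty(x^{(2)},\dots,x^{(N)})\|\le\|u(t)-v(t)\|+\|v(t)-\lambda^{(1)}_\infty(\cdot)\|<2\epsilon$ for $t\in[T_\epsilon,T]$.

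The main obstacle is Step~1: upgrading the pointwise-in-$p$ convergence to the equilibrium --- which is immediate from global asymptotic stability --- to convergence that is uniform over the compact parameter block $\mathcal W$ and over initial conditions in $\overline{B^1(0,1)}$. I would get this by the standard route: the solution map $(t,x^{(1)}_0,p)\mapsto$ (value at time $t$) is jointly continuous because $h^{(1)}_\infty$ is Lipschitz (being a locally uniform limit of $L$-Lipschitz maps), $\lambda^{(1)}_\infty$ is continuous by \textbf{(B.N.1)}, and then a covering/compactness argument --- equivalently, a common Lyapunov function supplied by a converse Lyapunov theorem, as invoked in \cite{BorkarBook} --- turns the family of pointwise times $\{T_\epsilon(p):p\in\mathcal W\}$ into a single $T_\epsilon$. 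Everything else is the routine Gronwall bookkeeping sketched above.
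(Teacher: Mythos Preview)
Your proposal is correct and follows essentially the same route as the paper, which does not give an explicit proof but defers to Lemma~5 of \cite{chandru-SB} (the two-timescale analogue); your three-step decomposition (uniform-in-parameter attraction for the limiting ODE, then $h^{(1)}_c\to h^{(1)}_\infty$ via Gronwall on a common compact, then frozen-to-time-varying inputs via another Gronwall) is exactly the argument used there, with the only change being that the single external input is replaced by the block $(x^{(2)},\dots,x^{(N)})$. Your identification of Step~1's uniformity-over-$\mathcal W$ as the one nontrivial point, and your compactness/converse-Lyapunov justification for it, match the treatment in \cite{BorkarBook} and \cite{chandru-SB}.
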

The next lemma uses the convergence result of $N$ scale iterates under the stability assumption of \textbf{(B.N.N+1)} and shows that the scaled iterates defined in \textbf{(D4)} converge.   
\begin{lemma}
    \label{fast_lemma}
    Under \textbf{(A:1)}-\textbf{(A:3)},
    \begin{enumerate}
        \item[(i)] For \(0\leq k \leq m(n+1)-m(n)\),
        \(\|\hat{X}^{(1)}(t(m(n)+k))\| \leq K^{(1)}\)  a.s. for some constant $K^{(1)}>0.$
        
        \item[(ii)] \(\lim_{n\rightarrow\infty}\sup_{t\in[T_n,T_{n+1}]}
        \|\hat{X}^{(1)}(t) - X^{(1)}_n(t)\| = 0 \mbox{ a.s. }\)
    \end{enumerate}
\end{lemma}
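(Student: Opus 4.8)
The plan is to adapt, on the fastest timescale $\{\alpha^{(1)}_n\}$, the scaling argument used for one-timescale stability in \cite{Borkar_Meyn} (Chapter~3 of \cite{BorkarBook}) and for two-timescale stability in \cite{chandru-SB}. Two elementary facts set the stage. First, since each $h^{(i)}$ is $L$-Lipschitz by \textbf{(A:1)}, the rescaled map $h^{(i)}_c=h^{(i)}(c\,\cdot)/c$ is again $L$-Lipschitz and $\|h^{(i)}_c(0)\|=\|h^{(i)}(0)\|/c\le\|h^{(i)}(0)\|$ for $c\ge1$, so $\{h^{(i)}_c:c\ge1\}$ is uniformly linearly bounded, $\|h^{(i)}_c(x)\|\le C_0(1+\|x\|)$. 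Second, by \textbf{(A:3)}(iii) the ratios $\alpha^{(j)}_k/\alpha^{(1)}_k$ are bounded, and since the blocks $[m(n),m(n+1))$ partition $\mathbb{N}$ with $\sum_{i=m(n)}^{m(n+1)-1}\alpha^{(1)}_i\in[T,T+\bar\alpha]$ ($\bar\alpha:=\sup_k\alpha^{(1)}_k<\infty$), one gets $\sum_{i=m(n)}^{m(n+1)-1}\alpha^{(j)}_i\le\big(\sup_{i\ge m(n)}\alpha^{(j)}_i/\alpha^{(1)}_i\big)(T+\bar\alpha)\to0$, $2\le j\le N$. Rewriting \textbf{(D4)}, the scaled iterates obey $\hat x^{(j)}_{k+1}=\hat x^{(j)}_k+\alpha^{(j)}_k\big(h^{(j)}_{r(n)}(\hat X^{(1)}_k)+M^{(j)}_{k+1}/r(n)\big)$ for every $j$, i.e.\ the scaled block recursion is itself a bona fide $N$-timescale recursion, satisfying \textbf{(A:1)}--\textbf{(A:3)} and (with the rescaled equilibria $\lambda^{(i)}(c\,\cdot)/c$) also \textbf{(C.N.i)}; its $\alpha^{(1)}$-timescale limiting ODE with the slower coordinates frozen is exactly \textbf{(D6)}, and the displayed $o(1)$ drift of the $j\ge2$ coordinates is consistent with $\dot x^{(j)}=0$ there. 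Consequently, once part~(i) supplies boundedness of the scaled iterates over a block, part~(ii) follows by the reasoning behind Theorem~\ref{Thm_with_stability} applied on that block --- equivalently, by the single-timescale ODE-tracking lemma (Lemma~1 of Chapter~2 of \cite{BorkarBook}).

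The point that requires care is controlling the martingale noise, because the normalising constant $r(n)$ is reset at every block so one cannot appeal to a single, globally convergent martingale. I would use a stopping-time truncation: fix $B>0$ (to be chosen) and let $\sigma_B$ be the first index at which $\|\hat X^{(1)}_k\|$ leaves the ball of radius $B$, and work with the truncated process. On it all coordinates are bounded by $B$, so \textbf{(A:2)} gives $\mathbb{E}[\|M^{(i)}_{k+1}/r(n)\|^2\mid\mathcal{F}_k]\le K^{(i)}\big(r(n)^{-2}+\sum_j\|\hat x^{(j)}_k\|^2\big)\le K^{(i)}(1+NB^2)$. Since the blocks partition $\mathbb{N}$, $\sum_n\sum_{i=m(n)}^{m(n+1)-1}(\alpha^{(\ell)}_i)^2=\sum_i(\alpha^{(\ell)}_i)^2<\infty$ for each $\ell$ by \textbf{(A:3)}(ii); so Doob's inequality applied block-by-block together with Borel--Cantelli shows that, almost surely, the total rescaled martingale contribution accumulated over the $n$-th block tends to $0$ as $n\to\infty$. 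Combined with the $o(1)$ bound on the $\epsilon^{(j)}$-contributions from the previous paragraph, for a.e.\ sample path there is an index $n_0$ past which the entire perturbation picked up over a block is smaller than any prescribed $\delta>0$.

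With this in hand I would run the discrete-to-continuous Gronwall comparison of Lemma~1 of Chapter~2 of \cite{BorkarBook} on each block, comparing $\hat X^{(1)}(t)$ with the trajectory $X^{(1)}_n(t)$ of \textbf{(D6)} that agrees with it at $T_n$; note $\|\hat X^{(1)}(T_n)\|\le1$ by the definition of $r(n)$ in \textbf{(D3)}. Driven by the uniformly linearly bounded $h_{r(n)}$ over an interval of length $\le T+\bar\alpha$ from a norm-$\le1$ start, $X^{(1)}_n$ satisfies $\sup_{t\in[T_n,T_{n+1}]}\|X^{(1)}_n(t)\|\le B_0$ for a \emph{deterministic} constant $B_0$. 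The Gronwall estimate for the truncated recursion then yields
\[
\sup_{t\in[T_n,T_{n+1}]}\big\|\hat X^{(1)}(t)-X^{(1)}_n(t)\big\|\le K_T\Big(O\big(\textstyle\sum_{i=m(n)}^{m(n+1)-1}(\alpha^{(1)}_i)^2\big)+(\text{martingale contribution})+(\epsilon^{(j)}\text{ contribution})\Big),
\]
with $K_T$ depending only on $T$; the first term is the standard discretisation error and the other two were just shown to vanish as $n\to\infty$. Choosing $B:=B_0+1$, for all $n\ge n_0$ (enlarging $n_0$ if needed) the right-hand side is $<1$, whence $\|\hat X^{(1)}(t)\|<B_0+1=B$ throughout the block and $\sigma_B$ is never triggered for $n\ge n_0$; the finitely many initial blocks contribute only an a.s.\ finite amount that is absorbed into the final constant. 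This gives simultaneously the uniform bound of part~(i) and the tracking of part~(ii).

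The main obstacle, then, is precisely the noise bookkeeping across rescaled blocks: the stopping-time truncation makes the per-block conditional second moments uniformly bounded, and the fact that the per-block sums of $(\alpha^{(\ell)}_i)^2$ add up to the finite total $\sum_i(\alpha^{(\ell)}_i)^2$ is what makes the Borel--Cantelli step go through. Beyond that, the argument is the $N$-fold analogue of the one- and two-timescale computations in \cite{Borkar_Meyn,chandru-SB}, with the extra coordinates $j\ge2$ kept in check by their $\alpha^{(j)}/\alpha^{(1)}\to0$ factors.
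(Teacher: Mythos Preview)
Your proposal is correct and follows essentially the same route the paper intends: the paper does not write out a proof but simply says the result ``follows in a similar manner as that of Lemmas~5 and~6 of \cite{chandru-SB}'', i.e.\ the two-timescale scaling/tracking argument of Lakshminarayanan--Bhatnagar (itself the adaptation of \cite{Borkar_Meyn}). Your sketch is precisely that adaptation to $N$ timescales --- uniform Lipschitz/linear growth of $h^{(i)}_c$, the $\alpha^{(j)}/\alpha^{(1)}\to0$ bound to kill the slower drifts, the stopping-time truncation plus block-wise Doob/Borel--Cantelli to control the rescaled martingale, and then the block-wise Gronwall comparison against \textbf{(D6)} --- and in fact supplies more detail than the paper does.
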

The proof of the above two Lemmas follows in a similar manner as that of Lemmas 5 and 6, respectively, of \cite{chandru-SB}. In particular, Lemma \ref{fast_lemma}(i) shows that along the timescale of $\{\alpha_n^{(1)}\}$, between instants $T_n$ and $T_{n+1}$, the norm of the scaled iterate can grow at most by a factor $K^{(1)}$ starting from $B^{1}(0,1)$. Next, Lemma \ref{fast_lemma}(ii) shows that the scaled iterate asymptotically tracks the ODE defined in \textbf{(D6)}.
The next lemma bounds $\|x^{(1)}_n\|$ in terms of $\|x^{(2)}_n\|,\ldots, \|x^{(N)}_n\|$. We define the linearly interpolated trajectories of the $N$ iterates as follows: $\bar{x}^{(j)}(t(n))={x}^{(j)}$ and $ \forall t \in [t(n), t(n+1)]$,
    \[\bar{x}^{(j)}(t) = x^{(j)}_{n} + \left(x^{(j)}_{n+1} - x^{(j)}_{n}\right)\frac{t-t^{(1)}(n)}{t^{(1)}(n+1) - t^{(1)}(n)}.\]

\begin{lemma}
\label{fast_main}
Under assumptions \textbf{(A:1)}-\textbf{(A:3)}, \textbf{(B.N.i)}$_{1\leq i\leq N}$ and \textbf{(C.N.i)}$_{1\leq i\leq N}$,
    \begin{enumerate}
        \item[(i)] For $n$ large, and $T = T_{\frac{1}{4}}$, if 
        \(\|\bar{x}^{(1)}(T_n)\| > C_{1}( \|\bar{x}^{(2)}(T_n)\| + \cdots + \|\bar{x}^{(N)}(T_n)\|)\), for some $C_1>0$
        then \(\|\bar{x}^{(1)}(T_{n+1})\|\leq \frac{3}{4}\|\bar{x}^{(1)}(T_n)\|\).
        
        \item[(ii)] \(\|\bar{x}^{(1)}(T_n)\| \leq C_{1}^{*}(\|\bar{x}^{(2)}(T_n)\|+\cdots+ \|\bar{x}^{(N)}(T_n)\|)\) a.s. for some $C_{1}^*\geq1$.
        
        \item[(iii)] \(\|x^{(1)}_n\| \leq K_{1}^*\big(\|x^{(2)}_n\|+\cdots+\|x^{(N)}_n\|\big), \mbox{ for some } K_{1}^*>0.\)
    \end{enumerate}
\end{lemma}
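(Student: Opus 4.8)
The plan is to mirror the faster-timescale stability argument of \cite{chandru-SB}, the one new feature being that $N-1$ quasi-static blocks must be tracked rather than one; parts (i) and (ii) are proved essentially in tandem, with (iii) a short interpolation afterwards. For (i), fix $\epsilon>0$ small and apply Lemma~\ref{fastest_aux} with every ball $W^{(j)}$ centred at the origin, obtaining $c_\epsilon,r_\epsilon,T_\epsilon$; choose $C_1$ so large that $1/C_1<r_\epsilon$ and $L_{\lambda^{(1)}_\infty}(N-1)/C_1<\epsilon$ (with $L_{\lambda^{(1)}_\infty}$ the Lipschitz constant of $\lambda^{(1)}_\infty$), and take $T=T_{1/4}$ in \textbf{(D2)}, which we may assume $\ge T_\epsilon$. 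If $\|\bar x^{(1)}(T_n)\|>C_1\sum_{j\ge 2}\|\bar x^{(j)}(T_n)\|$, then dividing by $c=r(n)$ gives $\|\hat x^{(1)}(T_n)\|\le 1$ and $\hat x^{(j)}(T_n)\in B^{j}(0,1/C_1)\subset B^{j}(0,r_\epsilon)$ for $j\ge 2$, since $r(n)\ge\|\bar X^{(1)}(T_n)\|\ge\|\bar x^{(1)}(T_n)\|$. By Lemma~\ref{fast_lemma}(ii) the scaled iterate tracks the ODE of \textbf{(D6)} up to an $o(1)$ error, and the $x^{(1)}$-component of that ODE, by Lemma~\ref{fastest_aux}, is within $2\epsilon$ of $\lambda^{(1)}_\infty$ of the (frozen, small) slower blocks, itself within $L_{\lambda^{(1)}_\infty}(N-1)/C_1<\epsilon$ of $\lambda^{(1)}_\infty(0,\dots,0)=0$. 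Hence $\|\hat x^{(1)}(T_{n+1})\|\le 3\epsilon+o(1)\le 1/8$ for $n$ large; since the hypothesis forces the first block to dominate, $r(n)$ is comparable to $\|\bar x^{(1)}(T_n)\|$ on these steps (this uses the running-max structure of \textbf{(D3)} together with the bounded growth from Lemma~\ref{fast_lemma}(i)), and unscaling gives $\|\bar x^{(1)}(T_{n+1})\|\le\tfrac{3}{4}\|\bar x^{(1)}(T_n)\|$.

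For (ii), suppose towards a contradiction that on a set of positive probability $\|\bar x^{(1)}(T_n)\|>C_1\sum_{j\ge 2}\|\bar x^{(j)}(T_n)\|$ for infinitely many $n$. Over one window $[T_n,T_{n+1}]$ of the $\{\alpha^{(1)}_k\}$-timescale the increment of $x^{(j)}$, $j\ge 2$, is $\sum_k(\alpha^{(j)}_k/\alpha^{(1)}_k)\alpha^{(1)}_k h^{(j)}(X_k)$; by \textbf{(A:3)}(iii) the ratios $\alpha^{(j)}_k/\alpha^{(1)}_k$ vanish and by the linear growth of $h^{(j)}$ (from \textbf{(A:1)}) this is $o\big(\sup_{m(n)\le k<m(n+1)}\|X_k\|\big)$, hence negligible relative to $\|\bar x^{(1)}(T_n)\|$ on these windows. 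Consequently each time $x^{(1)}$ leaves the cone $\{\|x^{(1)}\|\le C_1\sum_{j\ge 2}\|x^{(j)}\|\}$ it is pulled back by the factor $\tfrac{3}{4}$ of (i) while the slower blocks move only negligibly, so $\|\bar x^{(1)}(T_n)\|$ cannot remain above $C_1^{*}\sum_{j\ge 2}\|\bar x^{(j)}(T_n)\|$ for a suitable multiple $C_1^{*}$ of $C_1$, a contradiction; equivalently, as in \cite{chandru-SB}, the running maximum $r(n)$ along this subsequence is at most a fixed multiple of $\sum_{j\ge 2}\|\bar x^{(j)}(T_n)\|$, which is (ii).

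Part (iii) is an interpolation within a window: for $m(n)\le k<m(n+1)$, Lemma~\ref{fast_lemma}(i) gives $\|X_k\|\le K^{(1)}r(n)$, and by the same drift estimate each $\|x^{(j)}_k\|$ with $j\ge 2$ stays within a fixed factor of $\|x^{(j)}_{m(n)}\|$ across the window; combining these with the bound of (ii) on $\|\bar x^{(1)}(T_n)\|$ (hence on $\|\bar X^{(1)}(T_n)\|$ and on $r(n)$) in terms of $\sum_{j\ge 2}\|\bar x^{(j)}(T_n)\|$ yields $\|x^{(1)}_k\|\le K_1^{*}\big(\|x^{(2)}_k\|+\cdots+\|x^{(N)}_k\|\big)$ for every $k$ and some $K_1^{*}>0$.

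The delicate point is (ii). In the one-timescale case of \cite{Borkar_Meyn} the rescaling sequence $r(n)$ is eventually constant because $\dot x=h_\infty(x)$ has the origin as an \emph{unconditional} global attractor, whereas here the contraction of the $x^{(1)}$-block in (i) is only \emph{conditional} — it presumes that $x^{(1)}$ dominates the other, a priori unbounded, blocks — so the core bookkeeping is to confirm that $r(n)$ is comparable to $\|\bar x^{(1)}(T_n)\|$ exactly on the steps where this domination holds, and that this conditional contraction together with the $o(1)$ cross-timescale drift of \textbf{(A:3)}(iii) confines $\|x^{(1)}_n\|$ to a fixed cone around the span of the slower iterates. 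It is precisely this cone bound that subsequently lets the stability analysis be pushed to the $\{\alpha^{(2)}_n\}$-timescale and, inductively, down to the slowest one.
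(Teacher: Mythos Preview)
Your approach is the same as the paper's (both follow \cite{chandru-SB}), but you gloss over the one step that carries the real content in (i). To pass from $\|\hat x^{(1)}(T_{n+1}^-)\|\le 1/8$ back to a bound on $\|\bar x^{(1)}(T_{n+1})\|/\|\bar x^{(1)}(T_n)\|$ you need $\|\hat x^{(1)}(T_n)\|$ bounded \emph{below}, not just $\le 1$; your phrase ``$r(n)$ is comparable to $\|\bar x^{(1)}(T_n)\|$'' is exactly this claim, but it does not follow from the running-max structure alone---if $r(n)=r(n-1)$ were inherited from a much earlier large value, $\|\hat x^{(1)}(T_n)\|$ could be tiny even under your domination hypothesis. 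The paper resolves this by explicitly restricting to the blow-up case $r(n)=\|\bar X^{(1)}(T_n)\|$ (the only case relevant for (ii) anyway) and then computing directly, via a $p$-norm estimate using the domination hypothesis, that $\|\hat x^{(1)}(T_n)\|>1/(1+1/C_1)$. With this lower bound the ratio identity $\|\bar x^{(1)}(T_{n+1})\|/\|\bar x^{(1)}(T_n)\|=\|\hat x^{(1)}(T_{n+1}^-)\|/\|\hat x^{(1)}(T_n)\|$ gives the factor $\tfrac12(1+1/C_1)\le 3/4$ once $C_1>2$.

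For (ii), your contraction-plus-slow-drift sketch has the right intuition but omits the piece that actually closes the argument: you must also know that the ratio cannot \emph{jump} by more than a fixed factor over one window, and this is precisely Lemma~\ref{fast_lemma}(i). The paper runs the contradiction this way: assume the ratio is unbounded along a subsequence $\{n_k\}$; by (i) it must have been $\le C_1$ at some $n_k'-1$ with $n_{k-1}\le n_k'\le n_k$, yet at $n_k'$ it exceeds $C_{n_k}\to\infty$, contradicting the $K^{(1)}$-bounded growth over a single window. Your ``equivalently, as in \cite{chandru-SB}'' is deferring to exactly this; once unpacked, you recover the paper's proof. Part (iii) in both treatments is then immediate from (ii) and Lemma~\ref{fast_lemma}(i) with $K_1^*=C_1^*K^{(1)}$.
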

\emph{Proof.} 
(i)
We have \(\|\bar{x}^{(1)}(T_n)\| > C_{1}(1 + \|\bar{x}^{(2)}(T_n)\| + \cdots + \|\bar{x}^{(N)}(T_n)\|).\) Since, \(r(n) = \max(r(n-1),\|\bar{X}^{(1)}(T_n)\|,1)\),  \(r(n)\geq\|\bar{X}^{(1)}(T_n)\|\). Therefore, $r(n)\geq C_1$. Next we show \(\|\hat{x}^{(j)}(T_n)\|<\frac{1}{C_1},\;\forall\; 2\leq j \leq N.\)
    \begin{align*}
            \text{For $p\geq1$, } \;\; \|\hat{x}^{(j)}(T_n)\|_p = \frac{\|\bar{x}^{(j)}(T_n)\|_p}{r(n)}\leq\frac{\|\bar{x}^{(j)}(T_n)\|_p}{\|\bar{X}^{(1)}(T_n)\|_p}
            =\frac{\|\bar{x}^{(j)}(T_n)\|_p}{\Big(\|\bar{x}^{(1)}(T_n)\|_p^p +\cdots+ \|\bar{x}^{(N)}(T_n)\|_p^p \Big)^{\frac{1}{p}}}.
    \end{align*}
    Since, \(\|\bar{x}^{(1)}(T_n)\|_p \geq C_1(1 + \|\bar{x}^{(2)}(T_n)\|_p +\cdots+\|\bar{x}^{(N)}(T_n)\|_p)\),
    \begin{align*}
            \|\bar{x}^{(1)}(T_n)\|_p^p \geq C_1^p\Big(\|\bar{x}^{(2)}(T_n)\|_p +\cdots+ \|\bar{x}^{(N)}(T_n)\|_p\Big)^p\geq C_1^p\Big(\|\bar{x}^{(2)}(T_n)\|_p^p + \cdots+\|\bar{x}^{(N)}(T_n)\|_p^p\Big),\\
            \text{therefore, }\;\; \|\hat{x}^{(2)}(T_n)\| \leq\frac{\|\bar{x}^{(2)}(T_n)\|_p}{\Big(C_1^p+1\Big)^\frac{1}{p}\Big(\|\bar{x}^{(2)}(T_n)\|_p^p+\cdots+ \|\bar{x}^{(N)}(T_n)\|_p^p\Big)^{\frac{1}{p}}}\leq \frac{1}{\Big(1+C_1^p\Big)^{\frac{1}{p}}} < \frac{1}{C_1}.
     \end{align*}
The second inequality follows from the fact that \(\|\bar{x}^{(2)}(T_n)\|_p^p \leq \|\bar{x}^{(2)}(T_n)\|_p^p + \cdots+\|\bar{x}^{(2)}(T_n)\|_p^p\). A similar analysis can be carried out to show that $\|\bar{x}^{(j)}(T_n)\|_p^p< \frac{1}{C_1}, 1\leq j \leq N$. Next we show that 
\(\|\hat{x}^{(1)}(T_n)\|_p>\frac{1}{1+\frac{1}{C_1}}.\)
Here we focus on the case when iterates are blowing up. Therefore let $r(n) = \bar{X}^{(1)}(T_n)$. Then,
\begin{align*}
    \|\hat{x}(T_n)\| &= \frac{\|\bar{x}^{(1)}(T_n)\|}{\|\bar{X}^{(1)}(T_n)\|} =\frac{\|\bar{x}^{(1)}(T_n)\|}{\Big(\|\bar{x}^{(1)}(T_n)\|_p^p +\cdots+ \|\bar{x}^{(N)}(T_n)\|_p^p\Big)^{\frac{1}{p}}}\\
    &= \frac{1}{\Big(1+\frac{\|\bar{x}^{(2)}(T_n)\|_p^p+\cdots+\|\bar{x}^{(N)}(T_n)\|_p^p}{\|\bar{x}^{(1)}(T_n)\|_p^p}\Big)^{\frac{1}{p}}}> \frac{1}{\Big(1+\frac{\|\bar{x}^{(2)}(T_n)\|_p^p+\cdots+\|\bar{x}^{(N)}(T_n)\|_p^p}{C_1^p(\|\bar{x}^{(2)}(T_n)\|_p^p+\cdots+\|\bar{x}^{(N)}(T_n)\|_p^p)}\Big)^{\frac{1}{p}}}> \frac{1}{1+\frac{1}{C_1}}.
\end{align*}
Let $\tilde{x}^{(j)}(t-T_n) = x^{(j)}_n(t)$, $\forall 2\leq j \leq N$, $\forall t \in [T_n, T_{n+1}]$. From Lemma \ref{fastest_aux}, $\exists r_{\frac{1}{4}}, c_{\frac{1}{4}}, T_{\frac{1}{4}} >0$ such that 
\[\|x_{c}^{(1),\tilde{x}^{(2)}(t),\ldots,\tilde{x}^{(N)}(t)}(t,\hat{x}^{(1)}(T_n))\|\leq \frac{1}{4}, \forall t \geq T_{\frac{1}{4}}, \forall c \geq c_{\frac{1}{4}},\]
whenever $\tilde{x}^{(j)}(t) \in B^j(0,r_{\frac{1}{4}})$. Choose $C_1 > \max(c_{\frac{1}{4}},\frac{2}{r_{\frac{1}{4}}})$ and $T = T_{\frac{1}{4}}$. Since $\dot{x}^{(j)}(t) = 0, \forall 2 \leq j \leq N$ for the ODE defined in \textbf{(D6)}, $\tilde{x}^{(j)}(t-T_n) = x^{(j)}_n(t) = \hat{x}^{(j)}(T_n), 2 \leq j \leq N$, $\forall t \in [T_{n},T_{n+1}].$ From $\|\hat{x}^{(j)}(T_n)\| < \frac{1}{C_1}, 2\leq j \leq N$, it follows that $\tilde{x}^{(j)}(s) \in B^{j}(0,r_{\frac{1}{4}}), 2 \leq j \leq N$, $\forall s \in [0,T].$ Using Lemma \ref{fast_lemma}(ii), $\|\hat{x}^{(1)}(T_{n+1}^{-}) - x^{(1)}_{n}(T_{n+1})\| < \frac{1}{4}$ for large enough $n$. Also observe that $\|x^{(1)}_{n}(T_{n+1})\| = \|x^{(1),\tilde{x}^{(2)}(t),\ldots,\tilde{x}^{(N)}(t)}_{r(n)}(T_{n+1} - T_{n}, \hat{x}^{(1)}(T_{n}))\| \leq \frac{1}{4}$. Using these, we have $\|\hat{x}^{(1)}(T_{n+1}^{-})\| \leq \|\hat{x}^{(1)}(T_{n+1}^{-})-\bar{x}^{(1)}_{n}(T_{n+1})\| + \|\bar{x}^{(1)}_{n}(T_{n+1})\| \leq \frac{1}{2}$, where $\hat{x}^{(1)}(T_{n+1}^{-}) \stackrel{\triangle}{=} \bar{x}^{(1)}(T_{n+1})/r(n)$. Finally, since \[\frac{\|\bar{x}^{(1)}(T_{n+1})\|}{\|\bar{x}^{(1)}(T_{n})\|} = \frac{\|\hat{x}^{(1)}(T_{n+1}^{-})\|}{\|\hat{x}^{(1)}(T_n)\|},\] 
\begin{align*}
        \text{we have, }\qquad\|\bar{x}^{(1)}(T_{n+1})\| = \frac{\|\hat{x}^{(1)}(T_{n+1}^{-})\|}{\|\hat{x}^{(1)}(T_n)\|} \|\bar{x}^{(1)}(T_n)\|< \frac{\frac{1}{2}}{\frac{1}{1+1/C_1}}\|\bar{x}^{(1)}(T_n)\|.
\end{align*}
Choosing $C_{1} > \max\left(c_{\frac{1}{4}},\frac{2}{r_{\frac{1}{4}}}\right) > 2$, proves the claim.

The claims in (ii) and (iii) follow in a similar manner as Lemma 6(ii)-(iii) of \cite{chandru-SB}. We repeat the arguments for the sake of completeness. 

(ii) We will prove the claim by contradiction. Suppose there exists a monotonically increasing sequence $\{n_{k}\}$ such that $C_{n_{k}}$ $\uparrow \infty$ as $k\rightarrow\infty$ and \(\|\bar{x}^{(1)}(T_n)\| \geq C_{n_{k}}(\|\bar{x}^{(2)}(T_n)\|+\cdots+ \|\bar{x}^{(N)}(T_n)\|)\), on a set of positive probability. From Lemma \ref{fast_lemma} (i), we know that if $\|\bar{x}^{(1)}(T_n)\| > C_{1}(\|\bar{x}^{(2)}(T_n)\|+\cdots+ \|\bar{x}^{(N)}(T_n)\|)$, then $\|\bar{x}^{(1)}(T_k)\|$ falls at an exponential rate into a ball of radius $C_{1}(\|\bar{x}^{(2)}(T_k)\|+\cdots+ \|\bar{x}^{(N)}(T_k)\|)$ for $k\geq n$. Therefore, corresponding to the sequence $\{n_{k}\}$, there exists another sequence $\{n_k'\}$ such that $\|\bar{x}^{(1)}(T_{n_{k-1}'})\| \leq C_1(\|\bar{x}^{(2)}(T_{n_{k}'})\|+\cdots+ \|\bar{x}^{(N)}(T_{n_{k}'})\|)$ for $n_{k-1} \leq n_{k}' \leq n_{k}$, but $\|\bar{x}^{(1)}(T_{n_{k}'})\| > C_{n_{k}}(\|\bar{x}^{(2)}(T_{n_{k}'})\|+\cdots+ \|\bar{x}^{(N)}(T_{n_{k}'})\|)$. From Lemma \ref{fast_lemma} (i), however, we know that the iterates $\bar{x}^{(1)}$ can only grow by a factor of $K^{(1)}$ between $m(n_{k}' - 1)$ and $m(n_{k}')$, leading to a contradiction. Therefore, \(\|\bar{x}^{(1)}(T_n)\| \leq C_{1}^{*}(\|\bar{x}^{(2)}(T_n)\|+\cdots+ \|\bar{x}^{(N)}(T_n)\|)\) a.s. for some $C_{1}^*\geq1$.

(iii) From Lemma \ref{fast_lemma}(ii),  we know that $\forall t \in [T_n,T_{n+1})$, $\|\bar{x}^{(1)}(t)\| \leq K^{(1)}\|\bar{x}^{(1)}(T_n)\|$. Since, $\bar{x}^{(1)}(t)$ is a linear interpolation of the iterates $x^{(1)}_{n}$, therefore, $\|x^{(1)}_{n}\| \leq \sup_{t\in[t(n),t(n+1)]}\|\bar{x}^{(1)}(t)\|$. The claim therefore follows by choosing $K_{1}^{*} = C_{1}^{*}K^{(1)}$. 
\hfill $\square$

Next we consider the intermediate timescales of $\{\alpha^{(l)}_{n}\}_{n\geq 1}, 2\leq l \leq N-1$, and re-define the terms below. Note the abuse of notation here when defining terms such as $T_n$, $r(n)$, $m(n)$, $\hat{M}^{(l)}_{n+1}$ etc., below.
\begin{enumerate}
    \item[\bf{(E1)}] Define
    \[t^{(l)}(n) = \sum_{i=0}^{n-1}\alpha^{(l)}_i,n\geq1 \mbox{ with } t^{(l)}(0) = 0.\]
    Recall that $X_n = (x^{(1)}_n,x^{(2)}_n,\ldots,x^{(N)}_n), n\geq0$.
     For $t \in [t^{(l)}(n), t^{(l)}(n+1)]$, $l=2,\ldots, N$, define
    \[\bar{X}^{(l)}(t) = X_{n} + \left(X_{n+1} - X_{n}\right)\frac{t-t^{(l)}(n)}{t^{(l)}(n+1) - t^{(l)}(n)}.\]
    
    \item[\bf{(E2)}]Given $t^{(l)}(n), n\geq 0$ and a constant $T>0$ define
    \[T_0 = 0,\quad T_n = \min(t^{(l)}(m):t^{(l)}(m)\geq T_{n-1}+T), n\geq1\]
    One can find a subsequence $\{m(n)\}$ such that $T_n = t^{(l)}(m(n))$ $\forall n$, and $m(n) \rightarrow \infty$ as $n\rightarrow \infty$.
    
    \item[\bf{(E3)}] The scaling sequence is defined as:
    \[r(n) = \max(r(n-1)\\|\bar{X}^{(l)}(T_n)\|,1), n\geq1\]
    
    \item[\bf{(E4)}] The scaled iterates for $m(n)\leq k\leq m(n+1)-1$ are defined by:
    \[\hat{x}^{(1)}_{k} = \frac{x_{k}^{(1)}}{r(n)}, \hat{x}^{(2)}_{k} = \frac{x^{(2)}_{k}}{r(n)},\ldots, \hat{x}^{(N)}_{k} = \frac{x_{k}^{(N)}}{r(n)}.\]
Further,    
    \[\hat{x}^{(1)}_{k+1} = \hat{x}^{(1)}_{k} + \alpha^{(1)}_k\left(\frac{h^{(1)}(c\hat{x}^{(1)}_{k},\ldots,c\hat{x}^{(N)}_{k})}{c} + \hat{M}_{k+1}^{(1)}\right)\]
    \[\hat{x}^{(2)}_{k+1} = \hat{x}^{(2)}_{k} + \alpha^{(2)}_k\left(\frac{h^{(2)}(c\hat{x}^{(1)}_{k},\ldots,c\hat{x}^{(N)}_{k})}{c} + \hat{M}_{k+1}^{(2)}\right)\]
    \[\longvdots{2em}\]
    \[\hat{x}^{(l)}_{k+1} = \hat{x}^{(l)}_{k} + \alpha^{(l)}_k\left(\frac{h^{(l)}(c\hat{x}^{(1)}_{k},\ldots,c\hat{x}^{(N)}_{k})}{c} + \hat{M}_{k+1}^{(l)}\right)\]
    \[\hat{x}^{(l+1)}_{k+1} = \hat{x}^{(l+1)}_{k} + \alpha^{(l)}_k\left(\epsilon_{k}^{(l+1)} + \hat{M}_{k+1}^{(l+1)}\right)\]
    \[\longvdots{2em}\]
    \[\hat{x}^{(N)}_{k+1} = \hat{x}^{(N)}_{k} + \alpha^{(l)}_k\left(\epsilon_{k}^{(N)} + \hat{M}_{k+1}^{(N)}\right)\]
    where, $c = r(n)$, and $\forall l\leq j\leq N-1$,
    \[\epsilon_{k}^{(j)} = \frac{\alpha^{(j)}_{k}}{\alpha^{(l)}_{k}}\frac{h^{(j)}(c\hat{x}^{(1)}_k,\ldots,c\hat{x}^{(N)}_k)}{c},\]
    \[\hat{M}_{k+1}^{(j)} =  \frac{M_{k+1}^{(j)}}{r(n)},\mbox{ \quad for } 1\leq j\leq l, \qquad \hat{M}_{k+1}^{(j)} = \frac{\alpha^{(j)}_k}{\alpha^{(l)}_k}\frac{M_{k+1}^{(j)}}{r(n)},\mbox{ \quad for } l+1\leq j\leq N-1.\]
    
    \item[\bf{(E5)}] Next, we define the linearly interpolated trajectory for the scaled iterates $\hat{X}^{(l)}_n=(x^{(1)}_{n},x^{(2)}_{n},\ldots,x^{(N)}_{n})$ for $t \in [t^{(l)}(n), t^{(l)}(n+1)]$ as follows:
    \[\hat{X}^{(l)}(t) = \hat{X}^{(l)}_n + (\hat{X}^{(l)}_{n+1} - \hat{X}^{(l)}_n)\frac{t - t(n)}{t(n+1) - t(n)}.\]
    
    \item[\bf{(E6)}] Let \(X^{(l)}_{n}(t) = (x^{(1)}_n(t),\ldots,x^{(N)}_n(t)), t\in [T_n,T_{n+1}]\) denote the trajectory of the ODE:
    \[\dot{x}^{(1)}(t) = h^{(1)}_{r(n)}(x^{(1)}(t),\ldots,x^{(N)}(t)),\]
    \[\dot{x}^{(2)}(t) = h^{(2)}_{r(n)}(x^{(2)}(t),\ldots,x^{(N)}(t)),\]
    \[\longvdots{2em}\]
    \[\dot{x}^{(l)}(t) = h^{(l)}_{r(n)}(x^{(l)}(t),\ldots,x^{(N)}(t)),\]
    \[\dot{x}^{(l+1)}(t) = 0\]
    \[\longvdots{2em}\]
    \[\dot{x}^{(N)}(t) = 0\]
    with $x^{(j)}_n(T_n) = \hat{x}^{(j)}(T_n)$, $\forall 1 \leq j\leq N$. We refer the reader to Assumption {\bf (B.N.i)} for the definition of $h^{(i)}_{r(n)}(x^{(i)}(t),\ldots,x^{(N)}(t))$, $i=1,\ldots,N$.
\end{enumerate}  
\begin{lemma}
    \label{intermediate_aux}
        Let $x^{(l)} \in B^{l}(0,1) \subset \mathbb{R}^{d_l},x^{(j)} \in W^{(j)}\subset\mathbb{R}^{d_j}, l+1\leq j \leq N$ and let \textbf{(B.N.l)} hold. Then given $\epsilon>0, \exists c_{\epsilon}\geq 1, r_{\epsilon}>0$ and $T_{\epsilon}>0$ such that for any external inputs satisfying \(\tilde{x}^{(j)}(s) \in B^{j}(x^{(j)},r_{\epsilon}), l+1\leq j \leq N-1\), $\forall s \in [0,T], \forall c > c_{\epsilon},t\geq T_{\epsilon}$, 
        \[\|x_c^{(l),\tilde{x}^{(l+1)}(t),\ldots,\tilde{x}^{(N)}(s)}(t,x^{(l)}) - \lambda_{\infty}^{(l)}(x^{(l+1)},\ldots,x^{(N)})\| \leq 2\epsilon,\]
with ${\displaystyle x_c^{(l),\tilde{x}^{(l+1)}(t),\ldots,\tilde{x}^{(N)}(s)}(t,x^{(l)})}$ defined analogously as in Lemma~\ref{fastest_aux}.
\end{lemma}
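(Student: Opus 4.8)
The plan is to prove Lemma~\ref{intermediate_aux} by the same three-step argument used for Lemma~\ref{fastest_aux}, now applied to the $l$-th scaled recursion, in which the faster components $x^{(1)},\ldots,x^{(l-1)}$ have been ``substituted out'' through the maps $\lambda_\infty^{(1)},\ldots,\lambda_\infty^{(l-1)}$ exactly as prescribed by \textbf{(B.N.l)}. Write $y(\cdot)$ for the trajectory $x_c^{(l),\tilde{x}^{(l+1)}(\cdot),\ldots,\tilde{x}^{(N)}(\cdot)}(\cdot,x^{(l)})$, i.e. the solution of $\dot{y}(t)=h^{(l)}_{c}\big(y(t),\tilde{x}^{(l+1)}(t),\ldots,\tilde{x}^{(N)}(t)\big)$ started at $x^{(l)}$ in the (closed) unit ball, and let $z(\cdot)$ be the solution of the frozen-parameter limiting ODE $\dot{z}(t)=h^{(l)}_{\infty}\big(z(t),x^{(l+1)},\ldots,x^{(N)}\big)$ with the parameters held at the ball centres and $z(0)=x^{(l)}$.

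First I would record the behaviour of the limiting flow. By \textbf{(B.N.l)}(i), the ODE for $z$ has $\lambda_\infty^{(l)}(x^{(l+1)},\ldots,x^{(N)})$ as its globally asymptotically stable equilibrium. Combining the uniform finite-time entry of trajectories into a small neighbourhood of this equilibrium (a compactness/covering argument using continuity of the flow) with Lyapunov stability of that neighbourhood, one obtains one common time $T_\epsilon$ --- uniform over initial conditions in the unit ball, and over the centres if these are restricted to bounded sets $W^{(j)}$ --- for which $\|z(t)-\lambda_\infty^{(l)}(x^{(l+1)},\ldots,x^{(N)})\|\le\epsilon$ for all $t\ge T_\epsilon$; in particular every such $z(\cdot)$ remains in a fixed compact set $\mathcal{K}$.

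Next I would transfer this to the scaled, time-varying-input flow by a Gronwall estimate over the interval on which the inputs are controlled: $\|y(t)-z(t)\|\le\int_0^{t}\|h^{(l)}_c(y(s),\tilde{x}(s))-h^{(l)}_\infty(z(s),x)\|\,ds$, with the integrand split as (a)~$\|h^{(l)}_c(y,\tilde{x})-h^{(l)}_c(z,\tilde{x})\|$, (b)~$\|h^{(l)}_c(z,\tilde{x})-h^{(l)}_c(z,x)\|$, and (c)~$\|h^{(l)}_c(z,x)-h^{(l)}_\infty(z,x)\|$, where $\tilde{x}$ and $x$ abbreviate the tuples of time-varying inputs and of their centres. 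Since $h^{(l)}_c(\cdot)=\tfrac{1}{c}\,h^{(l)}\big(c\,\Phi(\cdot)\big)$, where $\Phi$ maps $(x^{(l)},\ldots,x^{(N)})$ to the tuple built from compositions of the Lipschitz maps $\lambda_\infty^{(1)},\ldots,\lambda_\infty^{(l-1)}$ and identities, the function $h^{(l)}_c$ is Lipschitz with a constant $L'$ \emph{independent of $c\ge1$} (the factor $c$ produced when one Lipschitz-estimates $h^{(l)}$ at the argument $c\,\Phi(\cdot)$ cancels the prefactor $1/c$): hence (a)~$\le L'\|y-z\|$ and (b)~$\le L'\sum_{j\ge l+1}\|\tilde{x}^{(j)}(s)-x^{(j)}\|\le C r_\epsilon$, while (c)~$\le\eta$ for $c>c_\epsilon$ by the uniform-on-compacts convergence $h^{(l)}_c\to h^{(l)}_\infty$ of \textbf{(B.N.l)}, applied on a compact neighbourhood of $\mathcal{K}$. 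Gronwall's inequality then gives $\sup_{t\le T_\epsilon}\|y(t)-z(t)\|\le\big(C r_\epsilon+\eta\big)\tfrac{e^{L'T_\epsilon}-1}{L'}$, which is $\le\epsilon$ once $r_\epsilon$ is taken small and $c_\epsilon$ large; a short bootstrap confirms $y(\cdot)$ does not leave the compact set on which these bounds were taken. The triangle inequality with the previous paragraph then yields $\|y(t)-\lambda_\infty^{(l)}(x^{(l+1)},\ldots,x^{(N)})\|\le2\epsilon$ for $t\ge T_\epsilon$ within the controlled window, which is the assertion.

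The hard part is the first step: upgrading the pointwise hypothesis that $\lambda_\infty^{(l)}(x^{(l+1)},\ldots,x^{(N)})$ is globally asymptotically stable to the uniform statement that every trajectory issued from the unit ball falls within $\epsilon$ of the equilibrium after one common time $T_\epsilon$; this needs a uniform finite-time entry estimate \emph{together with} Lyapunov stability of the target neighbourhood, and is exactly where the argument mirrors --- rather than merely copies --- Lemma~5 of \cite{chandru-SB}. The remaining bookkeeping (the $c$-uniform Lipschitz constant for $h^{(l)}_c$, a~priori confinement of $y(\cdot)$ and $z(\cdot)$ to a fixed compact set, and matching the input-control horizon $T$ with $T_\epsilon$) is routine, exactly as in Lemmas~5--6 of \cite{chandru-SB}.
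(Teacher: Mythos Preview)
Your proposal is correct and follows essentially the same approach as the paper, which does not give an independent proof of this lemma but defers to Lemma~5 of \cite{chandru-SB}; your three-step scheme (uniform-in-initial-condition convergence time $T_\epsilon$ for the frozen limiting flow via GAS plus compactness, a $c$-uniform Lipschitz/Gronwall comparison between $y(\cdot)$ and $z(\cdot)$ using the cancellation of $c$ in the Lipschitz constant of $h^{(l)}_c$, and a triangle-inequality conclusion) is precisely that argument transported to the $l$-th level. The only point worth noting is that your Gronwall bound on $\|y-z\|$ should be stated over the full control window $[0,T]$ rather than just $[0,T_\epsilon]$, so that combined with $\|z(t)-\lambda^{(l)}_\infty\|\le\epsilon$ for all $t\ge T_\epsilon$ the conclusion holds on $[T_\epsilon,T]$; this is implicit in your sketch and is exactly how the lemma is invoked later with $T=T_{\epsilon/4}$.
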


\begin{lemma}
    \label{intermediate_lemma}
    Under \textbf{(A:1)}-\textbf{(A:3)},
    \begin{enumerate}
        \item[(i)] For \(0\leq k \leq m(n+1)-m(n)\), and $2 \leq l \leq N-1$
        \(\|\hat{X}^{(l)}(t(m(n)+k))\| \leq K^{(l)}\)  a.s. for some constant $K^{(l)}>0.$
        
        \item[(ii)] We have,
            \(\lim_{n\rightarrow\infty} \sup_{t\in [T_n,T_{n+1}]}
            \|\hat{X}^{(l)}(t) - X^{(l)}_n(t)\| = 0 \mbox{ a.s. } \)

    \end{enumerate}
\end{lemma}
The proof of the above two Lemmas follows in a similar manner as the proof of Lemmas 5 and 9, respectively, of \cite{chandru-SB}.
\begin{lemma}
    \label{intermediate_main}
    Assume \textbf{(A:1)}-\textbf{(A:3)}, \textbf{(B.N.i)}$_{1\leq i \leq N}$ and \textbf{(C.N.i)}$_{1\leq i \leq N}$ hold. Then,
    \begin{enumerate}
        \item[(i)] For $n$ large enough, there exists $T>0$, such that if $\|\bar{x}^{(l)}(T_n)\| > C_{l}(\|\bar{x}^{(l+1)}(T_n)\|+\cdots+\|\bar{x}^{(N)}(T_n)\|)$, for some $C_l>0$, then $\|\bar{x}^{(l)}(T_{n+1})\| < \frac{3}{4}\|\bar{x}^{(l)}(T_n)\|$.
        \item[(ii)] $\|\bar{x}^{(l)}(T_n)\| \leq C_{l}^*\left(\|\bar{x}^{(l+1)}(T_n)\|+\cdots+\|\bar{x}^{(N)}(T_n)\|\right)$, for some $C_l^*>1$.
        \item[(iii)] $\|x^{(l)}_n\| \leq K_l^*( \|x^{(l+1)}_n\|+\cdots+\|x^{(N)}_n\|)$, for some $K_l^*>0$.
    \end{enumerate}
\end{lemma}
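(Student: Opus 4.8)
The plan is to transport, one timescale down, the three‑step scheme already used for $l=1$ in Lemma~\ref{fast_main}, and to organise the whole argument as an induction on $l=1,\dots,N-1$ with base case $l=1$. Thus, when treating level $l$ we may assume the conclusion of part~(iii) at every faster level, i.e.\ $\|x^{(k)}_n\|\le K_k^*\bigl(\|x^{(k+1)}_n\|+\cdots+\|x^{(N)}_n\|\bigr)$ for all $1\le k<l$. As before, I would pass to the blocks $[T_n,T_{n+1}]$ of the $l$-th timescale (\textbf{(E2)}), rescale by $r(n)$ as in~\textbf{(E3)}--\textbf{(E4)}, and compare with the ODE of~\textbf{(E6)}; its $x^{(l)}$-component decouples from the faster ones and is precisely the self-contained ODE $\dot x^{(l)}=h^{(l)}_{r(n)}\bigl(x^{(l)},x^{(l+1)},\dots,x^{(N)}\bigr)$ with frozen inputs $x^{(l+1)},\dots,x^{(N)}$ that Lemma~\ref{intermediate_aux} governs.

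For part~(i), assume $\|\bar x^{(l)}(T_n)\|>C_l\bigl(\|\bar x^{(l+1)}(T_n)\|+\cdots+\|\bar x^{(N)}(T_n)\|\bigr)$. Then $r(n)\ge\|\bar X^{(l)}(T_n)\|\ge C_l$, and the same norm manipulation as in Lemma~\ref{fast_main}(i) gives $\|\hat x^{(j)}(T_n)\|<1/C_l$ for $l+1\le j\le N$. The one point genuinely new at intermediate timescales is the lower bound on $\|\hat x^{(l)}(T_n)\|$: because $\bar X^{(l)}$ also carries the faster components, and these are only $O(\|\bar x^{(l)}(T_n)\|)$ by the inductive hypothesis (chained downward from $k=l-1$, using that under the present assumption $\|\bar x^{(l+1)}(T_n)\|+\cdots+\|\bar x^{(N)}(T_n)\|<\|\bar x^{(l)}(T_n)\|/C_l$), one gets $\|\bar x^{(k)}(T_n)\|\le\gamma_k\|\bar x^{(l)}(T_n)\|$ for $k<l$ with $\gamma_k$ depending only on $K_1^*,\dots,K_{l-1}^*$; hence $\|\hat x^{(l)}(T_n)\|\ge\delta_l$ for a fixed $\delta_l\in(0,1]$ rather than being close to $1$ as in the fast case. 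I would then apply Lemma~\ref{intermediate_aux} with $\epsilon=\epsilon_l<\delta_l/8$ to obtain $c_{\epsilon_l},r_{\epsilon_l},T_{\epsilon_l}$; choosing $C_l>\max(c_{\epsilon_l},2/r_{\epsilon_l})$ and $T=T_{\epsilon_l}$ forces the frozen inputs of~\textbf{(E6)} into $B^j(0,r_{\epsilon_l})$, and since $\lambda^{(l)}_\infty(0,\dots,0)=0$ by~\textbf{(B.N.i)} this gives $\|x^{(l)}_n(T_{n+1})\|\le2\epsilon_l<\delta_l/4$. Lemma~\ref{intermediate_lemma}(ii) gives $\|\hat x^{(l)}(T_{n+1}^{-})-x^{(l)}_n(T_{n+1})\|<\delta_l/4$ for $n$ large, so $\|\hat x^{(l)}(T_{n+1}^{-})\|<\delta_l/2$, and since $\|\bar x^{(l)}(T_{n+1})\|/\|\bar x^{(l)}(T_n)\|=\|\hat x^{(l)}(T_{n+1}^{-})\|/\|\hat x^{(l)}(T_n)\|<(\delta_l/2)/\delta_l<3/4$, part~(i) follows.

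Parts~(ii) and~(iii) then go through as in Lemma~\ref{fast_main}(ii)--(iii) (the two-timescale precursor being Lemma~6 of \cite{chandru-SB}). For~(ii): if along a subsequence $n_k$ one had $\|\bar x^{(l)}(T_{n_k})\|\ge C_{n_k}\bigl(\|\bar x^{(l+1)}(T_{n_k})\|+\cdots+\|\bar x^{(N)}(T_{n_k})\|\bigr)$ with $C_{n_k}\uparrow\infty$ on a set of positive probability, part~(i) would drive $\|\bar x^{(l)}\|$ geometrically back into the ball of radius $C_l\bigl(\|\bar x^{(l+1)}\|+\cdots+\|\bar x^{(N)}\|\bigr)$, while Lemma~\ref{intermediate_lemma}(i) caps its growth over a single block by $K^{(l)}$; these contradict once $C_{n_k}>K^{(l)}C_l$, so $\|\bar x^{(l)}(T_n)\|\le C_l^*\bigl(\|\bar x^{(l+1)}(T_n)\|+\cdots+\|\bar x^{(N)}(T_n)\|\bigr)$ a.s. for some $C_l^*>1$. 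For~(iii): Lemma~\ref{intermediate_lemma}(i) gives $\|\bar x^{(l)}(t)\|\le K^{(l)}\|\bar x^{(l)}(T_n)\|$ on $[T_n,T_{n+1})$, and since the $x^{(l)}_n$ are nodes of the interpolation $\bar x^{(l)}$, combining with~(ii) yields $\|x^{(l)}_n\|\le K^{(l)}C_l^*\bigl(\|x^{(l+1)}_n\|+\cdots+\|x^{(N)}_n\|\bigr)$, so $K_l^*=K^{(l)}C_l^*$ works.

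I expect the main obstacle to be the new $\delta_l$ step and, more precisely, its uniformity: one must argue that the rescaled faster components $\hat x^{(1)},\dots,\hat x^{(l-1)}$ stay $O(\|\hat x^{(l)}\|)$ with constants independent of the block index $n$ (this is where the inductive hypothesis is applied at the block-boundary indices $m(n)$, noting $\bar x^{(k)}(T_n)=x^{(k)}_{m(n)}$, and where Assumption~\textbf{(A:3)}(iii) is used to separate the $N$ timescales), and one must check that the hypotheses of Lemmas~\ref{intermediate_aux} and~\ref{intermediate_lemma} — especially the a.s.\ summability of the rescaled martingale increments coming from~\textbf{(A:2)}--\textbf{(A:3)} — hold uniformly over blocks. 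Everything past that is the bookkeeping of Lemma~\ref{fast_main} shifted one level down.
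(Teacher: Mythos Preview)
Your proposal is correct and follows essentially the same inductive scheme as the paper's proof: induct on $l$, bound the slower rescaled components by $1/C_l$, use the inductive hypothesis to bound the faster rescaled components and thereby obtain a uniform positive lower bound $\delta_l$ on $\|\hat x^{(l)}(T_n)\|$, then combine Lemma~\ref{intermediate_aux} and Lemma~\ref{intermediate_lemma}(ii) exactly as you describe. The only cosmetic difference is that the paper invokes part~(ii) of the inductive hypothesis (with constants $C_j^*$) to bound $r(n)$, whereas you invoke part~(iii) (with $K_k^*$) at the iterate indices $m(n)$; your choice is in fact the cleaner one, since part~(iii) is stated for all iterate indices and so transfers directly to the $l$-th timescale's block boundaries.
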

\emph{Proof.} 
We use an inductive argument to show that the lemma holds. 
The base case $l=1$ holds from Lemma \ref{fast_main}.    
Assume the claim holds for $l-1$. We now show that it holds for $l$. 
As in Lemma \ref{fast_main}, we can show $\|\bar{x}^{(j)}(T_n)\|_p^p< \frac{1}{C_l}, l+1\leq j \leq N$.
Next we show that $\|\hat{x}^{(l)}(T_n)\|>\frac{1}{((l-1)(C_{l,max}^{*})^{l-1}+1)(1+\frac{1}{C_l})}$, where 
\(C_{l,max}^{*} = \max_{1\leq j\leq l-1}(C_j^{*}).\)
Here again we are considering the case when the iterates are blowing up. Therefore let $r(n) = \|\bar{X}^{(l)}(T_n)\|$. Now, from the inductive step, we know that $\|\bar{x}^{(j)}(T_n)\|\leq C_j^*(\|\bar{x}^{(j+1)}(T_n)\| +\cdots+ \|\bar{x}^{(N)}(T_n)\|),l \leq j \leq N$ and therefore,
\begin{gather*}
    r(n) \leq \sum_{j=1}^{l-1} C_j^*( \|\bar{x}^{(j+1)}(T_n)\| +\cdots+ \|\bar{x}^{(N)}(T_n)\|)
    + \|\bar{x}^{(l)}(T_n)\| +\cdots+ \|\bar{x}^{(N)}(T_n)\|\\
    = \Big(\sum_{j=1}^{l-1}\prod_{v=1}^{l-j}C_{v}^*+1\Big)\big(\|\bar{x}^{(l)}(T_n)\| +\cdots+ \|\bar{x}^{(N)}(T_n)\|\big)
    \leq \Big((l-1)(C_{l,max}^{*})^{(l-1)}+1\Big)\Big(\|\bar{x}^{(l)}(T_n)\| +\cdots+ \|\bar{x}^{(N)}(T_n)\|\Big).
\end{gather*}
We thus have,
\begin{gather*}
        \|\hat{x}^{(l)}(T_n)\|_p\geq \frac{\|\bar{x}^{(l)}(T_n)\|}{((l-1)(C_{l,max}^{*})^{l-1}+1)(\|\bar{x}^{(l)}(T_n)\| +\cdots+ \|\bar{x}^{(N)}(T_n)\|)}\\
        > \frac{1}{((l-1)(C_{l,max}^{*})^{l-1}+1)(1+\frac{1}{C_l})}
        = \frac{\epsilon}{1+\frac{1}{C_l}},
\end{gather*}
where ${\displaystyle \epsilon = \frac{1}{(l-1)(C_{l,max}^{*})^{l-1}+1}}$.
Let $\tilde{x}^{(j)}(t-T_n) = x^{(j)}_n(t)$, $\forall 2\leq j \leq N$, $\forall t \in [T_n, T_{n+1}]$. From Lemma \ref{intermediate_aux}, $\exists r_{\epsilon/4}$, $c_{\epsilon/4}$, $T_{\epsilon/4}>0$ such that 
\begin{gather*}
    \|x_{c}^{(l),\tilde{x}^{(l+1)}(t),\ldots,\tilde{x}^{(N)}(t)}(t,\hat{x}^{(1)}(T_n))\| \leq \epsilon,
    \end{gather*}
$\forall t \geq T_{\epsilon/4}$, $\forall c \geq c_{\epsilon/4}$,
whenever $\tilde{x}^{(j)}(t) \in B^j(0,r_{\epsilon/4})$, $l+1 \leq j \leq N$. Choose $T = T_{\epsilon/4}$. Since $\dot{x}^{(j)}(t) = 0$, $l+1\leq j \leq N$ for the ODE defined in $\textbf{(E6)}$ and $\tilde{x}^{(j)}(t-T_n) = x^{(j)}_n(t) = \hat{x}^{(j)}(T_n)$ $\forall t \in [T_{n},T_{n+1}]$ and we choose $C_{l} > \max\left(c_{\epsilon/4},\frac{2}{r_{\epsilon/4}}\right)$, from $\|\hat{x}^{(j)}(T_n)\| < \frac{1}{C_l}$, it follows that $\tilde{x}^{(j)}(s) \in B^j(0,r_{\epsilon/4})$, $\forall s \in [0,T].$ Using Lemma \ref{intermediate_lemma}(ii), $\exists \Gamma_1>0$ s.t. $\|\hat{x}^{(l)}(T_{n+1}^{-}) - x^{(l)}_{n}(T_{n+1})\| < \frac{\epsilon}{4}$ for large enough $n$ and $r(n)>\Gamma_l$. Choose $C_l >$
$\max\left(c_{\epsilon/4},\frac{2}{r_{\epsilon/4}}, C_1\right)$.
Also observe that 
\begin{gather*}
    \|x^{(l)}_{n}(T_{n+1})\| = \|x_{r(n)}^{(l),\tilde{x}^{(l+1)}(t),\ldots,\tilde{x}^{(N)}(t)}(T_{n+1} - T_{n}, \hat{x}^{(l)}(T_{n}))\|
    \leq \frac{\epsilon}{4}.
\end{gather*}
Using these, we have
\begin{gather*}
    \|\hat{x}^{(l)}(T_{n+1}^{-})\| \leq \|\hat{x}^{(l)}(T_{n+1}^{-})-x^{(l)}_{n}(T_{n+1})\| + \|x^{(l)}_{n}(T_{n+1})\| \leq \frac{\epsilon}{2}.
\end{gather*}
Finally, since \[\frac{\|\bar{x}^{(l)}(T_{n+1})\|}{\|\bar{x}^{(l)}(T_{n})\|} = \frac{\|\hat{x}^{(l)}(T_{n+1}^{-})\|}{\|\hat{x}^{(l)}(T_n)\|},\] we have 
\begin{align*}
        \|\bar{x}^{(l)}(T_{n+1})\| = \frac{\|\hat{x}^{(l)}(T_{n+1}^{-})\|}{\|\hat{x}^{(l)}(T_n)\|} \|\bar{x}^{(l)}(T_n)\|< \frac{\frac{\epsilon}{2}}{\frac{\epsilon}{((l-1)((1+1/C_l)}}\|\bar{x}(T_n)\|<\frac{1+\frac{1}{C_l}}{2}
\end{align*}
Choosing $C_l>2$ proves the claim.

(ii) As before, we will prove the claim by contradiction. Suppose there exists a monotonically increasing sequence $\{n_{k}\}$ such that $C_{n_{k}}$ $\uparrow \infty$ as $k\rightarrow\infty$ and \(\|\bar{x}^{(l)}(T_n)\| \geq C_{n_{k}}(\|\bar{x}^{(l+1)}(T_n)\|+\cdots+ \|\bar{x}^{(N)}(T_n)\|)\), on a set of positive probability. From Lemma \ref{intermediate_lemma} (i), we know that if $\|\bar{x}^{(l)}(T_n)\| > C_{l}(\|\bar{x}^{(l+1)}(T_n)\|+\cdots+ \|\bar{x}^{(N)}(T_n)\|)$, then $\|\bar{x}^{(l)}(T_k)\|$ falls at an exponential rate into a ball of radius $C_{l}(\|\bar{x}^{(l+1)}(T_k)\|+\cdots+ \|\bar{x}^{(N)}(T_k)\|)$ for $k\geq n$. Therefore, corresponding to the sequence $\{n_{k}\}$, there exists another sequence $\{n_k'\}$ such that $\|\bar{x}^{(l)}(T_{n_{k-1}'})\| \leq C_l(\|\bar{x}^{(l+1)}(T_{n_{k}'})\|+\cdots+ \|\bar{x}^{(N)}(T_{n_{k}'})\|)$ for $n_{k-1} \leq n_{k}' \leq n_{k}$, but $\|\bar{x}^{(l)}(T_{n_{k}'})\| > C_{n_{k}}(\|\bar{x}^{(l+1)}(T_{n_{k}'})\|+\cdots+ \|\bar{x}^{(N)}(T_{n_{k}'})\|)$. From Lemma \ref{intermediate_lemma} (i), however, we know that the iterates $\bar{x}^{(l)}$ can only grow by a factor of $K^{(l)}$ between $m(n_{k}' - 1)$ and $m(n_{k}')$, leading to a contradiction. Therefore, \(\|\bar{x}^{(l)}(T_n)\| \leq C_{l}^{*}(\|\bar{x}^{(l+1)}(T_n)\|+\cdots+ \|\bar{x}^{(N)}(T_n)\|)\) a.s. for some $C_{l}^*\geq1$.

(iii) From Lemma \ref{fast_lemma}(ii),  we know that $\forall t \in [T_n,T_{n+1})$, $\|\bar{x}^{(l)}(t)\|\leq K^{(l)}\|\bar{x}^{(l)}(T_n)\|$. Since, $\bar{x}^{(l)}(t)$ is a linear interpolation of the iterates $x^{(l)}_{n}$, therefore, $\|x^{(l)}_{n}\| \leq \sup_{t\in[t(n),t(n+1)]}\|\bar{x}^{(l)}(t)\|$. The claim therefore follows by choosing $K_{l}^{*} = C_{l}^{*}K^{(l)}$. 
\hfill $\square$

Finally, we consider the slowest timescale recursion corresponding to the step-size $\{\alpha^{(N)}_{n}\}_{n\geq 0}$, and re-define the terms used. Note (again) the abuse of notation in definitions of terms such as $T_n$, $m(n)$, $r(n)$, $\hat{M}^j_{n+1}$, etc., below.
\begin{enumerate}
    \item[\bf{(F1)}] Define
    \({\displaystyle t^{(N)}(n) = \sum_{i=0}^{n-1}\alpha^{(N)}_i}\), \(n\geq1\) with  $t(0) = 0$.
    Let $X^{(N)}_k = (x^{(1)}_k,x^{(2)}_k,\ldots,x^{(N)}_k), k\geq0$, and for $t \in [t^{(N)}(n), t^{(N)}(n+1)]$,
    \[\bar{X}^{(N)}(t) = X^{(N)}_{n} + \left(X^{(N)}_{n+1} - X^{(N)}_{n}\right)\frac{t-t^{(N)}(n)}{t^{(N)}(n+1) - t^{(N)}(n)}.\]
    
    \item[\bf{(F2)}]Given $t^{(N)}(n), n\geq 0$ and constant $T>0$, define
    \[T_n = \min(t^{(N)}(m):t^{(N)}(m)\geq T_{n-1}+T), n\geq1,\]
  with $T_0=0$. One can find a subsequence $\{m(n)\}$ such that $T_n = t^{(N)}(m(n))$, $\forall n$, and $m(n) \rightarrow \infty$ as $n\rightarrow \infty$. 
    
    \item[\bf{(F3)}] The scaling sequence is defined as
    \(r(n) = \max(r(n-1)\), \(\|\bar{X}^{(N)}(T_n)\|,1)\), \(n\geq1\).
    
    \item[\bf{(F4)}] The scaled iterates for $m(n)\leq k\leq m(n+1)-1$ are given by
    \[\hat{x}^{(1)}_{k} = \frac{x_{k}^{(1)}}{r(n)}, \hat{x}^{(2)}_{k} = \frac{x^{(2)}_{k}}{r(n)},\ldots, \hat{x}^{(N)}_{k} = \frac{x_{k}^{(N)}}{r(n)}.\]
    The corresponding updates are as follows:
    \[\hat{x}^{(1)}_{k+1} = \hat{x}^{(1)}_{k} + \alpha^{(1)}_k\left(\frac{h^{(1)}(c\hat{x}^{(1)}_{k},\ldots,c\hat{x}^{(N)}_{k})}{c} + \hat{M}_{k+1}^{(1)}\right),\]
    \[\longvdots{2em}\]
    \[\hat{x}^{(N)}_{k+1} = \hat{x}^{(N)}_{k} + \alpha^{(N)}_k\left(\frac{h^{(N)}(c\hat{x}^{(1)}_{k},\ldots,c\hat{x}^{(N)}_{k})}{c} + \hat{M}_{k+1}^{(N)}\right),\]
    where, $c = r(n)$, and $\forall 1\leq j\leq N-1$,
    \({\displaystyle \hat{M}_{k+1}^{(j)} =  \frac{M_{k+1}^{(j)}}{r(n)}}\), for $1\leq j\leq N$.
    
    \item[\bf{(F5)}] Next, we define the linearly interpolated trajectory for the scaled iterates $\hat{X}^{(N)}_n=(x^{(1)}_{n},x^{(2)}_{n},\ldots,x^{(N)}_{n})$ as follows: For $t \in [t^{(N)}(n), t^{(N)}(n+1)]$,
    \[\hat{X}^{(N)}(t) = \hat{X}^{(N)}_n + (\hat{X}^{(N)}_{n+1} - \hat{X}^{(N)}_n)\frac{t - t^N(n)}{t^N(n+1) - t^N(n)}.\]
    
    \item[\bf{(F6)}] Let \(X_{n}^{(N)}(t) = (x^{(1)}_n(t),\ldots,x^{(N)}_n(t)), t\in [T_n,T_{n+1}]\), denote the trajectory of the ODE:
    \[\dot{x}^{(1)}(t) = h^{(1)}_{r(n)}(x^{(1)}(t),\ldots,x^{(N)}(t)),\]
    \[\dot{x}^{(2)}(t) = h^{(2)}_{r(n)}(x^{(2)}(t),\ldots,x^{(N)}(t)),\]
    \[\longvdots{2em}\]
    \[\dot{x}^{(N)}(t) = h^{(N)}_{r(n)}(x^{(N)}(t)),\]
    with $x^{(j)}_n(T_n) = \hat{x}^{(j)}(T_n)$, $1 \leq j\leq N.$
\end{enumerate}  

\begin{lemma}
    \label{slow_aux}
        Let $x^{(N)} \in B^{N}(0,1) \subset \mathbb{R}^{d_N}$ and let Assumption $\bf{(B.N.N)}$ hold. Then given $\epsilon>0, \exists c_{\epsilon}\geq 1, r_{\epsilon}>0$ and $T_{\epsilon}>0$, then 
        \[\|x^{(N)}_c(t,x^{(N)})\| \leq 2\epsilon, \mbox{\quad} \forall c > c_{\epsilon}.\]
    \end{lemma}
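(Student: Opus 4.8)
The plan is to reproduce, at the slowest timescale, the one‑timescale ``scaling limit'' argument (cf.\ the iterate‑stability analysis in Chapter~3 of \cite{BorkarBook} and in \cite{Borkar_Meyn}, and the analogous slowest‑timescale lemma of \cite{chandru-SB}), now applied to the fully reduced vector field $h^{(N)}_{c}$ of Assumption \textbf{(B.N.N)} (the one obtained after substituting the nested equilibria $\lambda^{(1)}_{\infty},\ldots,\lambda^{(N-1)}_{\infty}$). Since the slowest recursion has no faster recursion below it, there are no external inputs here, so the parameter $r_{\epsilon}$ appearing in the statement is vacuous (one may take $r_{\epsilon}=1$); the substantive claim is that trajectories of $\dot{x}^{(N)}(t)=h^{(N)}_{c}(x^{(N)}(t))$ started in the closed unit ball of $\mathbb{R}^{d_N}$ are eventually confined to a $2\epsilon$‑ball about the origin, uniformly over all large $c$.

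First I would fix $\epsilon>0$. By \textbf{(B.N.N)}(ii) the origin is a globally asymptotically stable equilibrium of $\dot{x}^{(N)}(t)=h^{(N)}_{\infty}(x^{(N)}(t))$. Global asymptotic stability together with compactness of the closed unit ball furnishes a $T_{\epsilon}>0$ such that every solution $x^{(N)}_{\infty}(\cdot,x^{(N)})$ with $\|x^{(N)}\|\le 1$ satisfies $\|x^{(N)}_{\infty}(t,x^{(N)})\|<\epsilon$ for all $t\ge T_{\epsilon}$; moreover all such trajectories stay inside a fixed ball of radius $R$ on $[0,T_{\epsilon}]$, by continuity of the flow and compactness of the initial‑condition set. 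Next I would invoke the uniform‑on‑compacts convergence $h^{(N)}_{c}\to h^{(N)}_{\infty}$ from \textbf{(B.N.N)}, together with the Lipschitz property of $h^{(N)}_{c}$ (inherited from \textbf{(A:1)}, uniformly in $c\ge 1$), and run a Gronwall estimate on the compact interval $[0,T_{\epsilon}]$: for $c$ large the $c$‑trajectory from any starting point in the unit ball stays within a ball of radius $2R$ and obeys $\sup_{t\in[0,T_{\epsilon}]}\|x^{(N)}_{c}(t,x^{(N)})-x^{(N)}_{\infty}(t,x^{(N)})\|<\epsilon$. In particular $\|x^{(N)}_{c}(T_{\epsilon},x^{(N)})\|<2\epsilon$.

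To upgrade this to $\|x^{(N)}_{c}(t,x^{(N)})\|\le 2\epsilon$ for \emph{all} $t\ge T_{\epsilon}$, I would (shrinking $\epsilon$ at the outset if necessary so that the $2\epsilon$‑ball sits inside a sublevel set of a Lyapunov function $V$ for $\dot{x}^{(N)}=h^{(N)}_{\infty}(x^{(N)})$) argue forward invariance: on a closed annulus around the origin one has $\langle\nabla V(x),h^{(N)}_{\infty}(x)\rangle<0$, and uniform convergence $h^{(N)}_{c}\to h^{(N)}_{\infty}$ on that (compact) annulus gives $\langle\nabla V(x),h^{(N)}_{c}(x)\rangle<0$ there for all $c$ large, so the corresponding sublevel set is positively invariant under the $c$‑flow. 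Hence, having entered the $2\epsilon$‑ball by time $T_{\epsilon}$, the $c$‑trajectory cannot leave it; taking $c_{\epsilon}$ to be the resulting threshold completes the argument.

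The main obstacle is precisely this last step --- passing from ``close to the origin at the single instant $T_{\epsilon}$'' to ``stays near the origin for every $t\ge T_{\epsilon}$'' --- because the Gronwall comparison with the limiting ODE only controls a compact time window and deteriorates over longer horizons. The converse‑Lyapunov / positive‑invariance device above is the standard remedy and is exactly what is used for the one‑timescale scaling limit in \cite{Borkar_Meyn,BorkarBook}; everything else (the uniform ``time to enter'' statement from global asymptotic stability, and the Gronwall bound) is routine and parallels the corresponding auxiliary lemmas of \cite{chandru-SB}.
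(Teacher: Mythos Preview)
Your proposal is correct and follows exactly the approach the paper intends: the paper does not spell out a proof of this lemma but simply notes that it ``follows in a similar manner as the proof of Lemmas 5 and 9, respectively, of \cite{chandru-SB}'', i.e., the one‑timescale Borkar--Meyn scaling‑limit argument specialized to the slowest timescale (no external inputs). Your sketch of that argument---uniform time‑to‑enter from global asymptotic stability of the origin under $h^{(N)}_{\infty}$, a Gronwall comparison using the uniform‑on‑compacts convergence $h^{(N)}_{c}\to h^{(N)}_{\infty}$, and a converse‑Lyapunov forward‑invariance step to extend the bound past $T_{\epsilon}$---is precisely what those references do, so there is nothing to add.
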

    
    \begin{lemma}
    \label{slow_lemma}
    Under \textbf{(A:1)}-\textbf{(A:3)},
    \begin{enumerate}
        \item[(i)] For \(0\leq k \leq m(n+1)-m(n)\),
        \(\|\hat{X}^{(N)}(t(m(n)+k))\| \leq K^{(N)}\)  a.s. for some constant $K^{(N)}>0.$
        \item[(ii)] For sufficiently large $n$, we have  \(\sup_{[T_n,T_{n+1})}\|\hat{x}^{(N)}(t) - x^{(N)}_n(t)\| \rightarrow 0\) almost surely as \(c\rightarrow\infty.\)
    \end{enumerate}
\end{lemma}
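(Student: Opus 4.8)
The two claims are established along the same lines as the fastest- and intermediate-timescale statements, Lemmas~\ref{fast_lemma} and \ref{intermediate_lemma} above (and Lemmas~5, 6 and 9 of \cite{chandru-SB}), now specialized to the slowest step-size $\{\alpha^{(N)}_n\}$. The only structural change relative to the intermediate case \textbf{(E4)}--\textbf{(E6)} is that in the block recursion \textbf{(F4)} \emph{every} component $\hat{x}^{(j)}_k$, $1\le j\le N$, carries a genuine drift $h^{(j)}(c\hat{x}^{(1)}_k,\ldots,c\hat{x}^{(N)}_k)/c$ --- no trailing component is frozen --- so the limiting object is the full cascade ODE \textbf{(F6)}. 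Both parts rest on three ingredients: (a) $\|\hat{X}^{(N)}(T_n)\|\le 1$, which is immediate from $r(n)\ge\|\bar{X}^{(N)}(T_n)\|$ in \textbf{(F3)}; (b) the linear growth $\|h^{(j)}(cx)/c\|\le L\|x\|+L_0$ for $c=r(n)\ge 1$, with $L_0=\max_j\|h^{(j)}(0)\|$, which follows from \textbf{(A:1)}; and (c) a uniform-over-a-block control of the scaled martingale noise $\hat{M}^{(j)}_{k+1}=M^{(j)}_{k+1}/r(n)$.

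For part (i) I would first invoke Lemma~\ref{fast_main}(iii) and Lemma~\ref{intermediate_main}(iii), composed over $l=1,\ldots,N-1$, to obtain $\sum_{j=1}^{N-1}\|x^{(j)}_k\|\le\tilde{C}\,\|x^{(N)}_k\|$; this is what keeps the argument finite, since it reduces the problem to a single Gronwall estimate for the slowest coordinate $\hat{x}^{(N)}$, whose block has $t^{(N)}$-length $\le T+\sup_k\alpha^{(N)}_k$ (bounded), rather than a Gronwall over the unbounded $t^{(1)}$-length of the block. Starting from $\|\hat{X}^{(N)}_{m(n)}\|\le 1$ and using (b), the drift of $\hat{x}^{(N)}$ is bounded linearly in $\|\hat{X}^{(N)}_k\|\le(\tilde{C}+1)\|\hat{x}^{(N)}_k\|$, so a discrete Gronwall inequality gives a growth factor $e^{CT}$ independent of $n$. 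For the martingale pieces, since $r(n)\ge 1$ Assumption \textbf{(A:2)} yields the \emph{scaled} estimate $\mathbb{E}\big[\|\hat{M}^{(j)}_{k+1}\|^{2}\mid\mathcal{F}_k\big]\le K^{(j)}\big(1+\|\hat{X}^{(N)}_k\|^{2}\big)$, of exactly the form of \textbf{(A:2)} for the scaled iterates; then \textbf{(A:3)}(ii) and the martingale convergence theorem force $\sum_{i=m(n)}^{k-1}\alpha^{(j)}_i\hat{M}^{(j)}_{i+1}\to 0$ uniformly over the block as $n\to\infty$, the entanglement between $r(n)$ and the moment bound being closed by the joint bootstrapping of \cite{chandru-SB}. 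Combining these gives $\|\hat{X}^{(N)}(t(m(n)+k))\|\le K^{(N)}$ for $K^{(N)}$ depending only on $T$, $\tilde{C}$, $L$ and $L_0$.

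For part (ii) I would carry out a continuous-time Gronwall (tracking) estimate for the scaled cascade, in the style of Lemma~2, Chapter~2 of \cite{BorkarBook}. Writing the interpolated scaled trajectory as
\[
\hat{X}^{(N)}(t)=\hat{X}^{(N)}(T_n)+\int_{T_n}^{t}H_{r(n)}\big(\hat{X}^{(N)}(s)\big)\,ds+e_1(t)+e_2(t)+e_3(t),
\]
where $H_{r(n)}$ assembles the right-hand sides of \textbf{(F6)}: $e_1$ is the interpolation/discretization error, of order $\sum_{j}\sum_{i\ge m(n)}(\alpha^{(j)}_i)^{2}\to 0$ by \textbf{(A:3)}(ii); $e_2$ is the martingale error, $\to 0$ by part (i); and $e_3$ is the discrepancy between the ``raw'' scaled drift $h^{(j)}(r(n)\hat{x}^{(1)}_k,\ldots)/r(n)$ used in \textbf{(F4)} and the cascade drift $h^{(j)}_{r(n)}(\hat{x}^{(j)}_k,\ldots,\hat{x}^{(N)}_k)$ of \textbf{(F6)}. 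The term $e_3$ is controlled exactly as \eqref{eq_8} was obtained for the unscaled iterates: Lemma~\ref{fast_lemma}(ii), Lemma~\ref{intermediate_lemma}(ii) and the global asymptotic stability in \textbf{(B.N.i)} imply that within a slow block each faster scaled coordinate $\hat{x}^{(i)}_k$ is asymptotically close to the appropriate composition of the maps $\lambda^{(i)}_\infty,\ldots,\lambda^{(N-1)}_\infty$ evaluated at the slower scaled coordinates, so Lipschitz continuity of $h^{(j)}$ and of each $\lambda^{(i)}_\infty$ makes $e_3\to 0$ uniformly over $[T_n,T_{n+1}]$. Comparing with the solution $X^{(N)}_n(t)$ of \textbf{(F6)} having the same value at $T_n$ and applying Gronwall on the block yields $\sup_{t\in[T_n,T_{n+1})}\|\hat{X}^{(N)}(t)-X^{(N)}_n(t)\|\to 0$ a.s.

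I expect the main obstacle to be ingredient (c) together with the term $e_3$, i.e.\ propagating the timescale separation \emph{through the scaled recursions}: one must control a martingale whose normalization $r(n)$ is entangled with the second-moment bound \textbf{(A:2)}, and at the same time argue that during a single slow block the faster scaled coordinates collapse onto the relevant compositions of the maps $\lambda^{(i)}_\infty$. Both are handled precisely as in \cite{chandru-SB} for $N=2$: first one establishes the a priori block bound of part (i), then one feeds it --- together with the already-proven faster-timescale tracking lemmas --- into the slow-timescale estimates; passing to general $N$ introduces no new phenomenon beyond a longer chain of nested substitutions.
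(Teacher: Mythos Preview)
Your plan is correct and is exactly the route the paper takes: it simply defers to Lemmas~5 and~9 of \cite{chandru-SB}, and what you have written is the natural $N$-timescale transcription of those arguments --- the a~priori block bound via discrete Gronwall for (i), and the Chapter~2 \cite{BorkarBook} tracking estimate with discretization, martingale, and timescale-collapse error terms for (ii).

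One remark worth making explicit: for part~(i) you rightly invoke Lemmas~\ref{fast_main}(iii) and~\ref{intermediate_main}(iii) to turn the full-vector Gronwall into a Gronwall for $\hat{x}^{(N)}$ alone. This is indeed necessary, since over a slow block the $t^{(j)}$-length for $j<N$ is unbounded and a naive Gronwall on $\hat X^{(N)}$ would blow up. But note that those lemmas require \textbf{(B.N.i)} and \textbf{(C.N.i)}, whereas the lemma as stated assumes only \textbf{(A:1)}--\textbf{(A:3)}; this is an imprecision in the paper's hypothesis (inherited from the analogous statement of Lemma~\ref{intermediate_lemma}), not a flaw in your argument.
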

As before, the proof of the above two Lemmas follows in a similar manner as the proof of Lemmas 5 and 9, respectively, of \cite{chandru-SB}.
\begin{lemma}
\label{slow_main}
    Under assumptions \textbf{(A:1)}-\textbf{(A:3)}, \textbf{(B.N.i)}$_{1\leq i \leq N}$ and \textbf{(C.N.i)}$_{1\leq i \leq N}$, we have:
    \begin{itemize}
        \item[(i)] For $n$ large, $\exists T$ such that if $\|\bar{x}^{(N)}(T_n)\|>C_{N}$, for some $C_{N}>0$ then $\|\bar{x}^{(N)}(T_{n+1})\|<\frac{1}{2}\|\bar{x}^{(N)}(T_n)\|$.
        \item[(ii)] $\|\bar{x}^{(N)}(T_n)\| \leq C_N^*$ for some $C_N^*>0$.
        \item[(iii)] $\sup_{n}\|x^{(N)}_n\|<\infty$ a.s.
    \end{itemize}
\end{lemma}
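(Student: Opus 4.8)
The plan is to mirror the structure of Lemmas~\ref{fast_main} and~\ref{intermediate_main}, specialized to the slowest timescale, where there is no further block of coordinates to compare against: the threshold in~(i) is a pure constant $C_N$ and the conclusion in~(iii) is genuine boundedness. The one new ingredient needed at this stage is that the $N$-th coordinate carries a definite fraction of the total norm of the iterate vector, and this is supplied by the relative bounds already proved for the faster timescales.

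\textbf{Part (i).} I would first invoke Lemma~\ref{fast_main}(iii) together with Lemma~\ref{intermediate_main}(iii), which give $\|x^{(j)}_n\|\le K_j^*(\|x^{(j+1)}_n\|+\cdots+\|x^{(N)}_n\|)$ for $1\le j\le N-1$; telescoping these from $j=N-1$ down to $j=1$ yields a constant $\Gamma\ge 1$ with $\|X^{(N)}_n\|\le \Gamma\|x^{(N)}_n\|$ for all $n$, hence $\|\bar X^{(N)}(T_n)\|\le \Gamma\|\bar x^{(N)}(T_n)\|$ at the grid points $T_n=t^{(N)}(m(n))$. Since trivially $\|\bar X^{(N)}(T_n)\|\ge\|\bar x^{(N)}(T_n)\|$, in the blow-up case $r(n)=\|\bar X^{(N)}(T_n)\|$ (the only case we need, exactly as in the earlier lemmas) we get $\|\hat x^{(N)}(T_n)\| = \|\bar x^{(N)}(T_n)\|/r(n)\ge 1/\Gamma$. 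Next I would fix $\epsilon>0$ and apply Lemma~\ref{slow_aux} to obtain $c_\epsilon,T_\epsilon>0$ such that the trajectory of $\dot x^{(N)}=h^{(N)}_c(x^{(N)})$ started in $B^{N}(0,1)$ has norm $\le 2\epsilon$ for $c>c_\epsilon$ and $t\ge T_\epsilon$; set $T=T_\epsilon$ and require $C_N>\max(c_\epsilon,1)$, so that $\|\bar x^{(N)}(T_n)\|>C_N$ forces $r(n)>c_\epsilon$. By Lemma~\ref{slow_lemma}(ii) the scaled interpolated trajectory tracks the ODE~\textbf{(F6)}, whose $N$-th component is precisely this decoupled ODE, so for $n$ large $\|\hat x^{(N)}(T_{n+1}^{-})-x^{(N)}_n(T_{n+1})\|<\epsilon$, while $\|x^{(N)}_n(T_{n+1})\|\le 2\epsilon$ because $T_{n+1}-T_n\ge T$. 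Combining, $\|\hat x^{(N)}(T_{n+1}^{-})\|<3\epsilon$, and from $\|\bar x^{(N)}(T_{n+1})\|/\|\bar x^{(N)}(T_n)\| = \|\hat x^{(N)}(T_{n+1}^{-})\|/\|\hat x^{(N)}(T_n)\|$ with $\|\hat x^{(N)}(T_n)\|\ge 1/\Gamma$ we would obtain $\|\bar x^{(N)}(T_{n+1})\|<3\epsilon\Gamma\|\bar x^{(N)}(T_n)\|$; choosing $\epsilon<1/(6\Gamma)$ gives the claimed factor $\frac{1}{2}$.

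\textbf{Parts (ii) and (iii).} Part~(ii) would follow by the contradiction argument of Lemmas~\ref{fast_main}(ii) and~\ref{intermediate_main}(ii): if $\|\bar x^{(N)}(T_{n_k})\|\uparrow\infty$ along some subsequence, then by part~(i) the norm contracts geometrically once it exceeds $C_N$ and falls back into the ball of radius $C_N$, whereas between $m(n-1)$ and $m(n)$ it can grow by at most the factor $K^{(N)}$ from Lemma~\ref{slow_lemma}(i); this is incompatible with unbounded excursions, so $\|\bar x^{(N)}(T_n)\|\le C_N^*$ for some $C_N^*$. For part~(iii), Lemma~\ref{slow_lemma}(i) gives $\|\bar x^{(N)}(t)\|\le K^{(N)}\|\bar x^{(N)}(T_n)\|\le K^{(N)}C_N^*$ for all $t$, and since each $x^{(N)}_n$ is a value of $\bar x^{(N)}$ at a grid point, $\sup_n\|x^{(N)}_n\|<\infty$ a.s. Together with Lemmas~\ref{fast_main}(iii) and~\ref{intermediate_main}(iii) this bounds every coordinate, i.e.\ establishes \textbf{(B.N.N+1)}, which is Theorem~\ref{thm_without_stability}.

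\textbf{Main obstacle.} The Gronwall-type tracking is absorbed into Lemma~\ref{slow_lemma}(ii) and the geometric-contraction bookkeeping is routine. The genuinely essential new step is the uniform lower bound $\|\hat x^{(N)}(T_n)\|\ge 1/\Gamma$: without it the estimate from Lemma~\ref{slow_aux} would control $\|\hat x^{(N)}(T_{n+1}^{-})\|$ but could not be turned into a contraction of the unscaled $\|\bar x^{(N)}(T_n)\|$. This is the one place where the slowest-timescale step genuinely consumes the outputs of all the faster-timescale lemmas, and where the argument differs structurally from the inductive step in Lemma~\ref{intermediate_main}.
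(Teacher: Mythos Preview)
Your proposal is correct and follows essentially the same route as the paper: both arguments use the faster-timescale lemmas to lower-bound $\|\hat x^{(N)}(T_n)\|$ by a fixed positive constant, then combine Lemma~\ref{slow_aux} (ODE contraction toward the origin) with Lemma~\ref{slow_lemma}(ii) (tracking) to bound $\|\hat x^{(N)}(T_{n+1}^-)\|$, and finally pass to the unscaled ratio; parts~(ii) and~(iii) then follow by the identical bounded-growth contradiction. The only cosmetic difference is that you telescope using parts~(iii) of Lemmas~\ref{fast_main} and~\ref{intermediate_main} (the bounds valid for \emph{all} iterate indices, with constants $K_j^*$), whereas the paper writes the bound with the constants $C_j^*$ from parts~(ii); your choice is in fact slightly cleaner, since the grid points $T_n$ here are defined on the $t^{(N)}$-scale rather than on the faster scales where the $C_j^*$ bounds were stated.
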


\emph{Proof.} 
(i) From Lemmas \ref{fast_main} and \ref{intermediate_main}, we know that 
    \begin{gather*}
        r(n) \leq \sum_{j=1}^{N-1} C_j^*(\|\bar{x}^{(j+1)}(T_n)\| +\cdots+ \|\bar{x}^{(N)}(T_n)\|)
        = \Big(\sum_{j=1}^{N-1}\prod_{v=1}^{N-j}C_{v}^*+1\Big)\Big(\|\bar{x}^{(N)}(T_n)\|\Big).
    \end{gather*}
    
    \begin{gather*}
        \text{Therefore,}\; \|\hat{x}^{(N)}(T_n)\| = \frac{\|\bar{x}^{(N)}(T_n)\|}{r(n)}
            > \frac{1}{\sum_{j=1}^{N-1}\Big(\prod_{v=1}^{N-j}C_{v}^*+1\Big)}
            > \frac{1}{((N-1)(C_{N,max}^{*})^{N-1}+1)} \stackrel{\triangle}{=}\tilde{\epsilon},
    \end{gather*}
    where, 
    \({\displaystyle C_{N,max}^{*} = \max_{1\leq j\leq N-1}(C_j^{*})}.\)
     Since, $0\in\mathbb{R}^{d_3}$ is the unique globally asymptotically stable equilibrium, using Lemma \ref{slow_aux}, $\exists c_{\tilde{\epsilon}/4},  T_{\tilde{\epsilon}/4} >0$, such that
     \(\|x^{(N)}_{c}(t,x^{(N)})\|<\frac{\tilde{\epsilon}}{4},\)
     $\forall c \geq c_{\tilde{\epsilon}/4}, t \geq T_{\tilde{\epsilon}/4}$. 
     Also, for sufficiently large $n$, from Lemma \ref{slow_lemma}(ii), $\exists \Gamma_2>0$ such that 
    $\|\hat{x}^{(N)}(T_{n+1}^{-})-x^{(N)}_{n}(T_{n+1})\| < \frac{\tilde{\epsilon}}{4}$ for $r(n)>\Gamma_2$. 
    We pick 
    \(C_{N} = \max(c_{\tilde{\epsilon}/4},\Gamma_2)\)
    and $T= T_{\tilde{\epsilon}/4}$. 
    For $n$ large, it follows that 
    \begin{gather*}
        \|\hat{x}^{(N)}(T_{n+1}^{-})\| \leq \|\hat{x}^{(N)}(T_{n+1}^{-})-x^{(N}_{n}(T_{n+1})\| + \|x^{(N)}_{n}(T_{n+1})\| \leq \frac{\tilde{\epsilon}}{4}.
    \end{gather*}
    Finally, since \({\displaystyle \frac{\|\bar{x}^{(N)}(T_{n+1})\|}{\|\bar{x}^{(N)}(T_{n})\|} = \frac{\|\hat{x}^{(N)}(T_{n+1}^{-})\|}{\|\hat{x}^{(N)}(T_n)\|}}\), it follows that
    \({\displaystyle \|\bar{x}^{(N)}(T_{n+1})\| < \frac{1}{2}\|\bar{x}^{(N)}(T_{n})\|}\).

The claims in (ii) and (iii) now follow in a similar manner as Theorem 10 (iii)-(iv), respectively, of \cite{chandru-SB}, and the arguments are repeated here for completeness.

(ii) We will prove the claim by contradiction. Suppose there exists a monotonically increasing sequence $\{n_{k}\}$ such that $C_{n_{k}}$ $\uparrow \infty$ as $k\rightarrow\infty$ and \(\|\bar{x}^{(N)}(T_n)\| \geq C_{n_{k}}\), on a set of positive probability. From Lemma \ref{slow_lemma} (i), we know that if $\|\bar{x}^{(N)}(T_n)\| > C_{N}$, then $\|\bar{x}^{(N)}(T_k)\|$ falls at an exponential rate into a ball of radius $C_{N}$ for $k\geq n$. Therefore, corresponding to the sequence $\{n_{k}\}$, there exists another sequence $\{n_k'\}$ such that $\|\bar{x}^{(N)}(T_{n_{k-1}'})\| \leq C_N$ for $n_{k-1} \leq n_{k}' \leq n_{k}$, but $\|\bar{x}^{(N)}(T_{n_{k}'})\| > C_{n_{k}}$. From Lemma \ref{slow_lemma} (i), however, we know that the iterates $\bar{x}^{(N)}$ can only grow by a factor of $K^{(N)}$ between $m(n_{k}' - 1)$ and $m(n_{k}')$, leading to a contradiction. Therefore, \(\|\bar{x}^{(N)}(T_n)\| \leq C_{N}^{*}\) a.s. for some $C_{N}^*\geq1$.

(ii) From the previous part we have $\|\bar{x}^{(N)}\| \leq C_{N}^{*}$, and from Lemma \ref{slow_lemma} (i), we have $\|\bar{x}^{(N)}(t)\| \leq K^{(N)}\|\bar{x}^{(N)}(T_n)\|$, $\forall t \in [T_n, T_{n+1})$. Therefore, $\|x_{n}\| \leq K^{(N)}C_{N}^{*}$ almost surely. 
\hfill $\square$

The fact that $\sup_{n}{\|x_{n}^{(l)}\|} < \infty$ a.s. $\forall l \in \{1,\ldots,N\}$ follows by combining Lemma \ref{slow_main} (iii) and Lemma \ref{intermediate_main} (iii).

\section{Application to Reinforcement Learning Algorithms}
\label{sec_app}
In this section we use our results on $N$-timescale recursions to show stability and convergence of two Reinforcement Learning (RL) algorithms: Gradient Temporal Difference (GTD) with momentum for policy evaluation, and Constrained Actor Critic for policy optimization in constrained RL.


In the standard RL framework, an agent interacts with a stochastic and dynamic environment. At each discrete time step $t$, the agent is in state $s_t \in \mathcal{S}$, picks an action $a_t \in \mathcal{A}$, receives a reward $r_{t+1} \equiv r(s_{t},a_t,s_{t+1}) \in \mathcal{R}$ and probabilistically moves to another state $s_{t+1} \in \mathcal{S}$. The tuple $(\mathcal{S},\mathcal{A},\mathbb{P},\mathcal{R},\gamma)$ constitutes a Markov Decision Process (MDP). Here $\mathcal{S}$ and $\mathcal{A}$ are assumed finite. Also, $\gamma\in (0,1)$ is the discount factor. A policy $\pi:\mathcal{S}\times\mathcal{A}\rightarrow[0,1]$ is a mapping that defines the probability of picking an action in a state. We let $P^{\pi}(s'|s)$ denote the probability of transition to state $s'$ from state $s$ when an action is chosen as per $\pi$. We let $\{d^{\pi}(s)\}_{s \in \mathcal{S}}$ denote the steady-state distribution for the Markov chain induced by $\pi$. The matrix $D$ is a $n \times n$ diagonal matrix with elements $d^{\pi}(s)$ on its diagonals with $n$ being the number of states. The value function corresponding to state $s$ under policy $\pi$ for state $s$ is defined by:
\[V^{\pi}(s) = \mathbb{E}_{\pi}\left[ \sum_{t=0}^{\infty} \gamma^{t} R_{t+1}|s_{0} = s\right].\]

\subsection{Gradient Temporal Difference with Momentum}
With linear function approximation for policy evaluation (i.e., for a fixed $\pi$), the goal is to estimate $V^{\pi}(s)$ from samples of the form $(s_t,r_{t+1},s_{t+1})$ through a linear model $V_{\theta}(s) = \theta^{T}\phi(s)$. Here $\phi(s) \equiv \phi_{s}$ is a feature vector associated with the state $s$ and $\theta$ is the associated parameter vector. The TD-error is defined by \(\delta_{t} = r_{t+1} + \gamma \theta_{t}^{T}\phi_{t+1} - \theta_{t}^{T}\phi_{t}\). The feature matrix $\Phi$ is an $n\times d$ matrix where the $s^{th}$ row is $\phi(s)^T$. In the following, we consider the i.i.d setting, where the tuple $(\phi_{t},\phi_{t}'$) (with $\phi_{t}' \equiv \phi_{t+1}$ ) is drawn independently from the stationary distribution $\{d_\pi(s)\}$. Let \(\bar{A} \stackrel{\triangle}{=} \mathbb{E}[\phi_{t}(\gamma\phi_{t}'-\phi_{t})^{T}]\) and \(\bar{b} \stackrel{\triangle}{=} \mathbb{E}[r_{t+1}\phi_t]\), where the expectations are w.r.t. the stationary distribution of the induced Markov chain. The matrix $\bar{A}$ is known to be negative definite (see \cite{Maei_PhD,TsitsiklisVanRoy}). In the off-policy setting, the behaviour policy $\pi$ is used to sample trajectories from the MDP while the target policy $\mu$ is the one whose associated value function needs to be approximated. Let $\rho_{t} = \frac{\pi(a_{t}|s_{t})}{\mu(a_{t}|s_{t})}$ denote the importance sampling ratio. Gradient TD algorithms are a class of TD algorithms that are convergent even in the off-policy setting. 
We first present the iterates associated with the algorithms GTD2 and TDC, see \cite{FastGradient}.
\begin{enumerate}
    \item \textbf{GTD2}: 
    \begin{gather}
        \label{gtd2_1}
        \theta_{t+1} = \theta_{t} + \alpha_{t}(\phi_{t} - \gamma\phi_{t}')\phi_{t}^{T}u_{t},\\
        \label{gtd2_2}
        u_{t+1} = u_{t} + \beta_{t}(\delta_{t} - \phi_{t}^{T}u_{t})\phi_{t}.
    \end{gather}
    \item \textbf{TDC}: 
    \begin{gather}
        \label{tdc_1}
        \theta_{t+1} = \theta_{t} + \alpha_{t}\delta_{t}\phi_{t} - \alpha_{t}\gamma \phi_{t}'(\phi_{t}^{T}u_{t}),\\
        \label{tdc_2}
        u_{t+1} = u_{t} + \beta_{t}(\delta_{t} - \phi_{t}^{T}u_{t})\phi_{t}.
    \end{gather}
\end{enumerate}
In the GTD algorithm, the objective function considered is the Norm of Expected Error defined as $NEU(\theta) = \mathbb{E}[\delta\phi]$ and the algorithm is derived by expressing the gradient direction as $-\frac{1}{2} \nabla NEU(\theta)$ = $\mathbb{E}\left[ (\phi-\gamma \phi')\phi^T\right] \mathbb{E}[\delta(\theta)\phi]$. Here $\phi' \equiv \phi(s')$. Since the expectation becomes biased by the correlation of the two terms if both the terms are sampled separately, an estimate of the second expectation is maintained as a long-term quasi-stationary estimate while samples for the first expectation are used. For GTD2 and TDC, a similar approach is used on the objective function Mean Square Projected Bellman Error defined as $MSPBE(\theta) = \|V_{\theta} - \Pi T^{\pi} V_{\theta}\|_{D}$, where for any $x\in \mathbb{R}^n$,
$\|x\|_D = \sqrt{x^TDx}$. Here, $\Pi$ is the projection operator that projects vectors to the subspace $\{\Phi\theta|\theta\in\mathbb{R}^{d}\}$ and $T^{\pi}$ is the Bellman operator defined as $T^{\pi}V = R^{\pi} + \gamma P^{\pi}V$. It was shown in all the three cases that $\theta_{n}\rightarrow\theta^{*} = -\bar{A}^{-1}\bar{b}$.
\subsubsection{Three Timescale Gradient TD Algorithms with Momentum}
\hfill \break
We consider the Gradient TD algorithms with an added heavy ball term to the first iterate. 
\begin{enumerate}
    \item GTD2 with momentum (\textbf{GTD2-M-3TS}): 
    \begin{gather}
    \label{gtd2_M_1}
        \theta_{t+1} = \theta_{t} + \alpha_{t}(\phi_{t} - \gamma\phi_{t}')\phi_{t}^{T}u_{t} + \eta_{t}(\theta_{t} - \theta_{t-1}),\\
    \label{gtd2_M_2}
         u_{t+1} = u_{t} + \beta_{t}(\delta_{t} - \phi_{t}^{T}u_{t})\phi_{t}.
    \end{gather}

    \item TDC with momentum (\textbf{TDC-M-3TS}):
    \begin{gather}
    \label{tdc_M_1}
        \theta_{t+1} = \theta_{t} + \alpha_{t}(\delta_{t}\phi_{t} - \gamma\phi_{t}'(\phi_{t}^{T}u_{t}))+ \eta_{t}(\theta_{t} - \theta_{t-1}),\\
    \label{tdc_M_2}
         u_{t+1} = u_{t} + \beta_{t}(\delta_{t} - \phi_{t}^{T}u_{t})\phi_{t}.
    \end{gather}
\end{enumerate}
The momentum parameter $\eta_{t}$ is chosen as in \cite{webpage} as 
\(\eta_t =\frac{\varrho_t-w\alpha_t}{\varrho_{t-1}},\) where $\{\varrho_t\}$ is a positive sequence and $w\in\mathbb{R}$ is a constant. 
We let 
\(\frac{\theta_{t+1} - \theta_{t}}{\varrho_{t}} = v_{t+1}, \xi_{t} = \frac{\alpha_{t}}{\varrho_{t}} \mbox{ and } \varepsilon_{t} = v_{t+1} - v_{t}.\) Then
the iterates for \textbf{GTD2-M-3TS} can then be decomposed into the three recursions as below:
\begin{gather}
    \label{GTD2-M-1_app}
    v_{t+1} = v_{t} + \xi_{t}\left((\phi_{t} - \gamma\phi_{t}')\phi_{t}^{T}u_{t} - w v_{t}\right),\\
    \label{GTD2-M-2_app}
    u_{t+1} = u_{t} + \beta_{t} (\delta_{t}\phi_{t} - \phi_t\phi_t^Tu_{t}),\\
    \label{GTD2-M-3_app}
    \theta_{t+1} = \theta_{t} + \varrho_{t}(v_{t} + \varepsilon_{t}).
\end{gather}
Similarly, the iterates for \textbf{TDC-M-3TS} can be decomposed as:
\begin{gather}
    \label{TDC-M-1_app}
    v_{t+1} = v_{t} + \xi_{t}\left(\delta_t\phi_t - \gamma\phi_{t}'\phi_{t}^{T}u_{t} - w v_{t}\right),\\
    \label{TDC-M-2_app}
    u_{t+1} = u_{t} + \beta_{t} (\delta_{t}\phi_{t} - \phi_t\phi_t^Tu_{t}),\\
    \label{TDC-M-3_app}
    \theta_{t+1} = \theta_{t} + \varrho_{t}(v_{t} + \varepsilon_{t}).
\end{gather}
Consider the following assumptions:
\begin{assumption}
\label{3A1}
All rewards $r(s,s')$ and features $\phi(s)$ are bounded, i.e., $r(s,s')\leq 1$ and $\|\phi(s)\|\leq 1$ $\forall s,s' \in \mathcal{S}$. Also, the matrix $\Phi$ has full rank, where $\Phi$ is an $n\times d$ matrix where the s$^{th}$ row is $\phi(s)^T$.
\end{assumption}
\begin{assumption}
\label{3A2}
The step-sizes satisfy \(\xi_{t}>0,\beta_{t}>0, \varrho_{t}>0 \mbox{ }\forall t\), 
\[ \sum_{t}\xi_t = \sum_{t}\beta_t = \sum_{t}\varrho_t = \infty,\mbox{ }
    \sum_{t}(\xi_t^2+\beta_t ^2+\varrho_t^2) < \infty,\]
    \({\displaystyle \frac{\beta_t}{\xi_t}\rightarrow 0, \frac{\varrho_t}{\beta_t}\rightarrow 0 \mbox{ as } t \rightarrow \infty}\),
and the momentum parameter satisfies:
 \({\displaystyle \eta_{t} = \frac{\varrho_{t}-w\alpha_{t}}{\varrho_{t-1}}}\).
\end{assumption}
\begin{assumption}
\label{3A3}
The samples ($\phi_{t},\phi_{t}'$) are drawn i.i.d from the stationary distribution of the Markov chain induced by the target policy $\pi$.
\end{assumption}

\begin{remark}
\label{rem1}
Assumptions~\ref{3A1}-\ref{3A2} are standard requirements. Assumption~\ref{3A3}, on the other hand, is a restrictive requirement though often used in the literature, see for instance, \cite{Maei_PhD, FastGradient, GTD, dalal2018finite, tale2TS}, where this assumption has been made. This can however be relaxed if our analysis of $N$-timescale SA algorithms is extended to the case when the noise can have a general iterate-dependent Markovian structure that can appear in each of the $N$ recursions. This setting is popularly referred to as the Markov noise setting. There is no prior work on stability of multi-scale algorithms with Markov noise even though \cite{Arunselvan1} does present stability conditions for one-timescale algorithms with Markov noise. We say more on this in Section~\ref{conclusion}.
\end{remark}


\begin{theorem}
    \label{theorem_GTD2-M_3TS_app}
     Suppose Assumptions \ref{3A1}, \ref{3A2} and \ref{3A3} hold and let $w>0$. Then, the GTD2-M-3TS iterates given by \eqref{GTD2-M-1_app}-\eqref{GTD2-M-3_app} satisfy \(\theta_{t} \rightarrow \theta^{*} = -\bar{A}^{-1}\bar{b}\) a.s. as \(t\rightarrow \infty\). 
\end{theorem}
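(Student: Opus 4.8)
The plan is to cast \eqref{GTD2-M-1_app}--\eqref{GTD2-M-3_app} as a three-timescale stochastic recursion with a vanishing additive perturbation on the slowest timescale, and then apply Theorem~\ref{thm_with_pert} in the case $N=3$. Put $x^{(1)}=v$, $x^{(2)}=u$, $x^{(3)}=\theta$ (each in $\mathbb{R}^{d}$) with step-sizes $\alpha^{(1)}=\xi$, $\alpha^{(2)}=\beta$, $\alpha^{(3)}=\varrho$; Assumption~\ref{3A2} supplies exactly $\sum_{t}\alpha^{(j)}_{t}=\infty$, $\sum_{t}\sum_{j}(\alpha^{(j)}_{t})^{2}<\infty$ and $\beta_{t}/\xi_{t}\to 0$, $\varrho_{t}/\beta_{t}\to 0$, i.e.\ \textbf{(A:3)}. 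Writing $C=\mathbb{E}[(\phi_{t}-\gamma\phi_{t}')\phi_{t}^{T}]=-\bar{A}^{T}$ and $M_{\Phi}=\mathbb{E}[\phi_{t}\phi_{t}^{T}]=\Phi^{T}D\Phi$, the drift functions are the affine maps $h^{(1)}(v,u,\theta)=Cu-wv$, $h^{(2)}(v,u,\theta)=\bar{b}+\bar{A}\theta-M_{\Phi}u$, $h^{(3)}(v,u,\theta)=v$, which are Lipschitz (\textbf{(A:1)}); under the i.i.d.\ sampling of Assumption~\ref{3A3} the residuals $M^{(1)}_{t+1}=\big((\phi_{t}-\gamma\phi_{t}')\phi_{t}^{T}-C\big)u_{t}$, $M^{(2)}_{t+1}=(\delta_{t}-\phi_{t}^{T}u_{t})\phi_{t}-(\bar{b}+\bar{A}\theta_{t}-M_{\Phi}u_{t})$, $M^{(3)}_{t+1}\equiv 0$ are martingale differences, and boundedness of features and rewards (Assumption~\ref{3A1}) gives the quadratic growth bound \textbf{(A:2)}. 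The momentum-induced term $\varepsilon_{t}=v_{t+1}-v_{t}=\xi_{t}\big((\phi_{t}-\gamma\phi_{t}')\phi_{t}^{T}u_{t}-wv_{t}\big)$ occurs only in the $\theta$-recursion, with $\|\varepsilon_{t}\|\le\xi_{t}(2\|u_{t}\|+w\|v_{t}\|)$; hence on the rescaled iterates appearing in the proof of Theorem~\ref{thm_without_stability} it is $o(1)$ relative to $r(n)$ (the rescaled $u,v$ being bounded and $\xi_{t}\to 0$) and does not affect stability, while once stability holds $\varepsilon_{t}=O(\xi_{t})\to 0$, so the hypothesis $\sum_{i}\varepsilon^{(i)}_{n}\to 0$ of Theorem~\ref{thm_with_pert} is met.

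It then remains to discharge the six ODE conditions, which is immediate from affinity. For \textbf{(C.3.1)}, $\dot{v}=Cu-wv$ has $-wI$ Hurwitz (this is where $w>0$ is used), with unique globally asymptotically stable equilibrium $\lambda^{(1)}(u,\theta)=\tfrac{1}{w}Cu$; for \textbf{(C.3.2)}, $\dot{u}=\bar{b}+\bar{A}\theta-M_{\Phi}u$ with $M_{\Phi}\succ 0$ has equilibrium $\lambda^{(2)}(\theta)=M_{\Phi}^{-1}(\bar{b}+\bar{A}\theta)$; for \textbf{(C.3.3)}, the reduced ODE $\dot{\theta}=\lambda^{(1)}(\lambda^{(2)}(\theta),\theta)=\tfrac{1}{w}CM_{\Phi}^{-1}(\bar{b}+\bar{A}\theta)$ has linear part $-\tfrac{1}{w}\bar{A}^{T}M_{\Phi}^{-1}\bar{A}$, which is symmetric negative definite since $\bar{A}$ is invertible (being negative definite) and $M_{\Phi}^{-1}\succ 0$, so the ODE is globally asymptotically stable at $\theta^{*}=-\bar{A}^{-1}\bar{b}$. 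The scaling-limit conditions \textbf{(B.3.1)}--\textbf{(B.3.3)} are the same computations, up only to the vanishing term $\bar{b}/c$: affinity gives $h^{(1)}_{\infty}=h^{(1)}$, $h^{(2)}_{\infty}(u,\theta)=\bar{A}\theta-M_{\Phi}u$ and $h^{(3)}_{\infty}(\theta)=-\tfrac{1}{w}\bar{A}^{T}M_{\Phi}^{-1}\bar{A}\theta$, with convergence uniform on compacts, equilibria $\lambda^{(1)}_{\infty}(u,\theta)=\tfrac{1}{w}Cu$ and $\lambda^{(2)}_{\infty}(\theta)=M_{\Phi}^{-1}\bar{A}\theta$ that are Lipschitz and vanish at $0$, and the origin globally asymptotically stable for the $h^{(3)}_{\infty}$-ODE.

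With all hypotheses in place, Theorem~\ref{thm_with_pert} (case $N=3$) yields $(v_{t},u_{t},\theta_{t})\to\big(\lambda^{(1:2)}(\theta^{*}),\lambda^{(2:2)}(\theta^{*}),\theta^{*}\big)$ a.s.; and since $\lambda^{(2:2)}(\theta^{*})=\lambda^{(2)}(\theta^{*})=M_{\Phi}^{-1}(\bar{b}+\bar{A}\theta^{*})=M_{\Phi}^{-1}(\bar{b}-\bar{b})=0$ and hence $\lambda^{(1:2)}(\theta^{*})=\lambda^{(1)}(0,\theta^{*})=0$, we conclude $v_{t}\to 0$, $u_{t}\to 0$, and in particular $\theta_{t}\to\theta^{*}=-\bar{A}^{-1}\bar{b}$ a.s. I expect the bulk of this to be routine once the affine structure is exploited; the one genuinely delicate point is that the perturbation $\varepsilon_{t}$ cannot be declared $o(1)$ before stability is known, which is why it is cleanest to let the rescaling inside the proof of Theorem~\ref{thm_without_stability} absorb it (there $\varepsilon_{t}/r(n)=\xi_{t}\,O(1)$ is automatically negligible) and only afterwards invoke the perturbed convergence statement, and a secondary item requiring care is the sign/transpose bookkeeping that makes $\bar{A}^{T}M_{\Phi}^{-1}\bar{A}\succ 0$ and thereby validates \textbf{(C.3.3)} and \textbf{(B.3.3)}.
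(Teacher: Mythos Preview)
Your proposal is correct and follows essentially the same approach as the paper: cast \eqref{GTD2-M-1_app}--\eqref{GTD2-M-3_app} as a three-timescale SA with a vanishing perturbation on the slowest recursion, verify \textbf{(A:1)}--\textbf{(A:3)}, \textbf{(B.3.1)}--\textbf{(B.3.3)} and \textbf{(C.3.1)}--\textbf{(C.3.3)} via the affine structure (your $C=-\bar{A}^{T}$ and $M_{\Phi}$ are the paper's $-\bar{A}^{T}$ and $\bar{C}$), and invoke Theorem~\ref{thm_with_pert}. Your treatment of the perturbation $\varepsilon_{t}$ is in fact more careful than the paper's own proof, which simply asserts $\|\varepsilon_{t}\|=\xi_{t}\|(\phi_{t}-\gamma\phi_{t}')\phi_{t}^{T}u_{t}-wv_{t}\|\to 0$ without addressing the circularity you identify; your observation that the rescaling in the stability proof absorbs $\varepsilon_{t}/r(n)$ automatically is the cleanest way to close that loop.
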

\emph{Proof.} 
We transform \eqref{GTD2-M-1_app}-\eqref{GTD2-M-3_app} into the standard form. Let $\mathcal{F}_t = \sigma(u_0, v_0, \theta_0, r_{j+1},\phi_j, \phi_j': j < t)$. Let, $A_t = \phi_t(\gamma\phi_t'-\phi_t)^T$ and $b_t = r_{t+1}\phi_t$. Then, \eqref{GTD2-M-1_app} can be rewritten as:
\begin{gather*}
        v_{t+1} = v_t + \xi_{t}\left(h^{(1)}(v_t,u_t,\theta_t) + M_{t+1}^{(1)}\right)\\
        \text{where, }\; h^{(1)}(v_t,u_t,\theta_t) = \mathbb{E}[(\phi_t - \gamma\phi_t')\phi_t^Tu_t - w v_t|\mathcal{F}_t] 
        = -\bar{A}^Tu_t-wv_t, \\
        M_{t+1}^{(1)} = -A_t^Tu_t - w v_t - h(v_t,u_t,\theta_t)  = (\bar{A}^T-A_t^T)u_t.
\end{gather*}
Next, \eqref{GTD2-M-2_app} can be re-written as:
\begin{gather*}
        u_{t+1} = u_t + \beta_t\left(h^{(2)}(v_t,u_t,\theta_t) + M_{t+1}^{(2)}\right),\\
        \text{where,}\; h^{(2)}(v_t,u_t,\theta_t) = \mathbb{E}[\delta_t\phi_t - \phi_t\phi_t^T u_t|\mathcal{F}_t]
        = \bar{A}\theta_t + \bar{b} - \bar{C}u_t,\\
        M_{t+1}^{(2)} = A_t\theta_t + b_t - C_t u_t - g(v_t,u_t,\theta_t),= (A_t-\bar{A})\theta_t + (b_t-\bar{b}) + (\bar{C} - C_t)u_t.
\end{gather*}
Here, $C_t = \phi_t\phi_t^T$ and $\bar{C} = \mathbb{E}[\phi_t\phi_t^T]$. Finally, \eqref{GTD2-M-3_app} can be re-written as:
\begin{align*}
        \theta_{t+1} = \theta_t + \varrho_{t}\left(h^{(3)}(v_t,u_t,\theta_t) + \varepsilon_t + M_{t+1}^{(3)}\right),
\end{align*}
where
\(
        h^{(3)}(v_t,u_t,\theta_t) = v_t \mbox{ and }
        M_{t+1}^{(3)} = 0.
\)
We show that conditions \textbf{(A:1)}-\textbf{(A:3)}, \textbf{(B.3.1)}-\textbf{(B.3.3)} and \textbf{(C.3.1)}-\textbf{(C.3.3)} hold. 
The functions $h^{(1)},h^{(2)},h^{(3)}$ are linear in $v,u,\theta$ and hence Lipschitz continuous, thereby satisfying (\textbf{A:1}). We choose the step-size sequences such that they satisfy (\textbf{A:2}). One popular choice is 
\begin{equation}
\xi_t = \frac{1}{(t+1)^{\xi}}, \beta_t = \frac{1}{(t+1)^{\beta}}, \varrho_t= \frac{1}{(t+1)^{\varrho}},
\end{equation}
with \({\displaystyle \frac{1}{2}<\xi<\beta<\varrho\leq1}\).
Now, $M_{t+1}^{(1)},M_{t+1}^{(2)}$ and $M_{t+1}^{(3)}$ $t\geq0$, are martingale difference sequences w.r.t $\mathcal{F}_t$ by construction. Further, \[\mathbb{E}[\|M_{t+1}^{(1)}\|^2|\mathcal{F}_t] \leq \|(\bar{A}^T - A_t^T)\|^2 \|u_t\|^2,\] \[\mathbb{E}[\|M_{t+1}^{(2)}\|^2|\mathcal{F}_t] \leq 3(\|(A_t-\bar{A})\|^2 \|\theta_t\|^2 + \|(b_t-\bar{b})\|^2 + \|(\bar{C}-C_t)\|^2\|u_t\|^2).\] 
Note that (\textbf{A:3}) is satisfied with \(K_1 = \|(\bar{A}^T-A_{t}^T)\|^2,\) 
\(K_2 = 3\max(\|A_t - \bar{A}\|^2,\|b_t-\bar{b}\|^2,\|(\bar{C}-C_t)\|^2),\)
and any $K_3>0$.
The fact that $K_1,K_2<\infty$ follows from Assumption \ref{3A1}. 
For a fixed $u,\theta \in \mathbb{R}^d$, consider the ODE
\[\dot{v}(t) = -\bar{A}^Tu - w v(t).\]
For $w>0$, $\lambda^{(1)}(u,\theta) = -\frac{\bar{A}^Tu}{w}$ is the unique globally asymptotically stable equilibrium (g.a.s.e), is linear and therefore Lipschitz continuous. This satisfies \textbf{(C.3.1)}. 
Next, for a fixed $\theta \in \mathbb{R}^d$,
\[\dot{u}(t) = \bar{A}\theta + \bar{b} -\bar{C}u(t),\]
has $\lambda^{(2)}(\theta) = \bar{C}^{-1}(\bar{A}\theta + \bar{b})$ as its unique g.a.s.e because $-\bar{C}^{-1}$ is negative definite. Also $\lambda^{(2)}(\theta)$ is linear in $\theta$ and therefore Lipschitz. This satisfies \textbf{(C.3.2)}. 
Finally, to satisfy \textbf{(C.3.3)}, consider,
\begin{align*}
        \dot{\theta}(t)
       = \frac{-\bar{A}^T\bar{C}^{-1}\bar{A}\theta(t)-\bar{A}^T\bar{C}^{-1}\bar{b}}{w}.
\end{align*}
Since $\bar{A}$ is negative definite and $\bar{C}$ is positive definite,  $-\bar{A}^T\bar{C}^{-1}\bar{A}$ is negative definite as well. Therefore, $\theta^* = -\bar{A}^{-1}\bar{b}$ is the unique g.a.s.e for the above ODE.

Next, we show that the sufficient conditions for stability of the three iterates are satisfied. The function, $h_c^{(1)}(v,u,\theta) = \frac{-c\bar{A}^Tu-wcv}{c} = -\bar{A}^Tu-wv \rightarrow h_{\infty}^{(1)}(v,u,\theta) = -\bar{A}^Tu-wv$ uniformly on compacts as $c\rightarrow\infty$. The limiting ODE: 
\[\dot{v}(t) = -\bar{A}^Tu-wv(t)\]
has $\lambda_{\infty}^{(1)}(u,\theta) = -\frac{\bar{A}^Tu}{w}$ as its unique g.a.s.e. $\lambda^{(1)}_{\infty}$ is Lipschitz with $\lambda_{\infty}^{(1)}(0,0) = 0$, thus satisfying assumption \textbf{(B.3.1)}
The function, $h_c^{(2)}(u,\theta) = \frac{c\bar{A}\theta + \bar{b} - c\bar{C}u}{c} = \bar{A}\theta-\bar{C}u+\frac{\bar{b}}{c} \rightarrow h_{\infty}^{(2)}(u,\theta) = \bar{A}\theta - \bar{C}u$ uniformly on compacts as $c\rightarrow\infty$. The limiting ODE \[\dot{u}(t) = \bar{A}\theta - \bar{C}u(t)\]
has $\lambda^{(2)}_{\infty}(\theta) = \bar{C}^{-1}\bar{A}\theta$ as its unique g.a.s.e. since $-\bar{C}$ is negative definite. $\lambda^{(2)}_{\infty}$ is Lipschitz with $\lambda^{(2)}_{\infty}(0) = 0$. Thus assumption \textbf{(B.3.2)} is satisfied.

Finally, $h_c^{(3)}(\theta) = \frac{-c\bar{A}^T\bar{C}^{-1}\bar{A}\theta}{cw} \rightarrow h^{(3)}_{\infty}(\theta) = \frac{-\bar{A}^T\bar{C}^{-1}\bar{A}\theta}{w}$ uniformly on compacts as $c\rightarrow \infty$ and the ODE:
\[\dot{\theta}(t) = -\frac{\bar{A}^T\bar{C}^{-1}\bar{A}\theta(t)}{w}\]
has the origin in $\mathbb{R}^d$ as its unique g.a.s.e. This ensures the final condition \textbf{(B.3.3)}. Further, observe that $\|\varepsilon_t^{(3)}\|=\xi_t\|\left((\phi_t - \gamma\phi_t')\phi_t^Tu_t - w v_t\right)\|\rightarrow0$ since $\xi_t\rightarrow0 \mbox{ as } t\rightarrow\infty$. By Theorem \ref{thm_with_pert}, 
\[
\begin{pmatrix}
    v_t\\
    u_t\\
    \theta_t
\end{pmatrix}
\rightarrow
\begin{pmatrix}
    \lambda(\Gamma(-\bar{A}^{-1}\bar{b}),-\bar{A}^{-1}\bar{b})\\
    \Gamma(-\bar{A}^{-1}\bar{b})\\
    -\bar{A}^{-1}\bar{b}.
\end{pmatrix}
=
\begin{pmatrix}
    0\\
    0\\
    -\bar{A}^{-1}\bar{b}.
\end{pmatrix}
\]
Specifically, $\theta_t \rightarrow -\bar{A}^{-1}\bar{b}$.
\hfill $\square$

\begin{theorem}
    \label{theorem_TDC-M_3TS_app}
      Suppose Assumptions \ref{3A1}, \ref{3A2} and \ref{3A3} hold and let $w>0$. Then, the TDC-M-3TS iterates given by \eqref{TDC-M-1_app}--\eqref{TDC-M-3_app} satisfy \(\theta_{t} \rightarrow \theta^{*} = -\bar{A}^{-1}\bar{b}\) a.s. as \(t\rightarrow \infty\). 
\end{theorem}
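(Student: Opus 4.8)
The plan is to follow the template of Theorem~\ref{theorem_GTD2-M_3TS_app} essentially verbatim: recast \eqref{TDC-M-1_app}--\eqref{TDC-M-3_app} as a $3$-timescale recursion with $x^{(1)}=v$, $x^{(2)}=u$, $x^{(3)}=\theta$, verify Assumptions \textbf{(A:1)}--\textbf{(A:3)}, \textbf{(B.3.1)}--\textbf{(B.3.3)}, \textbf{(C.3.1)}--\textbf{(C.3.3)}, and then apply Theorem~\ref{thm_with_pert}. First I would take conditional expectations w.r.t.\ $\mathcal{F}_t=\sigma(u_0,v_0,\theta_0,r_{j+1},\phi_j,\phi_j':j<t)$: writing $A_t=\phi_t(\gamma\phi_t'-\phi_t)^T$, $b_t=r_{t+1}\phi_t$, using $\delta_t\phi_t=A_t\theta_t+b_t$ and the identity $\gamma\mathbb{E}[\phi_t'\phi_t^T]=\bar A^T+\bar C$ (which follows from $\bar A=\gamma\mathbb{E}[\phi_t\phi_t'^T]-\bar C$), the drift of \eqref{TDC-M-1_app} is $h^{(1)}(v,u,\theta)=\bar A\theta+\bar b-(\bar A^T+\bar C)u-wv$ with martingale difference $M^{(1)}_{t+1}=(A_t-\bar A)\theta_t+(b_t-\bar b)-(\gamma\phi_t'\phi_t^T-\bar A^T-\bar C)u_t$; the $u$-iterate is identical to GTD2-M so $h^{(2)}(v,u,\theta)=\bar A\theta+\bar b-\bar C u$ with $M^{(2)}_{t+1}$ as there; and $h^{(3)}(v,u,\theta)=v$, $M^{(3)}_{t+1}=0$. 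The one structural difference from GTD2-M is that $h^{(1)}$ now genuinely depends on $\theta$ through the TDC correction $\delta_t\phi_t$, and this $\theta$-term must be carried into the chained limit for $\theta$.

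Next I would dispatch the routine hypotheses. \textbf{(A:1)} is immediate since $h^{(1)},h^{(2)},h^{(3)}$ are affine; \textbf{(A:2)} follows because Assumption~\ref{3A1} makes $A_t,b_t,\phi_t\phi_t^T,\gamma\phi_t'\phi_t^T$ uniformly bounded, whence each $\mathbb{E}[\|M^{(i)}_{t+1}\|^2\mid\mathcal{F}_t]$ is dominated by a constant times $1+\|v_t\|^2+\|u_t\|^2+\|\theta_t\|^2$; \textbf{(A:3)} is exactly Assumption~\ref{3A2}, e.g.\ with the choice \eqref{sseq} and $1/2<\xi<\beta<\varrho\le1$. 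For the \textbf{C}-conditions: for fixed $(u,\theta)$ the ODE $\dot v=\bar A\theta+\bar b-(\bar A^T+\bar C)u-wv$ has, for $w>0$, the unique affine (hence Lipschitz) g.a.s.e.\ $\lambda^{(1)}(u,\theta)=\frac{1}{w}(\bar A\theta+\bar b-(\bar A^T+\bar C)u)$; for fixed $\theta$ the ODE $\dot u=\bar A\theta+\bar b-\bar C u$ has g.a.s.e.\ $\lambda^{(2)}(\theta)=\bar C^{-1}(\bar A\theta+\bar b)$, since $\bar C=\mathbb{E}[\phi_t\phi_t^T]$ is positive definite ($\Phi$ full rank); and composing gives $\lambda^{(1)}(\lambda^{(2)}(\theta),\theta)=-\frac{1}{w}\bar A^T\bar C^{-1}(\bar A\theta+\bar b)$, so the slowest ODE $\dot\theta=-\frac{1}{w}\bar A^T\bar C^{-1}(\bar A\theta+\bar b)$ has unique g.a.s.e.\ $\theta^*=-\bar A^{-1}\bar b$, because $\bar A^T\bar C^{-1}\bar A$ is positive definite ($\bar A$ invertible, $\bar C^{-1}$ positive definite), so $-\frac{1}{w}\bar A^T\bar C^{-1}\bar A$ has all eigenvalues with negative real part. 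For the \textbf{B}-conditions the scaled maps are $h^{(1)}_c(v,u,\theta)=\bar A\theta+\bar b/c-(\bar A^T+\bar C)u-wv\to h^{(1)}_\infty$, $h^{(2)}_c(u,\theta)=\bar A\theta+\bar b/c-\bar C u\to h^{(2)}_\infty$, and $h^{(3)}_c(\theta)=\lambda^{(1)}_\infty(\lambda^{(2)}_\infty(\theta),\theta)=-\frac{1}{w}\bar A^T\bar C^{-1}\bar A\theta=h^{(3)}_\infty(\theta)$ (all convergences uniform on compacts, the $\bar b/c$ term vanishing), whose limiting ODEs have g.a.s.e.\ $\lambda^{(1)}_\infty(u,\theta)=\frac{1}{w}(\bar A\theta-(\bar A^T+\bar C)u)$ with $\lambda^{(1)}_\infty(0,0)=0$, $\lambda^{(2)}_\infty(\theta)=\bar C^{-1}\bar A\theta$ with $\lambda^{(2)}_\infty(0)=0$, and the origin, giving \textbf{(B.3.1)}--\textbf{(B.3.3)}. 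Finally, with the perturbation $\varepsilon_t=v_{t+1}-v_t=\xi_t(\delta_t\phi_t-\gamma\phi_t'\phi_t^Tu_t-wv_t)$ appearing in the $\theta$-iterate, I would note $\|\varepsilon_t\|\to0$ a.s.\ (the bracket is a.s.\ bounded once Theorem~\ref{thm_without_stability} supplies stability, and $\xi_t\to0$), so Theorem~\ref{thm_with_pert} applies and yields $(v_t,u_t,\theta_t)\to(\lambda^{(1)}(\lambda^{(2)}(\theta^*),\theta^*),\lambda^{(2)}(\theta^*),\theta^*)=(0,0,-\bar A^{-1}\bar b)$; in particular $\theta_t\to-\bar A^{-1}\bar b$ a.s.

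The only step that is not pure transcription of the GTD2-M proof — and the one I would be most careful about — is the collapse of the composed equilibrium $\lambda^{(1)}(\lambda^{(2)}(\theta),\theta)$ (and its scaled analogue) down to $-\frac{1}{w}\bar A^T\bar C^{-1}\bar A\theta$ (plus the constant $-\frac{1}{w}\bar A^T\bar C^{-1}\bar b$ in the unscaled case): this is exactly where the extra $\bar A\theta$ contributed by the TDC correction to $h^{(1)}$ has to be tracked and seen to cancel against the $\bar A\theta$ produced by $(\bar A^T+\bar C)\bar C^{-1}=\bar A^T\bar C^{-1}+I$, leaving the same limiting $\theta$-dynamics, and hence the same limit point $-\bar A^{-1}\bar b$, as for GTD2-M and for vanilla TDC. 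Everything else is a routine transcription of the preceding proof.
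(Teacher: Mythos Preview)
Your proposal is correct and follows essentially the same route as the paper's proof: cast \eqref{TDC-M-1_app}--\eqref{TDC-M-3_app} as a $3$-timescale scheme, verify \textbf{(A:1)}--\textbf{(A:3)}, \textbf{(B.3.1)}--\textbf{(B.3.3)}, \textbf{(C.3.1)}--\textbf{(C.3.3)}, and invoke Theorem~\ref{thm_with_pert}. The only cosmetic difference is that you invoke the identity $\gamma\mathbb{E}[\phi_t'\phi_t^T]=\bar A^T+\bar C$ at the outset to write $h^{(1)}$ in terms of $\bar A^T+\bar C$, whereas the paper carries the quantity $\gamma\mathbb{E}[\phi_t'\phi_t^T]$ through and only performs the equivalent simplification $(I-\gamma\mathbb{E}[\phi_t'\phi_t^T]\bar C^{-1})\bar A=-\bar A^T\bar C^{-1}\bar A$ when analysing the slowest ODE; both computations are the same cancellation you flag in your final paragraph.
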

\emph{Proof.} 
As before, we transform the iterates given by \eqref{TDC-M-1_app}, \eqref{TDC-M-2_app} and \eqref{TDC-M-3_app} into the standard SA form. Let $\mathcal{F}_t = \sigma(u_0, v_0, \theta_0, r_{j+1},\phi_j, \phi_j': j < t)$. Let $A_t = \phi_t(\gamma\phi_t'-\phi_t)^T$ and $b_t = r_{t+1}\phi_t$. Then, \eqref{TDC-M-1_app} can be re-written as:
\begin{equation*}
    v_{t+1} = v_t + \xi_{t}\left(h^{(1)}(v_t,u_t,\theta_t) + M_{t+1}^{(1)}\right),
\end{equation*}
where $h^{(1)}(v_t,u_t,\theta_t) = \mathbb{E}[\delta_t\phi_t - \gamma\phi_{t}'\phi_{t}^{T}u_{t} - w v_{t}|\mathcal{F}_t] = \bar{A}\theta_t + \bar{b} -\gamma\mathbb{E}[\phi_t'\phi_t^T]u_t-wv_t$ and $M_{t+1}^{(1)} = \delta_t\phi_t - \gamma\phi_{t}'\phi_{t}^{T}u_{t} - w v_{t} - h^{(1)}(v_t,u_t,\theta_t) = (A_t-\bar{A})\theta_t + (b_t-\bar{b}) + \gamma(\mathbb{E}[\phi_t'\phi_t^T] - \phi_t'\phi_t^T)u_t.$
Next, \eqref{TDC-M-2_app} can be re-written as:
\begin{equation*}
        u_{t+1} = u_t + \beta_t\left(h^{(2)}(v_t,u_t,\theta_t) + M_{t+1}^{(2)}\right)
\end{equation*}
where, $h^{(2)}(v_t,u_t,\theta_t) = \mathbb{E}[\delta_t\phi_t - \phi_t\phi_t^T u_t|\mathcal{F}_t] = \bar{A}\theta_t + \bar{b} - \bar{C}u_t$ and $M_{t+1}^{(2)} = A_t\theta_t + b_t - C_t u_t - h^{(2)}(v_t,u_t,\theta_t) = (A_t-\bar{A})\theta_t + (b_t-\bar{b}) + (\bar{C} - C_t)u_t.$
Here, $C_t = \phi_t\phi_t^T$, $\bar{C} = \mathbb{E}[\phi_t\phi_t^T]$. Finally, \eqref{TDC-M-3_app} can be re-written as:
\[
        \theta_{t+1} = \theta_t + \varrho_{t}\left(h^{(3)}(v_t,u_t,\theta_t) + \varepsilon_t + M_{t+1}^{(3)}\right),
\]
where
\(
        h^{(3)}(v_t,u_t,\theta_t) = v_t \mbox{ and }
        M_{t+1}^{(3)} = 0.
\)
As before we show that the conditions \textbf{(A:1)}-\textbf{(A:3)}, \textbf{(B.3.1)}-\textbf{(B.3.3)} and \textbf{(C.3.1)}-\textbf{(C.3.3)} hold. The functions $h^{(1)},h^{(2)},h^{(3)}$ are linear in $v,u,\theta$ and hence Lipschitz continuous, therefore satisfying (\textbf{A:1}). We choose the step-size sequences such that they satisfy (\textbf{A:2}). One popular choice is \eqref{sseq}.
Observe now that $M_{t+1}^{(1)},M_{t+1}^{(2)}$ and $M_{t+1}^{(3)}$ $t\geq0$, are martingale difference sequences w.r.t $\mathcal{F}_t$ by construction. Next, \[\mathbb{E}[\|M_{t+1}^{(1)}\|^2|\mathcal{F}_t] \leq 3(\|(A_t-\bar{A})\|^2\|\theta_t\|^2 + \|(b_t-\bar{b})\|^2  +\gamma(\|\mathbb{E}[\phi_t^{'}\phi_t^T]-\phi_t'\phi_t^T\|^2)\|u_t\|^2),\]
\[ \mathbb{E}[\|M_{t+1}^{(2)}\|^2|\mathcal{F}_t] \leq 3(\|(A_t-\bar{A})\|^2 \|\theta_t\|^2 + \|(b_t-\bar{b})\|^2 + \|(\bar{C}-C_t)\|^2\|u_t\|^2).\]
The first part of (\textbf{A:3}) is satisfied with $K_1 =3 \max(\|(A_t-\bar{A})\|^2, \|(b_t-\bar{b})\|^2, \gamma(\|\mathbb{E}[\phi_t^{'}\phi_t^T]-\phi_t'\phi_t^T\|^2))$, $K_2 = 3\max(\|A_t - \bar{A}\|^2,\|b_t-\bar{b}\|^2,\|(\bar{C}-C_t)\|^2)$ and any $K_3>0$.
The fact that $K_1,K_2<\infty$ follows from Assumption \ref{3A1}. For a fixed $u,\theta \in \mathbb{R}^d$, consider the ODE
\[\dot{v}(t) = \bar{A}\theta + \bar{b} -\gamma\mathbb{E}[\phi_t'\phi_t^T]u-wv(t).\]
For $w>0$, $\lambda^{(1)}(u,\theta) = \frac{\bar{A}\theta+\bar{b}-\gamma\mathbb{E}[\phi_t'\phi_t^T]u}{w}$ is the unique g.a.s.e, is linear and therefore Lipschitz continuous. This satisfies \textbf{(C.3.1)}. Next, for a fixed $\theta \in \mathbb{R}^d$,
\[\dot{u}(t) = \bar{A}\theta + \bar{b} -\bar{C}u(t),\]
has $\lambda^{(2)}(\theta) = \bar{C}^{-1}(\bar{A}\theta + \bar{b})$ as its unique g.a.s.e because $-\bar{C}^{-1}$ is negative definite. Also $\lambda^{(2)}(\theta)$ is linear in $\theta$ and therefore Lipschitz. This satisfies \textbf{(C.3.2)}. Finally, to satisfy \textbf{(C.3.3)}, consider,
\begin{align*}
        \dot{\theta}(t)
       & = \frac{(I-\gamma\mathbb{E}[\phi_t'\phi_t^T]\bar{C}^{-1})(\bar{A}\theta(t) + \bar{b})}{w}.
\end{align*}
Now, $(I-\gamma\mathbb{E}[\phi_t'\phi_t^T]\bar{C}^{-1})\bar{A}$ = $(\mathbb{E}[\phi_t\phi_t^T]-\gamma\mathbb{E}[\phi_t'\phi_t^T])\bar{C}^{-1}\bar{A} = \mathbb{E}[(\phi_t - \gamma\phi_t')\phi_t^T]\bar{C}^{-1}\bar{A} = -\bar{A}^T\bar{C}^{-1}\bar{A}$. Since, $\bar{A}$ is negative definite and $\bar{C}$ is positive definite, therefore $ -\bar{A}^T\bar{C}^{-1}\bar{A}$ is negative definite and hence the above ODE has $\theta^* = -\bar{A}^{-1}\bar{b}$ as its unique g.a.s.e. 

Next, we show that the sufficient conditions for stability of the three iterates are satisfied. The function, $h_c^{(1)}(v,u,\theta) = \frac{c\bar{A}\theta + \bar{b}-c\gamma\mathbb{E}[\phi_t'\phi_t^{T}]u-cwv}{c} = \bar{A}\theta+\bar{b}{c} -\gamma\mathbb{E}[\phi_t'\phi_t^{T}]u-wv \rightarrow h_{\infty}^{(1)}(v,u,\theta) = \bar{A}\theta-\gamma\mathbb{E}[\phi_t'\phi_t^{T}]u-wv$ uniformly on compacts as $c\rightarrow\infty$. The limiting ODE: 
\[\dot{v}(t) = \bar{A}\theta_t-\gamma\mathbb{E}[\phi_t'\phi_t^{T}]u_t-wv(t)\]
has $\lambda^{(1)}_{\infty}(u,\theta) = \frac{\bar{A}\theta-\gamma\mathbb{E}[\phi_t'\phi_t^{T}]u}{w}$ as its unique g.a.s.e. $\lambda^{(1)}_{\infty}$ is Lipschitz with $\lambda^{(1)}_{\infty}(0,0) = 0$, thus satisfying assumption \textbf{(B.3.1)}.

The function $h_c^{(2)}(u,\theta) = \frac{c\bar{A}\theta + \bar{b} - c\bar{C}u}{c} = \bar{A}\theta-\bar{C}u+\frac{\bar{b}}{c} \rightarrow h_{\infty}^{(2)}(u,\theta) = \bar{A}\theta - \bar{C}u$ uniformly on compacts as $c\rightarrow\infty$. The limiting ODE 
\[\dot{u}(t) = \bar{A}\theta - \bar{C}u(t)\]
has $\lambda^{(2)}_{\infty}(\theta) = \bar{C}^{-1}\bar{A}\theta$ as its unique g.a.s.e. since $-\bar{C}$ is negative definite. $\lambda^{(2)}_{\infty}$ is Lipschitz with $\lambda^{(2)}_{\infty}(0) = 0$. Thus assumption \textbf{(B.3.2)} is satisfied.

\noindent Finally, 
\[h_c^{(3)}(\theta) = \frac{c\bar{A}\theta-c\gamma\mathbb{E}[\phi_t'\phi_t^T]\bar{C}^{-1}\bar{A}\theta}{cw} \rightarrow h^{(3)}_{\infty} = \frac{(I-\gamma\mathbb{E}[\phi_t'\phi_t^T]\bar{C}^{-1})\bar{A}\theta}{w}\]
uniformly on compacts as $c\rightarrow \infty$. Consider the ODE:
\[\dot{\theta}(t) = \frac{(I-\gamma\mathbb{E}[\phi_t'\phi_t^T]\bar{C}^{-1})\bar{A}\theta(t)}{w}.\]

Now $(I-\gamma\mathbb{E}[\phi_t'\phi_t^T]\bar{C}^{-1})\bar{A}$ = $(\mathbb{E}[\phi_t\phi_t^T]-\gamma\mathbb{E}[\phi_t'\phi_t^T])\bar{C}^{-1}\bar{A} = \mathbb{E}[(\phi_t - \gamma\phi_t')\phi_t^T]\bar{C}^{-1}\bar{A} = -\bar{A}^T\bar{C}^{-1}\bar{A}$. Since $\bar{A}$ is negative definite and $\bar{C}$ is positive definite, $-\bar{A}^T\bar{C}^{-1}\bar{A}$ is negative definite and hence the above ODE has the origin as its unique g.a.s.e. This ensures the final condition \textbf{(B.3.3)}. Next, observe that $\|\varepsilon_t^{(3)}\|=\xi_t\|\left((\phi_t - \gamma\phi_t')\phi_t^Tu_t - w v_t\right)\|\rightarrow0$ since $\xi_t\rightarrow0 \mbox{ as } t\rightarrow\infty$. By Theorem \ref{thm_with_pert}, 
\[
\begin{pmatrix}
    v_t\\
    u_t\\
    \theta_t
\end{pmatrix}
\rightarrow
\begin{pmatrix}
    \lambda(\Gamma(-\bar{A}^{-1}\bar{b}),-\bar{A}^{-1}\bar{b})\\
    \Gamma(-\bar{A}^{-1}\bar{b})\\
    -\bar{A}^{-1}\bar{b}
\end{pmatrix}
=
\begin{pmatrix}
    0\\
    0\\
    -\bar{A}^{-1}\bar{b}
\end{pmatrix}
\]
Specifically, $\theta_t \rightarrow -\bar{A}^{-1}\bar{b}$ almost surely.
\hfill $\square$

\subsubsection{Experiments}
\hfill \break
The Gradient TD algorithms along with their momentum variants are evaluated on two standard MDPs: 5-State Random Walk \cite{FastGradient} and Boyan Chain \cite{Boyan}. Along with 3-timescale version of the algorithms we consider a 4-timescale version as defined by the following iterates:
\begin{enumerate}
    \item \textbf{GTD2-M-4TS}: 
    \begin{equation}
    \label{gtd2_M_1_4TS}
        \theta_{t+1} = \theta_{t} + \alpha_{t}(\phi_{t} - \gamma\phi_{t}')\phi_{t}^{T}u_{t} + \eta^{(1)}_{t}(\theta_{t} - \theta_{t-1}),
    \end{equation}
    \begin{equation}
    \label{gtd2_M_2_4TS}
         u_{t+1} = u_{t} + \beta_{t}(\delta_{t} - \phi_{t}^{T}u_{t})\phi_{t} + \eta^{(2)}_{t}(u_{t} - u_{t-1}).
    \end{equation}

    \item \textbf{TDC-M-4TS}:
    \begin{equation}
    \label{tdc_M_1_4TS}
        \theta_{t+1} = \theta_{t} + \alpha_{t}(\delta_{t}\phi_{t} - \gamma\phi_{t}'(\phi_{t}^{T}u_{t}))+ \eta^{(1)}_{t}(\theta_{t} - \theta_{t-1}),
    \end{equation}
    \begin{equation}
    \label{tdc_M_2_4TS}
         u_{t+1} = u_{t} + \beta_{t}(\delta_{t} - \phi_{t}^{T}u_{t})\phi_{t} + \eta^{(2)}_{t}(u_{t} - u_{t-1}).
    \end{equation}
\end{enumerate}
We consider decreasing step-sizes of the form:
\(\varrho_t^{(1)} = \frac{1}{(t+1)^{{\varrho}^{(1)}}}\),
\(\varrho_t^{(2)} = \frac{1}{(t+1)^{{\varrho}^{(2)}}}\),
\(\beta_t = \frac{1}{(t+1)^{\beta}}\), 
\(\alpha_t = \frac{1}{(t+1)^{\alpha}}\), respectively,
in all the examples. In the 3-Timescale case the conditions on step size turn out to be $\alpha<\varrho+\beta,$ and $\beta<\varrho$, while in the 4-Timescale case the conditions are $\alpha<\beta+\varrho^{(1)}-\varrho^{(2)},$ $\beta<2\varrho^{(2)}$ and $\varrho^{(2)}<\varrho^{(1)}$. 
Our analysis of convergence requires square-summability of step-size sequences. However, such a choice is seen to slow down the convergence of the algorithm.
Recently, \cite{dalal2018finite} provided convergence rate results for Gradient TD schemes with non-square-summable step-sizes (See Remark 2 of \cite{dalal2018finite}).
Motivated by this, we look at non-square summable step-sizes for our experiments, and observe that the iterates empirically converge in such cases as well.

For a detailed description of the MDPs, see Figure \ref{f-5state} and \ref{f-boyan}. See Figure \ref{f1} for the results on the 5-State Random Walk and Figure \ref{f2} for results on the Boyan Chain. The exact values of $\varrho^{(1)},\varrho^{(2)},\alpha$ and $\beta$ are provided in Table \ref{stepsize-table} and $w=0.1$ for both 3-TS and 4-TS settings. Our results indicate that adding momentum terms to the algorithms clearly improves performance over their vanilla counterparts. Nonetheless, it is less clear from these experiments as to whether adding momentum term to both the iterates is better as opposed to doing the same for only one of the iterates. In particular, for GTD2, the 4-TS scheme appears to perform better while for TDC, the 3-TS version is seen to be better. Further analysis of the finite time behaviour of these algorithms needs to be carried out in the future to better assess the performance of these algorithms.

\begin{figure*}[!bt]
    \centering
    {
    {\includegraphics[width=\linewidth]{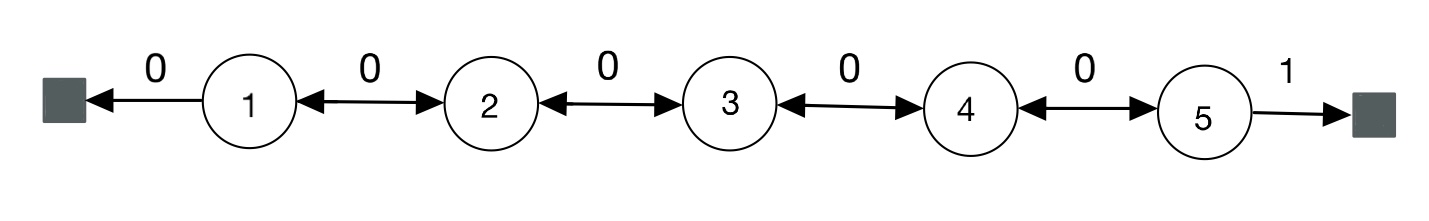}}
    
    \caption{5-State Random Walk from \cite{FastGradient}.}
    \label{f-5state}
    {\includegraphics[width=\linewidth]{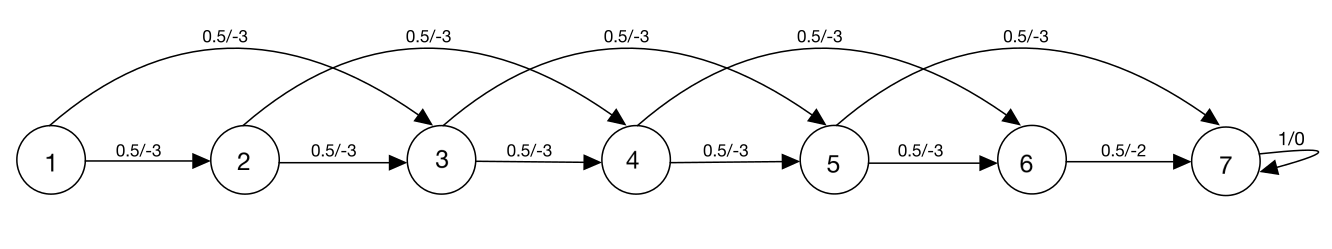}}
    \caption{7-state Boyan Chain from \cite{Boyan}}
    \label{f-boyan}
    }
\end{figure*}
\begin{table}
\caption{Choice of step-size parameters} 
\centering 
\begin{tabular}{c c c c c} 
\hline\hline 
5-State RW & $\alpha$ & $\beta$ & $\varrho^{(1)}$ &$\varrho^{(2)}$\\[0.5ex] 
\hline 
Vanilla & 0.4 & 0.4 & - & - \\ 
Three-TS & 0.4 & 0.4 & 0.5 & - \\
Four-TS & 0.4 & 0.4 & 0.5 & 0.25\\[1ex] 
\hline\hline 
Boyan Chain & $\alpha$ & $\beta$ & $\varrho^{(1)}$ &$\varrho^{(2)}$\\[0.5ex] 
\hline 
Vanilla & 0.4 & 0.4 & - & - \\ 
Three-TS & 0.35 & 0.35 & 0.45 & - \\
Four-TS & 0.35 & 0.35 & 0.45 & 0.35\\[1ex]
\hline\hline 
\end{tabular}
\label{stepsize-table} 
\end{table}
\begin{figure*}[!bt]
    \centering
    {
    {\includegraphics[width=0.8\linewidth]{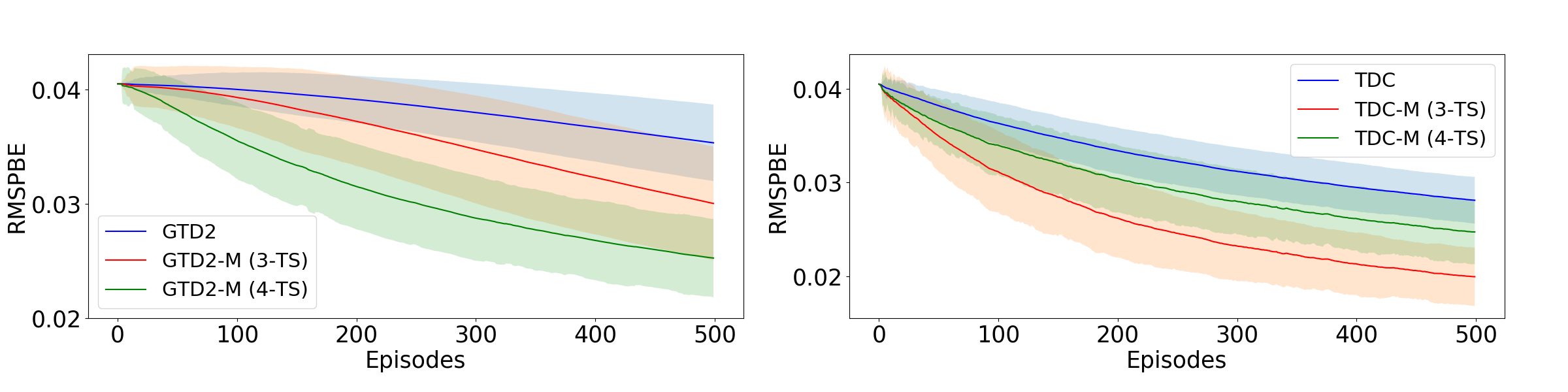}}
    
    \caption{RMSPBE across episodes (averaged over 100 independent runs) for 5-State Random Walk. The features used are the \textit{Dependent} features used in \cite{FastGradient}.}
    \label{f1}
    {\includegraphics[width=0.8\linewidth]{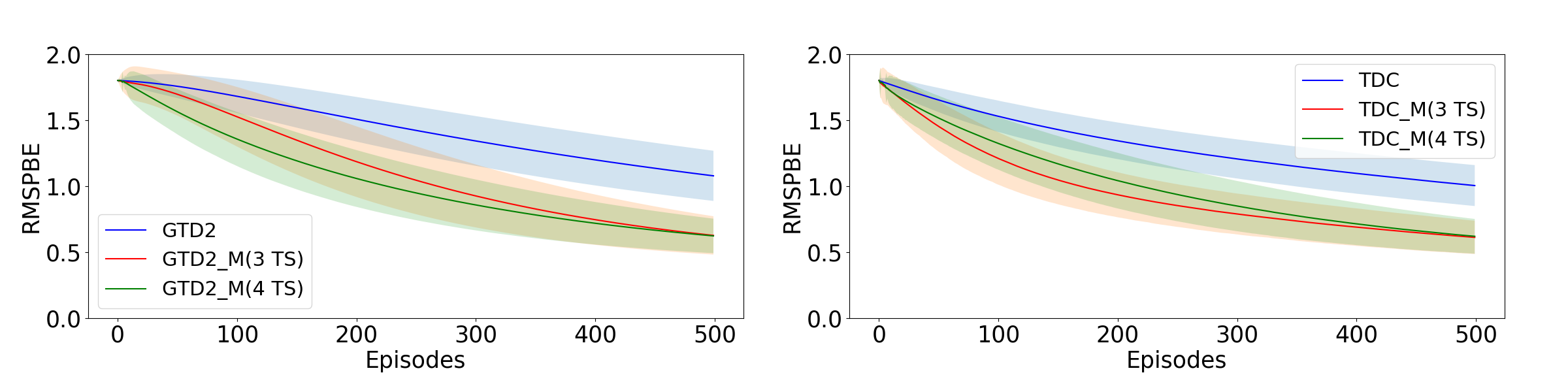}}
    \caption{RMSPBE (averaged over 100 independent runs) across episodes for the Boyan Chain problem. The features used are the standard spiked features of size 4 used in Boyan chain (see \cite{Dann}).}
    \label{f2}
    }
\end{figure*}

\subsection{Off-Policy Actor Critic}

We consider policy optimization with linear function approximation for policy evaluation. Here we adopt similar notations as before, where $\phi(s) \equiv \phi_{s}$ is a feature vector associated with the state $s$ and $\theta$ is the associated parameter vector. For ease of notation, we consider cost minimization, where we want to find the policy $\pi^*$ that minimizes the long term discounted cost function $J^\pi= \mathbb{E}_{s_0 \sim \xi}[\sum_{i=0}^{\infty}\gamma^i r_i\mid s_0]$, where $\xi$ denotes the initial state distribution. The TD-error is defined by \(\delta_{t} = r_{t+1} + \gamma \theta_{t}^{T}\phi_{t+1} - \theta_{t}^{T}\phi_{t}\). The update rules are given by:
\begin{gather}
\label{alg:off-policy-ac}
        u_{t+1} = u_{t} + \alpha_{t}\rho_t(\delta_{t} - \phi_{t}^{T}u_{t})\phi_{t}, \\
        \theta_{t+1} = \theta_{t} + \beta_{t}\rho_t\left(\delta_{t}\phi_{t} - \alpha_{t}\gamma \phi_{t}'(\phi_{t}^{T}u_{t})\right),\\
    z_{t+1} = z_{t} - \eta_{t}(\rho_t \delta_{t} \nabla_{z_{t}} \log \pi_{z_{t}}(a_{t}|s_{t})+\epsilon z_t),
\end{gather}
where $\rho_t = \frac{\pi_{z_t}(a_t|s_t)}{b(a_t|s_t)}$, where $b(a_t|s_t)$ is the behavior policy that is used to collect samples. Note that we use an additional term $\epsilon z_t$ term in our actor update, where $\epsilon >0 $ is a small constant. This is equivalent to adding a regularizing term $\epsilon \|z_t\|^2$ to the objective function. A different version of off-policy actor critic has been studied in \cite{degris-off-policy}, where an almost sure convergence result was also given. Their algorithm uses a one timescale update for the critic and uses a projection operator for the actor update to ensure stability of the iterates, whereas, we do not make use of projection operators for stability and instead show Assumptions \textbf{(B.3.1)}-\textbf{(B.3.3)} hold. The feature matrix $\Phi$ is an $n\times d$ matrix where the $s^{th}$ row is $\phi(s)^T$. In the following, we consider the i.i.d setting, where the tuple $(\phi_{t},\phi_{t}'$) (with $\phi_{t}' \equiv \phi_{t+1}$) is drawn independently from the stationary distribution $\{d_b(s)\}$. 

Additionally, we use the Gibbs (softmax) parameterization for the policy, defined as follows
\begin{align}
    \pi_z(a|s) = \frac{e^{\zeta(s,a)^Tz}}{\sum_{a'} e^{\zeta(s,a')^Tz}},
\end{align}
where $\zeta(s,a)$ is the state-action feature for the actor.
It can be seen that $\frac{\nabla_z \pi_z(a|s)}{\pi_z(a|s)} = \nabla_z \log \pi_z (a|s) = \zeta(s,a)- \sum_{a'}\zeta(s,a')\pi(a'|s)$. Let \(\bar{A}_z \stackrel{\triangle}{=} \mathbb{E}[\phi_{t}(\gamma\phi_{t}'-\phi_{t})^{T}]\) and \(\bar{b}_z \stackrel{\triangle}{=} \mathbb{E}[r_{t+1}\phi_t]\), where the expectations are w.r.t. the stationary distribution of the Markov chain induced by the policy $\pi_z$. The matrix $\bar{A}_z$ is known to be negative definite for all $z$ (see \cite{Maei_PhD,TsitsiklisVanRoy}). As in previous sections, we make the following standard assumptions:

\begin{assumption}
\label{AC_A1}
All rewards $r(s,s')$, critic features $\phi(s)$ and actor features $\zeta(s,a)$ are bounded, i.e., $r(s,s')\leq 1$, $\|\phi(s)\|\leq 1$ and $\|\zeta(s,a)\|\leq 1$ $\forall s,s' \in \mathcal{S}$ and $a \in \mathcal{A}$. Also, the matrix $\Phi$ has full rank, where $\Phi$ is an $n\times d$ matrix where the s$^{th}$ row is $\phi(s)^T$.
\end{assumption}
\begin{assumption}
\label{AC_A4}
    The behavior policy has a minimum positive value $b_{\min}\in(0,1]: b(a|s)\geq b_{\min} \forall (s,a) \in \mathcal{S} \times \mathcal{A}$.
\end{assumption}
\begin{assumption}
\label{AC_A3}
The samples ($\phi_{t},\phi_{t}'$) are drawn i.i.d from the stationary distribution of the Markov chain induced by the behaviour policy $d_b$.
\end{assumption}
Let $\mathcal{F}_t = \sigma(z_0, u_0, \theta_0, r_{j+1},\phi_j, \phi_j': j < t)$. Let, $A_t = \phi_t(\gamma\phi_t'-\phi_t)^T$, $b_t = r_{t+1}\phi_t$ and $C_t = \phi_t \phi_t^T$.
We can rewrite the updates in the following multi-timescale SA form:
\begin{gather}
    u_{t+1} = u_{t} + \alpha_{t} (h^{(1)}(u_{t},\theta_{t},z_{t})+M_{t+1}^{(1)})\\
    \theta_{t+1} = \theta_{t} + \beta_{t} (h^{(2)}(u_{t},\theta_{t},z_{t})+M_{t+1}^{(2)})\\
    z_{t+1} = z_{t} + \eta_{t}(h^{(3)}(u_{t},\theta_{t},z_{t})+M_{t+1}^{(3)}).\
\end{gather}
Here, $h^{(1)}(u,\theta,z) = \mathbb{E}_{d_b}[\rho_t \delta_t\phi_t - \phi_t\phi_t^T u_t|\mathcal{F}_t]
= \bar{A}\theta_t + \bar{b} - \bar{C}u_t,$ where $\bar{A}_t \equiv \bar{A}_{z_t} = \mathbb{E}_{d_{\pi_t}}[A_t]$, $\bar{b}_t \equiv \bar{b}_{z_t} = \mathbb{E}_{d_{\pi_t}}[b_t]$, $\bar{C}_t \equiv \mathbb{E}_{d_{b}}[C_t]$ and $\pi_t = \pi_{z_{t}}$. Further $
M_{t+1}^{(1)} = (A_t-\bar{A})\theta_t + (b_t-\bar{b}) + (\bar{C} - C_t)u_t$.

Similarly,
$h^{(2)}(u_{t},\theta_{t},z_{t}) = \mathbb{E}_{d_{b}}[\rho_t(\delta_t\phi_t - \gamma\phi_{t}'\phi_{t}^{T}u_{t})|\mathcal{F}_t] = \bar{A}_t\theta_t + \bar{b}_t -\gamma\mathbb{E}_{d_{\pi_t}}[\phi_t'\phi_t^T]u_t$,  $M_{t+1}^{(2)} = (A_t-\bar{A}_t)\theta_t + (b_t-\bar{b}_t) + \gamma(\mathbb{E}_{d_{\pi_t}}[\phi_t'\phi_t^T] - \phi_t'\phi_t^T)u_t$, $h^{(3)}(u,\theta,z) = -\E_{d_b}\left[\rho_{t}\delta_{t}\nabla_{z_{t}} \log \pi_{z_{t}}(a_{t}|s_{t})| \mathcal{F}_{t}\right]-\epsilon z_t$, $M_{t+1}^{(3)} = \rho_{t}\delta_{t}\nabla_{z_{t}} \log \pi_{z_{t}}(a_{t}|s_{t}) - \E\left[\rho_{t}\delta_{t}\nabla_{z_{t}} \log \pi_{z_{t}}(a_{t}|s_{t})| \mathcal{F}_{t}\right]$.

Note that for fixed $z_t$, $h^{(1)}(u_{t},\theta_{t},z_{t})$ and $h^{(2)}(u_{t},\theta_{t},z_{t})$ are linear in $\theta_{t},u_{t}$ and therefore Lipschitz. Separately, note that $A_{z_t} = \sum_{s\in \mathcal{S}} d^{\pi_{z_t}}(s) \sum_{a \in \mathcal{A}} \pi_{z_t}(a|s) A_t$ and $\nabla_z A_{z_t} = \sum_{s\in \mathcal{S}} \sum_{a \in \mathcal{A}} (\nabla_z d^{\pi_{z_t}}(s)  \pi_{z_t}(a|s) A_t+d^{\pi_{z_t}}(s) \nabla_z \pi_{z_t}(a|s) A_t)$. It is known that $\nabla_z d^{\pi_{z_t}}(s)$ exists and is bounded \cite{Schweitzer_1968} while $\nabla_z \pi_z(a|s)$ and $A_t$ are bounded due to Assumption \ref{AC_A1}. Thus, derivative of $A_z$ with respect to $z$ is bounded and it follows that $h^{(1)}(\theta_1,u_t,z_t) = A_{z_t} u_t$ is Lipschitz in $z_t$ for fixed $u_t$. Similar arguments show that $h^{(2)}(\theta_1,u_t,z_t)$ is Lipschitz in $z_t$. To see that $h^{(3)}(\theta_1,u_t,z_t)$ is Lipschitz in $z_t$, note that $\nabla_z (\nabla_z \log \pi_z(a|s)) = - \sum_{a'}\zeta(s,a')\nabla_z \pi(a'|s)$, which exists and is bounded. 

We choose the step-size sequences such that they satisfy (\textbf{A:2}). One popular choice is 
\begin{equation}
\label{sseq}
\alpha_t = \frac{1}{(t+1)^{\alpha}}, \;\; \beta_t = \frac{1}{(t+1)^{\beta}},\;\; \eta_t= \frac{1}{(t+1)^{\eta}},
\end{equation}
with \({\textstyle \frac{1}{2}<\alpha<\beta<\eta\leq 1}\).
Now, $M_{t+1}^{(1)},M_{t+1}^{(2)}$ and $M_{t+1}^{(3)}$ $t\geq0$, are martingale difference sequences w.r.t $\mathcal{F}_t$ by construction. Further, \[\mathbb{E}[\|M_{t+1}^{(1)}\|^2|\mathcal{F}_t] \leq \|\bar{A}_z - A_t\|^2 \|\theta_t\|^2+\|b_t-\bar{b}_z\|^2 +\|\bar{C}_z - C_t\|^2 \|u_t\|^2,\] \[\mathbb{E}[\|M_{t+1}^{(2)}\|^2|\mathcal{F}_t] \leq 3(\|(A_t-\bar{A})\|^2 \|\theta_t\|^2 + \|(b_t-\bar{b})\|^2 + \|\gamma \mathbb{E}_{d_{\pi_t}}[\phi_t'\phi_t^T]-\gamma\rho_t \phi_t'\phi_t^T\|^2\|u_t\|^2),\] 
\[\mathbb{E}[\|M_{t+1}^{(3)}\|^2|\mathcal{F}_t] \leq 3(\|(D_t-\bar{D})\|^2 \|\theta_t\|^2+\|\rho_t r_t\nabla \log \pi_{z_t} (a_t|s_t)-\mathbb{E}_{d_b}[\rho_t r_t\nabla \log \pi_{z_t} (a_t|s_t)]\|^2,\]
where $D_t = \rho_t (\gamma \phi_t'-\phi_t) \nabla \log \pi_{z_t} (a_t|s_t)$ and $\bar{D} = \mathbb{E}_{d_b}[D_t]$. Note that (\textbf{A:3}) is satisfied with $K_1  = 3\max(\|A_t - \bar{A}\|^2,\|b_t-\bar{b}\|^2,\|(\bar{C}-C_t)\|^2)$, $K_2  = 3\max(\|A_t - \bar{A}\|^2,\|b_t-\bar{b}\|^2,\|\gamma \mathbb{E}_{d_{\pi_t}}[\phi_t'\phi_t^T]-\gamma\rho_t \phi_t'\phi_t^T\|^2)$
and $K_3 = 2 \max (\|D_t - \bar{D}\|^2,\|\rho_t r_t\nabla \log \pi_{z_t} (a_t|s_t)-\mathbb{E}_{d_b}[\rho_t r_t\nabla \log \pi_{z_t} (a_t|s_t)]\|^2)$.
The fact that $K_1,K_2,K_3<\infty$ follows from Assumption \ref{AC_A1} and \ref{AC_A4}.

For a fixed $z$, define $a^* = \arg \max_a \zeta(s,a)^Tz$  for each state $s$, and let $\pi_z^{\infty}(a|s) = 1_{a=a^*}$. It is known that $\pi_{cz}(a|s) \rightarrow \pi_z^{\infty}(a|s)$ uniformly on compacts \citet{chandru-SB}. Now 
\begin{align}
    h_c^{(1)}(u,\theta,z)  &=  \frac{h^{(1)}(cu,c\theta,cz)}{c} = \frac{\bar{A}_{cz}(c\theta) + \bar{b}_{cz} -\bar{C}(cu)}{c} = \bar{A}_{cz}\theta + \frac{\bar{b}_{cz}}{c} -\bar{C}u.
\end{align}
Let $A^{\infty}_z = \sum_s d^{\pi_{z}^{\infty}}(s)\sum_a \pi_{z}^{\infty}(a|s) A_t $. Note that $\lim_{c \rightarrow \infty} \bar{b}_{cz}/c = 0$ and $\bar{A}_{cz} \rightarrow \bar{A}^{\infty}_z $ uniformly on compacts. Since $-\bar{C}$ is negative-definite, the ODE $\dot{u}(t) = h^{(1)}_{\infty}(u(t),\theta,z)$ has $\lambda^{(1)}_{\infty}(\theta,z)=\bar{C}^{-1}\bar{A}_{z}^{\infty}\theta$ as its unique globally asymptotic stable equilibrium point. Further, $\lambda^{(1)}_{\infty}(\theta,z)$ is Lipschitz in $u$ and $z$ and $\lambda^{(1)}_{\infty}(0,0)=0$. Similarly,
\begin{align}
    h_c^{(2)}(\lambda^{(1)}_{\infty}(\theta,z),\theta,z)  &=  \frac{h^{(2)}(c\lambda^{(1)}_{\infty}(\theta,z),c\theta,cz)}{c} = \frac{\bar{A}_{cz}(c\theta) + \bar{b}_{cz} -\gamma\mathbb{E}_{d_{\pi_t}}[\phi_t'\phi_t^T] (\bar{C}^{-1}\bar{A}_{z}^{\infty}(c\theta))}{c} \\
    &= (\bar{A}_{cz}  -\gamma\mathbb{E}_{d_{\pi_t}}[\phi_t'\phi_t^T] \bar{C}^{-1}\bar{A}_{z}^{\infty}) \theta+ \frac{\bar{b}_{cz}}{c}.
\end{align}

Again, since $\lim_{c \rightarrow \infty} \bar{b}_{cz}/c = 0$, $\bar{A}_{cz} \rightarrow \bar{A}^{\infty}_z $ and $\mathbb{E}_{cz}[\phi_t'\phi_t^T] \rightarrow \mathbb{E}_{\pi_{z}^{\infty}}[\phi_t'\phi_t^T]$ uniformly on compacts, it follows that $h_c^{(2)}(\lambda^{(1)}(\theta,z),\theta,z) \rightarrow h_{\infty}^{(2)}(\lambda^{(1)}(\theta,z),\theta,z) \equiv (I -\gamma\mathbb{E}_{\pi_{z}^{\infty}}[\phi_t'\phi_t^T] \bar{C}^{-1})A^{\infty}_z \theta$ uniformly on compacts.  Now $(I-\gamma\mathbb{E}[\phi_t'\phi_t^T]\bar{C}_z^{-1})\bar{A}^{\infty}_z$ = $(\mathbb{E}[\phi_t\phi_t^T]-\gamma\mathbb{E}[\phi_t'\phi_t^T])\bar{C}_z^{-1}\bar{A}^{\infty}_z = \mathbb{E}[(\phi_t - \gamma\phi_t')\phi_t^T]\bar{C}_z^{-1}\bar{A}^{\infty}_z= -(\bar{A}^{\infty}_z)^T\bar{C}_z^{-1}\bar{A}^{\infty}_z$. Since $\bar{A}_z^{\infty}$ is negative definite and $\bar{C}_z^{\infty}$ is positive definite, $-(\bar{A}_z^{\infty})^T(\bar{C}_z^{\infty})^{-1}\bar{A}_z^{\infty}$ is negative definite and hence the above ODE has the origin as its unique g.a.s.e. Thus, $\lambda^{(2)}_{\infty}(z) = 0$ is Lipschitz in $z$ and $\lambda^{(2)}_{\infty}(0)=0$. Since $\lambda^{(2)}_{\infty}(z)=0$, we have
\begin{align}
\begin{split}
    h_c^{(3)}(\lambda^{(1)}_{\infty}(\lambda^{(2)}_{\infty}(z),z),\lambda^{(2)}_{\infty}(z),z)  &=  \frac{h^{(3)}(c\lambda^{(1)}_{\infty}(\lambda^{(2)}_{\infty}(z),z),c\lambda^{(2)}_{\infty}(z),cz)}{c} \\
    &= \frac{-\mathbb{E}_{cz}[r(s,s')\nabla \log \pi_{z}(a|s)]-c\epsilon z}{c}.
\end{split}
\end{align}
It can be seen that as $c \rightarrow \infty$, $h_c^{(3)}(\lambda^{(1)}_{\infty}(\lambda^{(2)}_{\infty}(z),z),\lambda^{(2)}_{\infty}(z),z) \rightarrow -\epsilon z$ uniformly on compacts. For all $\epsilon>0$, the ODE $\dot{z}(t) = h_{\infty}^{(3)}(\lambda^{(1)}_{\infty}(\lambda^{(2)}_{\infty}(z(t)),z(t)),\lambda^{(2)}_{\infty}(z(t)),z(t))$ has $0$ as its unique globally asymptotic stable equilibrium.

For a fixed $\theta,z \in \mathbb{R}^d$, consider the ODE
\(\dot{u}(t) = h^{(1)}(u(t),\theta,z)= \bar{A}_z \theta +\bar{b}_z - \bar{C}_z u(t).\)
It can be seen that $\lambda^{(1)}(\theta,z) = \bar{C}_z^{-1}(\bar{A}_z\theta+\bar{b}_z)$ is the unique g.a.s.e. This satisfies \textbf{(C.3.1)}. 

Next, for a fixed $z \in \mathbb{R}^d$,
\begin{align*}
    \dot{\theta}(t) &= h^{(2)}(\lambda^{(1)}(\theta(t),z),\theta(t),z) = \bar{A}_z\theta(t) + \bar{b}_z -\gamma \mathbb{E}_{d_{\pi_z}}[\phi'\phi^T]\bar{C}_z^{-1}(\bar{A}_z\theta(t)+\bar{b}_z)\\
    &= (\mathbb{E}_{d_{\pi_z}}[\phi \phi^T]-\gamma\mathbb{E}_{d_{\pi_z}}[\phi'\phi^T])\bar{C}_z^{-1}\bar{A}_z\theta(t) + (\mathbb{E}_{d_{\pi_z}}[\phi \phi^T]-\gamma \mathbb{E}_{d_{\pi_z}}[\phi'\phi^T])\bar{C}_z^{-1}\bar{b}_z\\
    &= -\bar{A}_z^T\bar{C}_z^{-1}\bar{A}_z\theta(t)  -\bar{A}_z^T\bar{C}_z^{-1}\bar{b}_z.
\end{align*}
Since $\bar{A}_z$ is negative definite and $\bar{C}_z$ is positive definite,  $-\bar{A}_z^T\bar{C}_z^{-1}\bar{A}_z$ is negative definite as well. Therefore, $\lambda^{(2)}(z) = -\bar{A}_z^{-1}\bar{b}_z$ is the unique g.a.s.e for the above ODE. This satisfies \textbf{(C.3.2)}.

Finally, consider the ODE
\begin{align*}
    \dot{z}(t) = h^{(3)}(\lambda^{(1)}(\lambda^{(2)}(z(t)),z(t)),\lambda^{(2)}(z(t)),z(t))=\Hat{g}(z(t))-\epsilon z(t),
\end{align*}
where $\hat{g}(z) = \E_{d_b}[\rho (r(s,s')+\gamma V(s')-V(s))\nabla \log \pi_z(a|s)]$. It is known that the ODE $\dot{z}(t)=\hat{g}(z(t))$ converges to the set $Z=\{z\mid\nabla_z J(z)=0\}$ \cite{degris-off-policy}. For small enough $\epsilon>0$, the ODE $\dot{z}(t) =\Hat{g}(z(t))-\epsilon z(t)$ will converge to an $\epsilon$-neighbourhood of $Z$. This can be seen by considering the Hartman-Grobman linearization of the ODE as in \cite{degris-off-policy}. Since the iterates $z_n$ track this ODE (see Lemma \ref{intermediate_lemma}$(ii)$), it follows that $z_n \rightarrow Z$ almost surely.

{\color{black} \subsection{Constrained Actor-Critic}
As in the preceding section we consider policy optimisation with linear function approximation for value estimation, retaining the notation $\phi(s)=\phi_{s}$ for the feature vector of state $s$. Our objective remains cost minimisation, now in the undiscounted average-reward setting: we seek a policy $\pi^{*}$ that minimises the long-run average cost $J^{\pi}:=\mathbb{E}_{s_{0}\sim\xi}\bigl[\sum_{i=0}^{\infty} r_{i}\mid s_{0}\bigr]$, where $\xi$ denotes the initial-state distribution. In contrast to the previous setup we impose $N$ additional constraints with per-step costs $g_{k}(s,a)$, $k=1,\dots,N$, requiring $g_{k}(s,a)\le \alpha_{k}$ for every $k$. Bhatnagar et al.~\cite{bhatnagar2012online} addressed this problem with an online three-timescale algorithm and proved almost-sure convergence; here we adapt their scheme by eliminating the projection in the actor update and adding a regulariser, as in the previous section, and we show that this modification still guarantees almost-sure convergence. Following \cite{bhatnagar2012online} we denote the actor feature vectors by $\psi_{s,a}$ and begin by stating the standard assumptions.

\begin{assumption}
\label{ACC_A1}
All costs $r(s,a)$, critic features $\phi_s$ and actor features $\psi_{s,a}$ are bounded, i.e., $r(s,a)\leq 1$, $\|\phi_s\|\leq 1$ and $\|\psi_{s,a}\|\leq 1$ $\forall s,s' \in \mathcal{S}$ and $a \in \mathcal{A}$. Also, the matrix $\Phi$ has full rank, where $\Phi$ is an $n\times d$ matrix where the s$^{th}$ row is $\phi(s)^T$.
\end{assumption}

\begin{assumption}
\label{ACC_A2}
The samples $\phi_{s_n}$ and $\phi_{s_{n}}'$ at time $n$ are drawn i.i.d from the stationary distribution of the Markov chain induced by the policy $\pi_{z(n)}$, $d^{\pi_{z(n)}}$.
\end{assumption}

With stationary distribution $d^{\pi_z}$ we define
\begin{align}
J(z)&=\sum_{s} d^{\pi_z}(s) \sum_{a\in A(s)} \pi_z(a| s)\, r(s,a),\\
G_k(z)&=\sum_{s} d^{\pi_z}(s) \sum_{a\in A(s)} \pi_z(a| s)\, g_k(s,a), \quad 1\le k\le N.
\end{align}
The constrained optimisation problem is
\[
\min_{z} J(z) \quad \text{s.t.}\; G_k(z) \le \alpha_k, \; 1\le k\le N.
\]
Introducing multipliers $\bar \gamma=(\gamma_1,\dots,\gamma_N)^{\top}\ge 0$, the Lagrangian becomes
\[
L(z,\bar \gamma)=J(z)+\sum_{k=1}^{N} \gamma_k\bigl(G_k(z)-\alpha_k\bigr).
\]
Its policy gradient can be written as
\[
\nabla_z L(z,\bar \gamma)=\mathbb{E}_{s\sim d^{\pi_z},\, a\sim \pi_z}\bigl[\nabla_z \log \pi_z(a| s) \, \mathcal{A}^{\pi_z,\bar \gamma}(s,a)\bigr],
\]
where the advantage is
\(
\mathcal{A}^{\pi_z,\bar \gamma}(s,a)=\mathcal{L}^{\pi_z,\bar \gamma}(s,a)-V^{\pi_z,\bar \gamma}(s).
\)
\begin{assumption}
\label{ACC:A3}
    Let stepsizes $a(n), b(n), c(n), d(n)$ satisfy the Robbins--Monro conditions
\begin{align}
 &\sum_{n} a(n)=\sum_{n} b(n)=\sum_{n} c(n)=\infty, 
 \quad\sum_{n}\bigl(a(n)^2+b(n)^2+c(n)^2\bigr)<\infty,\\
 &\lim_{n\to\infty}\frac{b(n)}{a(n)}=\lim_{n\to\infty}\frac{c(n)}{b(n)}=0\quad  \text{and}\quad d(n)=K a(n), \; K>0.
\end{align}
Denote $r_n:=r(s_n,a_n)$ and $g_{k,n}:=g_k(s_n,a_n)$.  The temporal‑difference error is
\[
\delta_n=r_n+\sum_{k} \gamma_k(n) \bigl(g_{k,n}-\alpha_k\bigr)-L_n+v(n)^{\top}(\phi_{s_{n+1}}-\phi_{s_n}).
\]
\end{assumption}

\noindent The recursions are then given by
\vspace{-2ex}
\begin{align}
L_{n+1}&=L_n+d(n)\Bigl(r_n+\sum_{k} \gamma_k(n)(g_{k,n}-\alpha_k)-L_n\Bigr),\\
v(n+1)&=v(n)+a(n)\,\delta_n\,\phi_{s_{n}},\\
z(n+1)&=z(n)-b(n)(\delta_n\,\psi_{s_n,a_n}+\epsilon z(n))\\
Y_k(n+1)&=Y_k(n)+a(n)\bigl(g_{k,n}-Y_k(n)\bigr),\\
\gamma_k(n+1)&=\Gamma\bigl(\gamma_k(n)+c(n)\bigl(Y_k(n)-\alpha_k\bigr)\bigr).
\end{align}
Note that the above algorithm is the same as in \cite{bhatnagar2012online} with an additional regularizer term for the policy update $z(n)$.
\noindent Fix $(z, \bar \gamma)$ and stack $\zeta(n)=(L_n, v(n)^{\top}, Y_1(n),\dots, Y_N(n))^{\top}\in\mathbb{R}^{1+d+N}$.  Then
\[
\zeta(n+1)=\zeta(n)+a(n)\bigl(A_n \zeta(n)+B_n\bar{\gamma}(n)+C_n\bigr),
\]
where 
\[
A_n=\begin{pmatrix}
-K & 0_{1\times d} & 0_{1\times N}\\[4pt]
-\phi_{s_n} & \phi_{s_n}(\phi_{s_{n+1}}-\phi_{s_n})^\top& 0_{d\times N}\\[4pt]
0_{N\times 1} & 0_{N\times d} & -I_N
\end{pmatrix},\qquad
C_n=\begin{pmatrix}
Kr_n\\[4pt]
 r_n \phi_{s_n}\\[4pt]
 g_n
\end{pmatrix},\qquad  B_n=\begin{pmatrix}
K(g_{k,n}-\alpha_k\bigr)^\top\\[4pt]
 \phi_{s_n}(g_{k,n}-\alpha_k\bigr)^\top\\[4pt]
 0_{N}^\top
\end{pmatrix}.
\]
Moreover,

\[
z(n+1)=z(n)+b(n)\bigl(-\epsilon_1 z(n)+X_n \zeta(n)+Y_n\bar{\gamma}(n)+Z_n\bigr),
\]
where 
\[
X_n=\begin{pmatrix}
-\psi_{s_n,a_n} ,& \psi_{s_n,a_n}(\phi_{s_{n+1}}-\phi_{s_n})^\top,& 0_{d\times N}
\end{pmatrix},\qquad
Z_n=r_n \psi_{s_n,a_n},\qquad  Y_n=\psi_{s_n,a_n}(g_{k,n}-\alpha_k\bigr)^\top.
\]

Thus, the algorithm can be written as the following three-timescale update:
\vspace{-2ex}
\begin{align}
\zeta(n+1)&=\zeta(n)+a(n)\bigl(A_n \zeta(n)+B_n\bar{\gamma}(n)+C_n\bigr)\\
z(n+1)&=z(n)+b(n)\bigl(-\epsilon z(n)+X_n \zeta(n)+Y_n\bar{\gamma}(n)+Z_n\bigr)\\
\gamma_k(n+1)&=\Gamma\bigl(\gamma_k(n)+c(n)\bigl(E_k\zeta(n)-\alpha_k\bigr)\bigr).
\end{align}

It can be rewritten as follows
\vspace{-2ex}
\begin{align}
\zeta(n+1)&=\zeta(n)+a(n)\bigl(A_n^z \zeta(n)+B_n^z\bar{\gamma}(n)+C_n^z+M^{(1)}(n+1)\bigr)\\
z(n+1)&=z(n)+b(n)\bigl(-\epsilon z(n)+X_n^z \zeta(n)+Y_n^z\bar{\gamma}(n)+Z_n^z+M^{(2)}(n+1)\bigr)\\
\gamma_k(n+1)&=\Gamma\bigl(\gamma_k(n)+c(n)\bigl(E_k\zeta(n)-\alpha_k\bigr)\bigr).
\end{align}

We consider the case where $\gamma_k(n) \geq 0$ for all $n$ and $k$. Then, the above updates can be written as  
\vspace{-2ex}
\begin{align}
\zeta(n+1)&=\zeta(n)+a(n)\bigl(h^{(1)}(\zeta,z,\bar\gamma)+M^{(1)}(n+1)\bigr)\\
z(n+1)&=z(n)+b(n)\bigl(h^{(2)}(\zeta,z,\bar\gamma)+M^{(2)}(n+1)\bigr)\\
\gamma_k(n+1)&=\gamma_k(n)+c(n)h^{(3)}(\zeta,z,\bar\gamma).
\end{align}

We begin by verifying \textbf{(A1)-(A3)}. From Assumption \ref{ACC:A3}, \textbf{(A2)} follows. Now consider \textbf{(A1)}. It is easy to see that $h^{(1)}(\zeta,z,\bar\gamma)$ is Lipschitz in $\zeta$ and $\bar\gamma$ since it is linear in $\zeta$ and $\bar\gamma$. Moreover, following the arguments in the previous section, it follows that $h^{(1)}(\zeta,z,\bar\gamma)$ is Lipschitz in $z$ as well. Similarly, $h^{(2)}(\zeta,z,\bar\gamma)$ and $h^{(3)}(\zeta,z,\bar\gamma)$ are Lipschitz in $\zeta$ and $\bar\gamma$ since they are also linear in $\zeta$ and $\bar\gamma$. Again, following the arguments in the previous section, it follows that they are Lipschitz in $z$ as well.

For \textbf{(A3)}, note that from Assumption \ref{ACC_A2}, $\{M^{(1)}(n+1)\}_{n\geq 0}$ and $\{M^{(2)}(n+1)\}_{n\geq 0}$ form Martingale Difference Sequences. Moreover, since the costs $r(s,a)$ and $g_k(s,a)$ along with the feature vectors $\phi_s$ and $\psi_{s,a}$ are bounded for all $(s,a)\in \mathcal{S}\times\mathcal{A}$ (see Assumption \ref{ACC_A1}), it follows that there exists $K_1,K_2>0$ such that $\|M^{(1)}(n+1)\|^2 \leq K_1 (\|\zeta\|^2+\|\bar\gamma\|^2)$ and $\|M^{(2)}(n+1)\|^2 \leq K_2 (\|z\|^2+\|\zeta\|^2+\|\bar\gamma\|^2)$.


Note that for a fixed $z$, define $a^* = \arg \max_a \psi_{s,a}^Tz$  for each state $s$, and let $\pi_z^{\infty}(a|s) = 1_{a=a^*}$. It is known that $\pi_{cz}(a|s) \rightarrow \pi_z^{\infty}(a|s)$ uniformly on compacts \citet{chandru-SB}. Now
\vspace{-2ex}
\begin{align}
    h_{c,\bar\gamma}^{(1)}(\zeta,z)&=\frac{h^{(1)}(c\zeta,cz,\bar\gamma)}{c}=\frac{A^{cz} (c\zeta)+B^{cz}\bar{\gamma}+C^{cz}}{c}=A^{cz}\zeta+\frac{C^{cz}+B^{cz}\bar{\gamma}}{c}.
\end{align}
Let $A^{\infty z} = \sum_s d^{\pi_{z}^{\infty}}(s)\sum_a \pi_{z}^{\infty}(a|s) A_n $. Note that $\lim_{c \rightarrow \infty} (C^{cz}+B^{cz}\bar{\gamma})/c = 0$, $A^{cz} \rightarrow A^{\infty z} $ and $B^{cz}\rightarrow B^{\infty z}$ uniformly on compacts. It follows that $ h_c^{(1)}(\zeta,z,\bar\gamma) \to h_{\infty}^{(1)}(\zeta,z,\bar\gamma):= A^{\infty z}\zeta$ uniformly on compacts as $c \to \infty$. Since $A^{\infty z}$ is negative-definite, the ODE $\dot{\zeta}(t) = h^{(1)}_{\infty}(\zeta(t),z,\bar\gamma)$ has $\lambda^{(1)}_{\infty,\bar\gamma}(z)=0$ as its unique globally asymptotic stable equilibrium point, which is also trivially a Lipschitz continuous function.
\vspace{-2ex}
\begin{align}
    h_{c,\bar\gamma}^{(2)}(z)&=\frac{h^{(2)}(c\lambda_{\infty,\bar\gamma}^{(1)}(z),cz,\bar\gamma)}{c}=\frac{-\epsilon cz + X^{cz} (c\lambda_{\infty,\bar\gamma}^{(1)}(z))+Y^{cz}\bar{\gamma}+Z^{cz}}{c}\\
    &=-\epsilon z +\frac{Y^{cz}\bar{\gamma}+Z^{cz}}{c}.
\end{align}
Note that $\lim_{c \rightarrow \infty} (Y^{cz}\bar{\gamma}+Z^{cz})/c = 0$. It follows that $ h_{c,\bar\gamma}^{(2)}(z) \to h_{\infty,\bar\gamma}^{(2)}(z):= -\epsilon z $ uniformly on compacts as $c \to \infty$. Since $\epsilon>0$, the ODE $\dot{z}(t) = h_{\infty,\bar\gamma}^{(2)}(z(t))$ has $0$ as its unique globally asymptotic stable equilibrium point. Consider the ODE
\vspace{-2ex}
\begin{align}
    \dot\zeta(t)= h^{(1)}(\zeta(t),z,\bar\gamma)=A^{z} \zeta(t)+B^{z}\bar{\gamma}+C^{z}.
\end{align}
It is negative definite with a unique global asymptotic point as $\lambda^{(1)}(z,\bar\gamma)=-(A^{z})^{-1} (B^{z}\bar{\gamma}+C^{z})$, which is the TD limit point of the relaxed problem with fixed policy $\pi_z$ and Lagrange multiplier $\bar\gamma$. Now consider
\begin{align*}
    \dot z(t)&= h^{(2)}(\lambda^{(1)}(z(t),\bar\gamma),z(t),\bar\gamma) = -\epsilon z(t) +( X^{z(t)}(A^{z(t)})^{-1}B^{z(t)}+Y^{z(t)})\bar{\gamma}+Z^{z(t)}\\
    &= -\epsilon z(t)+\nabla_z L(z(t),\bar\gamma).
\end{align*}
The above ODE converges to an $\epsilon$-neighbourhood of the set $\{z:\nabla_z L(z,\bar\gamma)=0\}$. Finally, the ODE
\begin{align}
    \dot\gamma_k(t)= h^{(3)}(\lambda^{(1)}(z(t),\gamma_k),\lambda^{(2)}(\gamma_k),\gamma_k)\overset{(a)}{=}\nabla_\gamma L(z,\bar\gamma).
\end{align}  
where $(a)$ follows from noting that 
$G_{k}(z)-\alpha_{k}=0$ is the same as 
$\nabla_{\gamma}L(z,\bar{\gamma})=0$.  
As with~[12], one can invoke the envelope theorem of mathematical economics
(see pp.\;964--966 of~[21]) to conclude that (32) corresponds to $\dot{\bar\gamma}(t)=\nabla_{\gamma}
         L(z^{\bar{\gamma}(t)},\bar{\gamma}(t))$,
see \cite{bhatnagar2012online} for a more elaborate discussion.  
The above dynamics have $\bar{\gamma}^{*}=\arg\max \bigl(\min_{z}L(z,\cdot)\bigr)$  
as an asymptotically stable equilibrium.  
Thus, $\bar{\gamma}(n)$ converges almost surely to a maximum of the function 
$\min_{z}L(z,\cdot)$.}

\section{Conclusions}
\label{conclusion}

In this work we have provided an easily verifiable set of sufficient conditions for stability and convergence of general $N$-timescale stochastic recursions with a martingale difference noise sequence, along with characterizing the limit of all the $N$ recursions. We then used these results to show that stochastic approximation algorithms with an added heavy ball term in the context of Gradient TD methods can be shown to converge a.s. to the same TD solution asymptotically. There are several directions for further research. A natural direction to pursue, as mentioned in Remark~\ref{rem1}, would be to come up with sufficient conditions for the stability and convergence of $N$-timescale stochastic recursions with Markov noise. 

In \cite{Arunselvan1}, stability conditions for single timescale SA recursions with Markov noise have been provided. The results in \cite{Arunselvan1} also apply to the case where the underlying Markov process does not possess a unique stationary distribution and can in fact depend on not just the underlying parameters but also an additional control sequence. In \cite{prasenjit-SB}, the convergence of two-timescale stochastic approximation with Markov noise is studied assuming that the iterates remain stable. Separately, in \cite{Arunselvan2}, the convergence of two-timescale recursions with set-valued maps is analysed (in the absence of Markov noise) but assuming stability. In \cite{Yaji-SB}, the convergence of more general two-timescale stochastic approximation with set-valued maps and general Markov noise is analysed again assuming iterate-stability. Analysis of algorithms with set-valued maps is important as it paves the way for analysis of RL algorithms under partial observations/information.
Thus, a natural extension of our results will be towards deriving sufficient conditions for both stability and convergence of stochastic approximation with (a) general Markov noise and (b) set-valued maps instead of the usual point-to-point maps as are normally considered and analysed.

 Finally, Section \ref{sec_app} analyzed the asymptotic behaviour of the momentum algorithms. Further analysis of their finite time behaviour is called for to quantify the benefits of using momentum schemes in stochastic approximation. Towards this extension of weak convergence rate analysis of \cite{konda,mokkadem} in the 2-TS setting and recent convergence rate results in expectation and high probability of 2-TS methods in \cite{2TSgugan,harsh,Kaledin,tale2TS} to the $N$-timescale case would also be interesting directions to explore further.

\bibliography{references}
\bibliographystyle{plainnat}
\end{document}